\newif\ifllncs
\newif\ifwatermark%
\newif\ifanonymous%
\newif\ifarxivid
\def\rightflag#1{}% labels
\definecolor{watermarkcolor}{cmyk}{.08,.052,0,0}% 8%
\def\thefootnote{\@fnsymbol\c@footnote}\makeatother
  \def\enoteformat{\rightskip0pt \leftskip0pt \parindent=1.8em
    %%% if en>1
    \def\xxx{1}\expandafter\ifx\csname @theenmark\endcsname\xxx\else
    \addvspace{4ex}%
    \fi
    %%% fi
    \leavevmode\llap{\makeenmark}}
\def\swap#1#2{\let\swaptmp#1\let#1#2\let#2\swaptmp}
\newcommand{\N}{\mathbb{N}}
\newcommand{\Z}{\mathbb{Z}}
\newcommand{\R}{\mathbb{R}}
\newcommand{\PR}{\mathbb{P}}
\newcommand{\C}{\ensuremath{\mathcal{C}}}
\newcommand{\floor}[1]{\left\lfloor#1\right\rfloor}
\newcommand{\ceil}[1]{\left\lceil#1\right\rceil}
\newcommand{\quo}[2]{\floor{\frac{#1}{#2}}}
\renewcommand{\phi}{\varphi}
\def\treatparameter{%
  \@ifnextchar({\treatparameterx}{%
    \@ifnextchar<{\treatparameterxi}{}}}
\def\treatparameterx(#1){\left\lgroup#1\right\rgroup}
\def\treatparameterx(#1){\left(#1\right)}
\def\treatparameterxi<#1>{\left\ldbrack#1\right\rdbrack}
\def\treatparameterxi<#1>{\left\langle#1\right\rangle}
\newcommand{\Qclean}[2][B,C]{\pi_{#1}^{#2}\treatparameter}
\newcommand{\Q}[2][B,C]{\kappa_{#1}^{#2}\treatparameter}
\newcommand{\QDM}[1][r]{\C_{#1}\treatparameter}
\newcommand{\RSASET}{\mathcal{A}\treatparameter}
\newcommand{\RSADMSET}[1][r]{\mathcal{A}^{~DM(#1)}}
\newcommand{\RSAFIXSET}[1][r,\sigma]{\mathcal{A}^{~FB(#1)}}
\newcommand{\RSAALGSET}[1][r]{\mathcal{A}^{~ALG_{1}(#1)}}
\newcommand{\RSAALGVARSET}[1][r,\sigma]{\mathcal{A}^{~ALG(#1)}}
\newcommand{\RSAALG}[1][r]{\#\RSAALGSET[#1]\treatparameter}
\newcommand{\RSAALGVAR}[1][r,\sigma]{\#\RSAALGVARSET[#1]\treatparameter}
\newcommand{\RSAALGVARSET}[1][r,\sigma]{\mathcal{A}^{~ALG(#1)}}
\newcommand{\RSAALGVAR}[1][r]{\#\RSAALGVARSET[#1]\treatparameter}
\newcommand{\RSAMAXSET}[1][r, c_1]{\mathcal{M}^{#1}}
\newcommand{\RSA}{\#\RSASET\treatparameter}
\newcommand{\RSADM}[1][r]{\#\RSADMSET[#1]\treatparameter}
\newcommand{\RSAFIX}[1][r,\sigma]{\#\RSAFIXSET[#1]\treatparameter}
\newcommand{\RSAMAX}[1][r, c_1]{\#\RSAMAXSET[#1]\treatparameter}
\newcommand{\RSAGENERALdraw}{%
  \psccurve[style=notionarea,showpoints=false]%
  (.1,.1)(1.1,-.9)(1.9,-.6)(1.5,.3)(.9,.9)(.4,1.2)(-.5,.9)(-.1,.5)
}
\newcommand{\RSADMdraw}{%
  \pspolygon[style=notionarea](0,2)(-1,1)(1,-1)(2,0)%
  \psline[style=notionsplit](0,0)(1,1)(1,-1)%
}
\newcommand{\RSAFIXdraw}[1]{%
  \pspolygon[style=notionarea](#1,-#1)(1,-#1)(1,1)(-#1,1)(-#1,#1)%
  \psline[style=notionsplit](#1,1)(#1,-#1)%
}
\newcommand{\RSAALGdraw}{%
  \pspolygon[style=notionarea](2,0)(0,2)(0,0)(2,-2)%
  \psline[style=notionsplit](2,0)(0,0)%
}
\newcommand{\RSAMAXdraw}{%
  \pspolygon[style=notionarea](2,0)(0,2)(-2,2)(2,-2)%
  \psline[style=notionsplit](2,0)(0,0)%
  \psline[style=notionsplit](0,0)(0,2)%
}
\newcommand{\RSAALGVARdraw}[1]{%
  \rput(-#1,#1){\RSAALGdraw}%
  %%% \pspolygon[style=notionarea](1,1)(-1,3)(-1,1)(1,-1)%
  %%% \psline[style=notionsplit](1,1)(-1,1)%
  %%% \uput{1em}[0]{0}(0,0){\mbox{$\RSASET^{~ALGVAR(r)}$}}%
}
\def\primesum#1#2#3#4{\sum_{#2 \in \P{#3,#4}} #1}%
\def\friint#1#2#3#4#5{\iint_{#4}^{#5} #1 \,~d\! #3 \,~d\! #2}%
\renewcommand{\P}[1]{{\mathbb{P}\mathbin{\cap}\left]#1\right]}}
\def\Frac#1#2{{\mathpalette{\let\cursize}{%
      \raise0.5ex\hbox{$\cursize#1$}/\lower0.5ex\hbox{$\cursize#2$}}}}%
\def\Land{\;\land\;}
\newcommand{\intrangeleft}{B}
\newcommand{\intrangeright}{C}
\def\casesincreaseheight#1{%
  \expandafter\global\expandafter\setbox\csname
  @arstrutbox\endcsname\hbox{%
    \vrule height#1\ht\strutbox depth#1\dp\strutbox width0pt}}
\title{Notions for RSA integers}%
\title{Analyzing standards for RSA integers}
\author{
  Anonymous.
}
\author{Daniel Loebenberger  \and Michael N{\"{u}}sken
}
\institute{
  B-IT, University of Bonn\\
  \texttt{\{daniel,nuesken\}@bit.uni-bonn.de}
}
\author{Daniel Loebenberger\\
  B-IT cosec, University of Bonn\\
  Dahlmannstr. 2\\
  53113 Bonn\\
  \email{daniel@bit.uni-bonn.de}
  \and
  Michael N{\"{u}}sken\\
  B-IT cosec, University of Bonn\\
  Dahlmannstr. 2\\
  53113 Bonn\\
  \email{nuesken@bit.uni-bonn.de}
}
\def\x{
\begin{document}\maketitle}\def\y{}
\else
\def\x{}\def\y{\begin{document}}
\fi
\x
\begin{abstract}
  The key-generation algorithm for the RSA cryptosystem is specified
  in several standards, such as PKCS\#1, IEEE~1363-2000, FIPS~186-3,
  ANSI~X9.44, or ISO/IEC~18033-2.  All of them substantially differ in
  their requirements. This indicates that for computing a ``secure''
  RSA modulus it does not matter how exactly one generates RSA
  integers.  In this work we show that this is indeed the case to a
  large extend: First, we give a theoretical framework that will
  enable us to easily compute the entropy of the output distribution
  of the considered standards and show that it is comparatively
  high. To do so, we compute for each standard the number of integers
  they define (up to an error of very small order) and discuss
  different methods of generating integers of a specific form. Second,
  we show that factoring such integers is hard, provided factoring a
  product of two primes of similar size is hard.

  \ifllncs
  Keywords: RSA integer, output entropy, reduction. ANSI~X9.44,
  FIPS~186-3, IEEE~1363-2000, ISO/IEC~18033-2, NESSIE,
  PKCS\#1\iffullversion, GnuPG, OpenSSL, Open\-Swan,
    SSH.\else .\fi
  \fi
  %%% RSA integers are of central importance in cryptography
  %%% and there exist many standards defining them. Examples are
  %%% the
  %%% standards
  %%% PKCS\#1, IEEE~P1363, ANSI~X9.44, ISO/IEC~18033-2, and
  %%% FIPS~186-3. The existence of this
  %%% variety of standards suggests
  %%% that for a ``secure'' RSA modulus it does not matter how
  %%% exactly
  %%% one generates RSA integers.
  %%% In this work we show that this is indeed the case in a
  %%% reasonable framework:
  %%% First, we compute for all standards the number of integers
  %%% they
  %%% define,
  %%% up to an error of very small order. Second we observe that
  %%% generating integers of a specific form
  %%% takes in all standards almost the same amount of time.
  %%% Third we
  %%% show that factoring such
  %%% integers is hard, provided factoring a product of two primes
  %%% of
  %%% similar size is hard.\\
  %%% Keywords: RSA integer,
  %%% counting,
  %%% ANSI~X9.44,
  %%% ISO/IEC~18033-2,
  %%% FIPS~186-3,
  %%% IEEE~P1363,
  %%% PKCS\#1.
\end{abstract}
\ifllncs\else
\begin{keywords}
  RSA integer, output entropy, reduction. ANSI~X9.44,
  FIPS~186-3, IEEE~1363-2000, ISO/IEC~18033-2, NESSIE,
  PKCS\#1\begin{fullversion}[.], GnuPG, OpenSSL, Open\-Swan,
    SSH.\end{fullversion}
\end{keywords}
\begin{subject}
  MSC2010: 94A60,   %%% Cryptography
  11A51, %%% Factorization; primality
  11N25, %%%   Distribution of integers with specified multiplicative constraints
  94A17 %%% Measures of information, entropy
\end{subject}
\fi
\y

%%% \ifshortversion\iffullversion\else\clearpage\fi\fi

\section{Introduction}
\noindent
An \emph{RSA integer} is an integer that is suitable as a modulus for
the RSA cryptosystem as proposed by \cite*{rivsha77,rivsha78}:
%%% On page~6 they write:
\begin{quotation}
  \noindent ``You first compute $n$ as the product of two primes $p$
  and $q$:
  $$
  n = p \cdot q.
  $$
  These primes are very large, 'random' primes. Although you will make
  $n$ public, the factors $p$ and $q$ will be effectively hidden from
  everyone else due to the enormous difficulty of factoring $n$.''
\end{quotation}
Also in earlier literature such as \cite{ell70} or \cite{coc73} one
does not find any further restrictions. In subsequent literature
people define RSA integers similarly to
\citeauthor*{rivsha77}\begin{fullversion}[, while sometimes additional
  safety tests are performed. ]: \cite*{crapom01} note that it is
  ``fashionable to select approximately equal primes but sometimes one
  runs some further safety tests''. In more applied works such as
  \cite{sch96h} or \cite{menoor97} one can read that for maximum
  security one chooses two (distinct) primes of equal length. Also
  \cite*{gatger03} follow a similar approach.
\end{fullversion}
\begin{fullversion}
  On suggestion of B. de~Weger, \cite{decmor08} define an RSA integer
  to be a product of two primes $p$ and $q$ such that $p < q < r p$
  for some parameter $r \in \R_{>1}$.
\end{fullversion}
Real world 
%%% \showoutput
implementations, however, require \emph{concrete
  algorithms} that specify in detail how to generate RSA
integers. This has led to a variety of standards, notably the
standards PKCS\#1 \citep{jonkal03}, ISO~18033-2 \citep{iso180332},
IEEE~1363-2000 \citep{araarn00}, ANSI~X9.44 \citep{ansi944},
FIPS~186-3 \citep{fips1863}, the standard of the RSA foundation
\citep{rsa00}, the standard set by the German Bundesnetzagentur
\citep{woh08}, and the standard resulting from the European NESSIE
project \citep{prebir03}.  All of those standards define more or less
precisely how to generate RSA integers and all of them have
substantially different requirements. This
%%% and the fact that there is no recent work on the analysis of RSA
%%% integers of a specific form
reflects the intuition that it does not really matter how one selects
the prime factors in detail, the resulting RSA modulus will do its
job.  But what is needed to show that this is really the case?

Following \cite{bradam93} a quality measure of a generator is the
entropy of its output distribution.  In abuse of language we will most
of the time talk about the \emph{output entropy} of an algorithm.  To
compute it, we need estimates of the probability that a certain
outcome is produced. This in turn needs a thorough analysis of how one
generates RSA integers of a specific form. If we can show that the
outcome of the algorithm is roughly uniformly distributed, the output
entropy is closely related to the count of RSA integers it can
produce. It will turn out that in all reasonable setups this count is
essentially determined by the desired length of the
output\begin{fullversion}[. ], see
  \ref{sec:CompareNotions}. \end{fullversion} For primality tests
there are several results in this direction (see for example
\citealt{joypai06}) but we are not aware of any related work analyzing
the output entropy of algorithms for generating RSA integers.

%%% Another algorithmic aspect is to require that the integers can be
%%% generated quite efficiently.  Though simple and straightforward
%%% solutions exist, many implementations do \emph{not} use the best
%%% algorithms known, see \ref{sec:genProperly} and
%%% \ref{sec:concImplem}.

Another requirement for the algorithm is that the output should be
`hard to factor'.  Since this statement does not even make sense for a
single integer, this means that one has to show that the restrictions
on the shape of the integers the algorithm produces do not introduce
any further possibilities for an attacker. To prove this, a
\emph{reduction} has to be given that reduces the problem of factoring
the output to the problem of factoring a product of two primes of
similar size, see \ref{sec:algoConsid}. Also there it is necessary to
have results on the count of RSA integers of a specific form to make
the reduction work.  As for the entropy estimations, we do not know
any related work on this.
\ifanonymous\else A conference version of this article, focusing on
the analysis of standardized RSA key-generators only, was published in
\ifanonymous Anonymous (2011)\else\cite{loenus11ae}\fi.\fi
%%% Since this statement does not even make sense for a single
%%% integer, one should better treat the requirement as a question
%%% concerning the algorithm generating the integer or the set of
%%% integers where an algorithm chooses from.  Let us thus say that an
%%% \emph{algorithm} that generates RSA integers with respect to some
%%% security parameter is called \emph{hard to invert} iff any
%%% polynomial time attacker has negligible advantage in factoring the
%%% output.  As any such hardness result seems up-to-date to be out of
%%% reach, we introduce robust definitions that keep the number of
%%% integers that are generated within desired margins and their
%%% generating algorithms are hard to invert unless factoring a
%%% product of two primes of similar size is easy, see
%%% \ref{sec:algoConsid}.

In the following section we will develop a formal framework that can
handle all possible definitions for RSA integers. After discussing the
necessary number theoretic tools in \ref{sec:toolbox}, we give
explicit formul\ae{} for the count of such integers which will be used
later for entropy estimations of the various standards for RSA
integers. In \ref{sec:NotionsInParticular} we show how our general
framework can be instantiated, yielding natural definitions for
several types of RSA integers (as used later in the standards).
\begin{fullversion}%
  The section afterwards compares in more detail the relations of the
  different notions.
\end{fullversion}%
\ref{sec:genProperly} gives a short overview on generic constructions
for fast algorithms that generate such integers almost uniformly. At
this point we will have described all necessary techniques to compute
the output entropy, which we discuss in \ref{sec:entropy}. The
following section resolves the second question described above by
giving a reduction from factoring special types of RSA integers to
factoring a product of two primes of similar size. We finish by
applying our results to various standards for RSA integers in
\ref{sec:concImplem}.

%%% The intuitive assumption is that --- as long as you take enough
%%% integers into account to construct your modulus --- everything
%%% should come out fine.  We show in the following that this is
%%% indeed true in a reasonable framework\begin{fullversion}[. ]:
%%%   \begin{itemize}
%%%   \item The number of integers taken into account is roughly given
%%%     by the desired bit size.
%%%   \item Such integers can be generated by an efficient algorithm.
%%%   \item The generation algorithm is hard to invert regardless of
%%%     the specific definition provided factoring a product of two
%%%     primes of similar size is hard.
%%%   \end{itemize}
%%% \end{fullversion}

\begin{shortversion}
  We omitted here most of the number theoretic details. \ifanonymous
  Proofs of those theorems can be found in the extended version of
  this paper. \else For the proofs of those theorems see \ifanonymous
  Anonymous (2011)\else\cite{loenus11a}\fi. \fi Note that for ease of
  comparison, we have retained the numbering of the extended version.
\end{shortversion}

\section{RSA integers in general}

If one generates an RSA integer it is necessary to select for each
choice of the security parameter the prime factors from a certain
region. This security parameter is typically an integer $k$ that
specifies (roughly) the size of the output. We use a more general
definition by asking for integers from the interval $]x/r, x]$, given
a \emph{real} bound $x$ and a parameter $r$ (possibly depending on
$x$). Clearly, this can also be used to model the former selection
process by setting $x = 2^{k}-1$ and $r = 2$. Let us in general
introduce a \emph{notion of RSA integers with tolerance~$r$} as a
family%
\begin{shortversion}
  \hanghere[\baselineskip]{\psset{unit=1.9mm}\drawnotion{8}{3}{%
      \RSAGENERALdraw \rput{315}(0.9,0.1){\tiny $(~ln \RSASET)_x$}%
    }}
\end{shortversion}
$$
\RSASET := \left\langle \RSASET_{x} \right\rangle_{x \in \R_{>1}}
$$
of subsets of the positive quadrant $\R_{>1}^{2}$, where for every
$x\in\R_{>1}$
$$
\RSASET_{x} \subseteq \Set{(y,z) \in \R_{>1}^{2}; \frac{x}{r} < y z
  \leq x}.
$$
The tolerance~$r$ shall always be larger than~$1$.  We allow here that
$r$ varies with~$x$, which of course includes the case when
$r$ is a constant.  Typical values used for RSA are $r=2$ or $r=4$
which fix the bit-length of the modulus more or less. 
\begin{shortversion}
  We can ---~for a fixed choice of parameters~--- easily visualize any
  notion of RSA integers by the corresponding region $\RSASET_{x}$ in
  the $(y,z)$-plane.  It is favorable to look at these regions in
  logarithmic scale. We write $y = e^{\upsilon}$ and $z = e^{\zeta}$
  and denote by $(~ln\RSASET)_{x}$ the region in the
  $(\upsilon,\zeta)$-plane corresponding to the region~$\RSASET_{x}$
  in the $(y,z)$-plane, formally\ $(\upsilon,\zeta) \in (~ln
  \RSASET)_{x} :\Leftrightarrow (y,z) \in \RSASET_{x}$%
  \begin{fullversion}[. ]%
    , we obtain a picture like in \ref{fig:RSAnotionsGeneral}.
  \end{fullversion}%
\end{shortversion}
Now an \emph{$\RSASET$\=integer $n$ of size~$x$} ---~for use as a
modulus in RSA~--- is a product $n = p q$ of a prime pair $(p,q) \in
\RSASET_{x} \cap (\PR\times\PR)$, where $\PR$ denotes the set of
primes.
\begin{fullversion}%
  \begin{figure}[t]
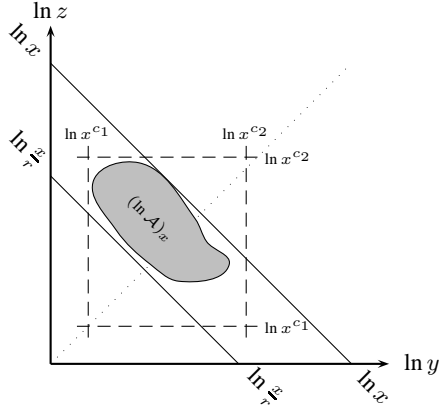

    \centering
    %%% \input{\jobname-notion-general}%
    %%% \rput(-9.5,1){%%% (step back by picture width and picture depth)
    %%%   \psline[linewidth=.4pt,linestyle=dashed](1,1)(1,6)(6,6)(6,1)(1,1)%
    %%% }%
    { \psset{unit=5mm}%
      \drawnotion[{%
        \DNaddlnlabels
        %%% \psline[linewidth=.4pt,linestyle=dashed](1,1)(1,6)(6,6)(6,1)(1,1)%
      }]{8}{3}{%
        \RSAGENERALdraw \psline[linewidth=.4pt,
        linestyle=dashed](-.8,-1.2)(-.8,2.2) \psline[linewidth=.4pt,
        linestyle=dashed](-1,-1)(2.2,-1) \psline[linewidth=.4pt,
        linestyle=dashed](2,-1.2)(2,2.2) \psline[linewidth=.4pt,
        linestyle=dashed](-1,2)(2.2,2) \psset{labelsep=1mm}
        \uput[90]{0}(2,-1.2){\tiny $~ln x^{c_1}$}
        \uput[90]{0}(-.8,-1.2){\tiny $~ln x^{c_2}$}
        \uput[0]{0}(-1,-1){\tiny $~ln x^{c_2}$}
        \uput[0]{0}(-1,2){\tiny $~ln x^{c_1}$}
        %%% \psframe[linewidth=.4pt,linestyle=dashed]%
        %%% (-1,-1)(2,2)
        \rput{315}(0.9,0.1){\tiny $(~ln \RSASET)_x$}%
      }}%
    \caption{%
      A generic notion of RSA integers with tolerance~$r$.  The gray
      area shows the parts of the $(~ln y,~ln z)$-plane which is
      counted.  It lies between the tolerance bounds $~ln x$ and $~ln
      \frac{x}{r}$.  The dashed lines show boundaries as imposed by
      $[c_{1},c_{2}]$-balanced.  The dotted diagonal marks the
      criterion for symmetry.}
    \label{fig:RSAnotionsGeneral}
  \end{figure}%
\end{fullversion}%
They are counted by the associated \emph{prime~pair counting
  function~$\RSA$} for the notion $\RSASET$:
$$
\map[\RSA]{\R_{>1}}{\N}{x}{\#\Set{(p,q) \in \PR \times \PR; (p,q) \in
    \RSASET_{x}}.}
$$
Thus every $\RSASET$\=integer $n = p q$ is counted once or twice in
$\RSA(x)$ depending on whether only $(p,q) \in \RSASET_{x}$ or also
$(q,p) \in \RSASET_{x}$, respectively.  We call a notion
\emph{symmetric} if for all choices of the parameters the
corresponding area in the $(y,z)$-plane is symmetric with respect to
the main diagonal, i.e.\ that $(y,z) \in \RSASET_{x}$ implies also
$(z,y) \in \RSASET_{x}$.  If to the contrary $(y,z) \in \RSASET_{x}$
implies $(z,y) \notin \RSASET_{x}$ we call the notion
\emph{antisymmetric}.  When we are only interested in the associated
RSA integers we can always require symmetry or antisymmetry, yet many
algorithms proceed in an asymmetric way.

%%% \leavevmode\mtodo{DL2012/08/23, MN2012/08/30 changed. DL: Ack.}%
%%
Note that varying $r$ do not occur in standards and implementations
for RSA integers, analyzed in \ref{sec:concImplem}.  However, there
are still quite natural notions in which a varying $r$ occurs:
Consider for example the notion where the primes $p$, $q$ are selected
from the interval $[x^{1/4}, x^{1/2}]$.  Then we obtain the product $p
q \in [x^{1/2}, x]$.  This corresponds to the notion discussed in
\ref{sec:fixedBound} with $r = \sqrt{x}$.  Indeed, all the counting
theorems in \ref{sec:NotionsInParticular} can handle such large $r$.
However, the error term is correspondingly large.
%%% Due to this fact and the fact that the analyzed standards and
%%% implementations make no use of varying $r$, we decided to
%%% restricted the choice of $r$ in several places to obtain better
%%% error bounds.

Certainly, we will also need restrictions on the shape of the area we
are analyzing: If one considers any notion of RSA integers and throws
out exactly the prime pairs one would be left with a prime-pair-free
region and any approximation for the count of such a notion based on
the area would necessarily have a tremendously large error
term. However, for practical applications it turns out that it is
enough to consider regions of a very specific form.  Actually, we will
most of the time have regions whose boundary can be described by
graphs of certain smooth functions\iffullversion, see
\ref{def:notion-monotone}\fi.
\begin{shortversion}
  In the following, we call notions having such boundaries
  \emph{monotone}. A more detailed explanation of the restrictions we
  have to impose to make the number-theoretic work sound can be found
  in the extended version\ifanonymous\else\ \cite{loenus11a}\fi.
\end{shortversion}

For RSA, people usually prefer two prime factors of roughly the same
size, where size is understood as bit length.  Accordingly, we call a
notion of RSA integers \emph{$[c_{1},c_{2}]$-balanced} iff
additionally for every $x\in\R_{>1}$
$$
\RSASET_{x} \subseteq \Set{(y,z)\in\R_{>1}^{2}; y, z \in \left[
    x^{c_{1}}, x^{c_{2}} \right]},
$$
where $0 < c_{1} \leq c_{2}$ can be thought of as constants or ---
more generally --- as smooth functions in $x$ defining the amount of
allowed divergence subject to the side condition that $x^{c_{1}}$
tends to infinity when $x$ grows.  If $c_{1} > \frac{1}{2}$ then
$\mathcal{A}_{x}$ is empty, so we will usually assume $c_{1} \leq
\frac{1}{2}$.  In order to prevent trial division from being a
successful attacker it would be sufficient to require $y, z \in
\bigOmega{~ln^{k} x}$ for every $k \in \N$.
%\mtodo{WARNING! $x^{c_{1}} \to \infty$ is maybe less. Verify!}
Our stronger requirement still seems reasonable and indeed equals the
condition \cite{mau95} required for secure RSA moduli, as the
supposedly most difficult factoring challenges stay within the range
of our attention.  As a side-effect this greatly simplifies our
approximations later. The German Bundesnetzagentur uses a very similar
restriction in their algorithm catalog \citep{woh08}.
\begin{fullversion}
  We can ---~for a fixed choice of parameters~--- easily visualize any
  notion of RSA integers by the corresponding region $\RSASET_{x}$ in
  the $(y,z)$-plane.  It is favorable to look at these regions in
  logarithmic scale: writing $y = e^{\upsilon}$ and $z = e^{\zeta}$,
  we depict the region $(~ln\RSASET)_{x}$ in the
  $(\upsilon,\zeta)$-plane corresponding to the region~$\RSASET_{x}$
  in the $(y,z)$-plane, i.e.\ $(\upsilon,\zeta) \in (~ln \RSASET)_{x}
  :\Leftrightarrow (y,z) \in \RSASET_{x}$.%
  \begin{fullversion}%
    We obtain a picture like in \ref{fig:RSAnotionsGeneral}.
  \end{fullversion}%
\end{fullversion}

Often the considered integers $n = pq$ are also subject to further
side conditions, like $~gcd((p-1)(q-1), e) = 1$ for some fixed public
RSA exponent $e$. Most of the number theoretic work below can easily
be adapted, but for simplicity of exposition we will often present our
results without those further restrictions and just point out when
necessary how to incorporate such additional properties.

In \cite{woh08} it is additionally required that the primes $p$ and
$q$ are not too close to each other.  We ignore this issue here, since
the probability that two primes are \emph{very close} to each other
would be tiny if the notion from which $(p,q)$ was selected is
sufficiently large. If necessary, we are able to modify our notions
such that also this requirement is met.

\begin{fullversion}
  \begin{figure}[t]
    \centering
    %%% \begin{pspicture}(8,6)
    % \documentclass[noccpublish]{cc}
% 
% \usepackage{pst-all}
% 
% \begin{document}
% 
\psset{unit=7mm,labelsep=1mm}
\begin{pspicture}(-0.5,-0.5)(10,10)
  \def\parplot#1#2#3{%
    \parametricplot{#2}{#3}{%
      %%% (: Thx to Konstantin for this parametrization. :)
      t 2 exp %
      2.718281828 t 2 mul #1 sub exp %
      sub %
      dup 0 ge %
      {}%
      { pop 0 }%
      ifelse %
      0.5 exp %
      dup %
      t add exch %
      neg t add %
    }%
    \parametricplot{#2}{#3}{%
      t 2 exp %
      2.718281828 t 2 mul #1 sub exp %
      sub %
      dup 0 ge %
      {}%
      { pop 0 }%
      ifelse %
      0.5 exp neg %
      dup %
      t add exch %
      neg t add %
    }%
  }%
  \definecolor{mygray10}{gray}{0.9}%
  \definecolor{mygray20}{gray}{0.8}%
  \definecolor{mygray30}{gray}{0.7}%
  \definecolor{mygray50}{gray}{0.5}%
  \definecolor{mygray70}{gray}{0.3}%
  \definecolor{mygray80}{gray}{0.2}%
  \definecolor{mygray90}{gray}{0.1}%
  \psaxes[tickstyle=bottom]{->}(0,0)(10,10)%
  \uput{3mm}[270]{0}(10,0){$\upsilon$}%
  \uput{3mm}[180]{0}(0,10){$\zeta$}%
  \psset{plotpoints=1000}%
  %\parplot{2}{1.0}{1.0}%
  \psset{linecolor=mygray10}%
  \parplot{2.001}{0.9687096746}{1.031956985}%
  %%% \parplot{2.01}{0.9033051829}{1.103360743}%
  \psset{linecolor=mygray20}%
  \parplot{3}{0.3017095627}{2.357676674}%
  \psset{linecolor=mygray30}%
  \parplot{4}{0.1585943396}{3.146193221}%
  \psset{linecolor=mygray50}%
  \parplot{5}{0.08979707022}{3.847396751}%
  \psset{linecolor=mygray70}%
  \parplot{6}{0.05246909746}{4.505241496}%
  \psset{linecolor=mygray80}%
  \parplot{7}{0.03115292702}{5.136340948}%
  \psset{linecolor=mygray90}%
  \parplot{8}{0.01866062909}{5.749031386}%
  %%% \parplot{9}{0.0112345043}{6.348165943}%
  %%% \parplot{10}{0.006783811352}{6.936847407}%
\end{pspicture}
% 
% \end{document}
% 
%%% Local Variables: 
%%% mode: latex
%%% TeX-master: t
%%% End: 
    %%% \psset{unit=5mm}
    %%% \begin{pspicture}(-1,-1)(13,13)
    %%%   \rput[l](-0.1,5.9){\includegraphics[width=6cm]{\jobname-density-plain2}}
    %%%   \psaxes[labels=none,ticks=none,linewidth=.7pt]{->}(0,0)(0,0)(12,12)
    %%%   \uput{2mm}[270]{0}(1,0){$~ln 2$}%
    %%%   \psline{-}(1,-0.2)(1,0.2) \uput{2mm}[180]{0}(0,1){$~ln 2$}%
    %%%   \psline{-}(-0.2,1)(0.2,1)
    %%%   \uput{2mm}[0]{0}(12,0){$\upsilon$}%
    %%%   \uput{1mm}[90]{0}(0,12){$\zeta$}%
    %%% \end{pspicture}
    \caption{\label{fig:density} Levels $~e^{k}$ of the function
      $\frac{~e^{\upsilon+\zeta}}{\upsilon \zeta}$ for $k \in
      \{2+\epsilon, 3, \ldots, 8\}$. The darker the line the higher is
      the value of $k$.}
  \end{figure}

  In order to count the number of $\RSASET$\=integers we have to
  evaluate
  $$
  \RSA(x) = \sum_{\substack{(p,q) \in \RSASET_{x} \\ p,q \in \PR}} 1.
  %%% \sum_{(p,q) \in \RSASET_{x}} 1.
  $$
  If we follow the intuitive view that a randomly generated number~$n$
  is prime with probability $\frac{1}{~ln n}$, we expect that we have
  to evaluate integrals like
  $$
  \friint{\frac{1}{~ln y ~ln z}}{y}{z}{\RSASET_{x}}{},
  $$
  while carefully considering the error between those integrals and
  the above sums.
  In logarithmic scale we obtain expressions of the form $
  \friint{\frac{e^{\upsilon+\zeta}}{\upsilon
      \zeta}}{\upsilon}{\zeta}{(~ln \RSASET)_{x}}{}.  $ To get an
  understanding of these functions, in \ref{fig:density} some contour
  lines of the inner function are depicted.  From the figure we
  observe that pairs $(\upsilon, \zeta)$ where $\upsilon+\zeta$ is
  large have a higher weight in the overall count.
\end{fullversion}%

As we usually deal with balanced notions the considered regions are
somewhat centered around the main diagonal. We will show in
\ref{sec:algoConsid} that if factoring products of two primes is hard
then it is also hard to factor integers generated from such notions.

\section{Toolbox}\label{sec:toolbox}

We will now develop the necessary number theoretic concepts to obtain
formul\ae{} for the count of RSA integers that will later help us to
estimate the output entropy of the various standards for RSA
integers. In related articles, like \cite{decmor08} one finds counts
for \emph{one particular} definition of RSA integers.  We believe that
in the work presented here for the first time a sufficiently general
theorem is established that allows to compute the number of RSA
integers for \emph{all} reasonable definitions.

We assume the Riemann hypothesis throughout the entire paper.  The
main terms are the same without this assumption, but the error bounds
one obtains are then much weaker.
\begin{fullversion}[\refstepcounter{equation}\refstepcounter{equation}]%
  We use the following version of the prime number theorem:
  \begin{namedtheorem}{Prime number theorem}[\Citealt{koc01},
    \cite{sch76a}]
    \label{pnt-schoenfeld}
    \leavevmode\hfil
    %%% 
    If (and only if) the Riemann hypothesis holds, then for
    $x\geq2657$
    $$
    \left| \pi(x) - ~li(x) \right| < \frac{1}{8\pi} \sqrt{x} ~ln x,
    $$
    where $~li(x) := \preint{t}{0}{x} \frac{\postint}{~ln t}$.
    %%% Sic! rossch62 use $li(x)=Ei(ln x)$ which is as defined here.
    %%% 
  \end{namedtheorem}
  We first state a quite technical lemma that enables us to do our
  approximations:%
  \begin{lemma}[Prime sum approximation]\label{res:recapprox}
    Let $f$, $\widetilde{f}$, $\widehat{f}$ be functions $\shortmap{
      [\intrangeleft,\intrangeright] }{\R_{>1}}$, where $B,C \in
    \R_{>1}$ such that $\widetilde{f}$ and $\widehat{f}$ are piecewise
    continuous, $\widetilde{f}+\widehat{f}$ is either weakly
    decreasing, weakly increasing, or constant, and for $p \in [B,C]$
    we have the estimate
    \begin{equation*}
      \left| f(p) - \widetilde{f}(p) \right| \leq \widehat{f}(p).
    \end{equation*}
    Further, let $\widehat{E}(p)$ be a positive valued, continuously
    differentiable function of~$p$ bounding $|\pi(p)-~li(p)|$ on
    $[\intrangeleft,\intrangeright]$.  (For example, under the Riemann
    hypothesis we can take $\widehat{E}(p) = \frac{1}{8\pi} \sqrt{p}
    ~ln p$ provided $B \geq 2657$.)  Then
    \begin{equation*}
      \Biggl|
      \primesum{f(p)}{p}{\intrangeleft}{\intrangeright}
      -
      \quad\widetilde{g}\quad
      \Biggr|
      \;\leq\;
      \widehat{g}
    \end{equation*}
    with
    \begin{align*}
      \widetilde{g} &= \frint{ \frac{\widetilde{f}(p)}{~ln p}
      }{p}{\intrangeleft}{\intrangeright},\\
      \widehat{g} &= \frint{ \frac{\widehat{f}(p)}{~ln p}
      }{p}{\intrangeleft}{\intrangeright}
      %%% \\&\phantom{=\;}
      + 2 (\widetilde{f}+\widehat{f})(\intrangeleft)
      \widehat{E}(\intrangeleft) + 2
      (\widetilde{f}+\widehat{f})(\intrangeright)
      \widehat{E}(\intrangeright)
      %%% \\&\phantom{=\;}
      + \frint{ \left( \widetilde{f} + \widehat{f} \right)(p)
        \widehat{E}'(p) }{p}{\intrangeleft}{\intrangeright}.
    \end{align*}
    In the special case when $\widetilde{f}+\widehat{f}$ is constant
    %%% \footnotemark
    we have the better bound
    \begin{align*}
      \widehat{g} &= \frint{ \frac{\widehat{f}(p)}{~ln p}
      }{p}{\intrangeleft}{\intrangeright}
      %%% \\&\phantom{=\;}
      + (\widetilde{f}+\widehat{f})(\intrangeleft)
      (\widehat{E}(\intrangeleft) + \widehat{E}(\intrangeright))
      %%% \\&\phantom{=\;}
    \end{align*}
    %%% \footnotetext{ To be more precise we have
    %%%   \begin{enumerate}
    %%%   \item If $\widetilde{f}+\widehat{f}$ is \textbf{weakly
    %%%       decreasing},
    %%%     \begin{align*}
    %%%       \widehat{g} &= \frint{ \frac{\widehat{f}(p)}{~ln p}
    %%%       }{p}{\intrangeleft}{\intrangeright}
    %%%                 %%%       \\&\phantom{=\;}
    %%%       + 2 (\widetilde{f}+\widehat{f})(\intrangeleft)
    %%%       \widehat{E}(\intrangeleft)
    %%%                 %%%       \\&\phantom{=\;}
    %%%       + \frint{ \left( \widetilde{f} + \widehat{f} \right)(p)
    %%%         \widehat{E}'(p) }{p}{\intrangeleft}{\intrangeright}.
    %%%     \end{align*}
    %%%   \item If $\widetilde{f}+\widehat{f}$ is \textbf{weakly
    %%%       increasing},
    %%%     \begin{align*}
    %%%       \widehat{g} &= \frint{ \frac{\widehat{f}(p)}{~ln p}
    %%%       }{p}{\intrangeleft}{\intrangeright}
    %%%                 %%%       \\&\phantom{=\;}
    %%%       + 2 (\widetilde{f}+\widehat{f})(\intrangeright)
    %%%       \widehat{E}(\intrangeright)
    %%%                 %%%       \\&\phantom{=\;}
    %%%                 %%%       - \frint{ \left( \widetilde{f} +
    %%%                 %%%           \widehat{f} \right)(p)
    %%%                 %%%         \widehat{E}'(p)
    %%%                 %%%       }{p}{\intrangeleft}{\intrangeright}.
    %%%     \end{align*}
    %%%   \end{enumerate}
    %%% }
  \end{lemma}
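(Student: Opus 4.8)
The plan is to turn the prime sum into a Riemann--Stieltjes integral against the prime counting function and then feed in the prime number theorem through the inequality $|\pi(p)-\mathrm{li}(p)|\le\widehat{E}(p)$. Write $F:=\widetilde{f}+\widehat{f}$ for the monotone envelope and $R:=\pi-\mathrm{li}$, so that $|R|\le\widehat{E}$ on $[B,C]$. Since $\pi$ increases by $1$ at each prime and is constant otherwise, I would first record the exact identity
\[
  \sum_{B<p\le C} F(p)=\int_{(B,C]}F\,d\pi=\int_B^C\frac{F(t)}{\ln t}\,dt+\int_{(B,C]}F\,dR,
\]
which already isolates the desired main term $\int_B^C\frac{F}{\ln t}\,dt=\int_B^C\frac{\widetilde{f}}{\ln t}\,dt+\int_B^C\frac{\widehat{f}}{\ln t}\,dt$ and leaves the genuine error $\int_{(B,C]}F\,dR$ to be estimated.

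The technical heart is to bound this last integral by the claimed boundary-plus-derivative expression. Because $F$ is monotone, hence of bounded variation, while $\mathrm{li}$ is continuous and $\pi$ jumps only at primes, Stieltjes integration by parts gives
\[
  \int_{(B,C]}F\,dR=\bigl[FR\bigr]_B^C-\int_B^C R\,dF .
\]
The boundary term is at most $F(B)\widehat{E}(B)+F(C)\widehat{E}(C)$ in absolute value. For the remaining integral I would use $|R|\le\widehat{E}$ together with the fixed sign of $dF$ to get $\bigl|\int_B^C R\,dF\bigr|\le\int_B^C\widehat{E}\,|dF|$, and then integrate by parts a second time, now moving the derivative onto the smooth $\widehat{E}$:
\[
  \int_B^C\widehat{E}\,|dF|=\pm\Bigl(\bigl[\widehat{E}F\bigr]_B^C-\int_B^C F\,\widehat{E}'\,dt\Bigr),
\]
the sign depending on whether $F$ increases or decreases. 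Collecting terms and, in order to get a single formula valid in all three monotonicity cases, retaining the larger of the two boundary contributions yields the uniform bound $2F(B)\widehat{E}(B)+2F(C)\widehat{E}(C)+\int_B^C F\,\widehat{E}'\,dt$. When $F$ is constant the term $\int R\,dF$ vanishes outright and only $F(B)\bigl(\widehat{E}(B)+\widehat{E}(C)\bigr)$ survives, which is exactly the stated sharper estimate.

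It remains to pass from $F$ back to $f$. Here I would sandwich $f$ by $\widetilde{f}-\widehat{f}\le f\le\widetilde{f}+\widehat{f}=F$. The upper direction is immediate: $\sum_{B<p\le C}f(p)\le\sum_{B<p\le C}F(p)$, and combining this with the two displays above gives $\sum_{B<p\le C}f(p)\le\widetilde{g}+\widehat{g}$, because the error for $F$ is controlled by the expression just derived and the $\widehat{f}$-part of the main term contributes precisely $\int_B^C\frac{\widehat{f}}{\ln t}\,dt$ with coefficient one. The lower direction I would treat by the symmetric estimate applied to the lower envelope $\widetilde{f}-\widehat{f}$.

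The step I expect to be the main obstacle is exactly this last passage. The hypotheses make only the combination $\widetilde{f}+\widehat{f}$ monotone, whereas $\widetilde{f}$ and $\widehat{f}$ separately may oscillate, so their individual prime-sum errors $\int\widetilde{f}\,dR$ and $\int\widehat{f}\,dR$ are \emph{not} controlled by the boundary data alone. The entire computation must therefore be arranged so that integration by parts is ever applied only to a monotone combination, never to $\widetilde{f}$ or $\widehat{f}$ in isolation, while the slack $|f-\widetilde{f}|\le\widehat{f}$ is peeled off at the harmless absolutely continuous level against $d\mathrm{li}$. This is what forces the approximation to go through the envelope rather than through $\widetilde{f}$ directly, and it is where the monotonicity assumption does all the work. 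Secondary nuisances are the careful treatment of the half-open interval $(B,C]$ and of a possible jump of $\pi$ at an endpoint, and the sign bookkeeping across the increasing, decreasing, and constant cases that is responsible for the factor $2$ in the general bound.
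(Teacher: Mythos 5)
Your architecture is exactly the paper's (the paper only sketches it: rewrite the prime sum as a Stieltjes integral, integrate by parts, estimate $\pi$, integrate by parts ``backwards''), and your handling of the monotone envelope $F=\widetilde f+\widehat f$ is sound: the identity $\sum F(p)=\int F\,d\mathrm{li}+\int F\,dR$, the first integration by parts, the fixed sign of $dF$, the second integration by parts moving the derivative onto $\widehat E$, the sign bookkeeping that produces the factor $2$, and the sharper constant case are all correct. The genuine gap is the final passage from $F$ back to $f$ --- the very step you flag as the main obstacle and then do not actually resolve. Your ``symmetric estimate applied to the lower envelope $\widetilde f-\widehat f$'' is not available: the hypotheses make only $\widetilde f+\widehat f$ monotone, and $\widetilde f-\widehat f=F-2\widehat f$ may oscillate arbitrarily, so $\int(\widetilde f-\widehat f)\,dR$ is \emph{not} controlled by boundary values plus $\int F\widehat E'$; integrating it by parts produces $\int\widehat E\,\lvert d(\widetilde f-\widehat f)\rvert$, which can be as large as the main term. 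Nor can the slack be ``peeled off at the absolutely continuous level against $d\mathrm{li}$,'' as you propose: in the lower direction the slack enters through the counting measure, $\sum_{B<p\le C}\widehat f(p)=\int\widehat f\,d\pi$, and converting $\int\widehat f\,d\pi$ into $\int\widehat f\,d\mathrm{li}$ plus a small error presupposes exactly the monotonicity that $\widehat f$ alone lacks. So by your own (correct) guiding principle --- integrate by parts only monotone combinations --- your lower bound is circular.

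Moreover, this is not a repairable bookkeeping issue: under the literal hypotheses the lower inequality can fail. Take the constant case $F=\widetilde f+\widehat f\equiv A$ and $f=\widetilde f$, where $\widetilde f(t)=A-1$ except on intervals of total length $o(1)$ around each prime of $\left]B,C\right]$, on which it dips to values just above $1$, and $\widehat f=A-\widetilde f$; all three functions map into $\R_{>1}$, $F$ is constant, and $\lvert f-\widetilde f\rvert\le\widehat f$ trivially. Then
\begin{equation*}
  \sum_{\substack{B<p\le C\\ p\ \text{prime}}} f(p)\;\approx\;\pi(C)-\pi(B)\;\approx\;\int_B^C\frac{dt}{\ln t},
  \qquad
  \widetilde g\;\approx\;(A-1)\int_B^C\frac{dt}{\ln t},
  \qquad
  \widehat g\;\approx\;\int_B^C\frac{dt}{\ln t}+A\bigl(\widehat E(B)+\widehat E(C)\bigr),
\end{equation*}
and the asserted bound $\widetilde g-\widehat g\le\sum f(p)$ is violated for $A\ge4$ and $B$ large. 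Consequently your plan closes only if the monotonicity hypothesis is strengthened so that both envelopes $\widetilde f\pm\widehat f$ are weakly monotone (then your symmetric argument goes through verbatim, with $\lvert\widetilde f-\widehat f\rvert\le F$ absorbing its boundary and derivative terms into the stated $\widehat g$), or if one settles for the one-sided (upper) estimate, which is the part you proved. The paper's own three-line proof, deferring to an analogous lemma elsewhere, glosses over precisely the same point; but as a self-contained argument, the step ``the lower direction I would treat by the symmetric estimate'' is where your proof breaks.
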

  \begin{proof}
    The proof can be done analogously to the proof of Lemma
    2.1 %%% \mtodo{MN2012/08/30: Check reference! DL: Done. Is ok.}
    in \ifanonymous Anonymous (2010)\else\cite{loenus10c}\fi:
    %%%
    %%% \mtodo{MN2012/08/30: Added sketch. DL: Ack.}
    %%% 
    First, rewrite $\primesum{f(p)}{p}{\intrangeleft}{\intrangeright}$
    as a Stieltjes integral
    $\frint{f(p)}{p}{\intrangeleft}{\intrangeright}$.  Then integrate
    by parts, estimate $\pi$, and finally integrate by parts
    `backwards'.
    %%% \xrln\ref{res:recapprox}.
  \end{proof}

  Next we formulate a lemma specialized to handle RSA notions.  We
  cannot expect to obtain an approximation of the number of prime
  pairs by the area of the region unless we make certain restrictions.

\end{fullversion}%
\begin{shortversion}%
  We skip intermediate results here and just summarize the number
  theoretic work (to ease later comparison we have retained the
  numbering of the extended version\ifanonymous\else\
  \citealt{loenus11a}\fi).
\end{shortversion}%
%%% However we were not able to find a general lemma capable of
%%% handling \emph{all} notions.
\begin{fullversion}[\refstepcounter{equation}]
  The following definition describes the restrictions that we use.
  %%% \begin{fullversion}%
  %%%   They are by far not necessary but we could not find a `nice'
  %%%   uniform
  %%%   condition that is sufficient and covers all wanted forms.
  %%% \end{fullversion}%
  As you will notice, it essentially enforces a certain monotonicity
  that allows the error estimation.
  %%% 
  \begin{definition}\label{def:notion-special}
    Let $\RSASET$ be a notion of RSA integers with tolerance~$r$.
    \begin{enumerate}
    \item\label{def:notion-graph-bounded} The notion $\RSASET$ is
      \emph{graph-bounded} iff there are (at least) integrable
      boundary functions
      %%% \linebreak[3]
      $\shortmap[B_{1}, C_{1}]{\R_{>1}}{\R_{>1}}$ and
      $\shortmap[B_{2}, C_{2}]{\R_{>1}^{2}}{\R_{>1}}$ such that we can
      write
      $$
      \RSASET_{x} = \Set{ (y,z) \in \R_{>1}^{2};
        \begin{array}{c}
          B_{1}(x) < y \leq C_{1}(x),\\
          B_{2}(y,x) < z \leq C_{2}(y,
          x)
        \end{array}},
      $$
      where for all $x \in \R_{>1}$ and all $y \in
      \left]B_{1}(x),C_{1}(x)\right[$ we have $1 < B_{1}(x) \leq
      C_{1}(x) \leq x$ and $1 < B_{2}(y,x) < C_{2}(y,x) \leq x$.
    \item\label{def:notion-monotone} The notion $\RSASET$ is
      \emph{monotone at $x$ (relative to the error bound
        $\widehat{E}$)} for some $x\in\R_{>1}$ iff it is graph-bounded
      and the function
      $$
      \frint{\frac{1}{~ln q}}{q}{B_{2}(p,x)}{C_{2}(p,x)} +
      \widehat{E}(B_{2}(p,x)) + \widehat{E}(C_{2}(p,x))
      $$
      is either weakly increasing, weakly decreasing, or constant as a
      function in $p$ restricted to the interval $[B_{1}(x),
      C_{1}(x)]$.
      %%% Note that we restrict the boundary function ranges.
      If not mentioned otherwise we refer to the error bound given by
      $\widehat{E}(p) = \frac{1}{8\pi} \sqrt{p} ~ln p$.

      We call the notion $\RSASET$ \emph{monotone} iff it is monotone
      at each $x \in \R_{>1}$ where $\RSASET_{x} \ne \emptyset$.

      \begin{fullversion}%
      \item\label{def:notion-piecewise-monotone} The notion $\RSASET$
        is \emph{piecewise monotone} iff there is a parameter $m \in
        \N$ such that
        $$
        \RSASET_{x} := \biguplus_{j = 1}^{m} \RSASET_{j,x},
        $$
        where $\RSASET_{j,.}$ are all monotone notions of RSA integers
        of tolerance~$r$. Note that we may also allow $m$ to depend on
        $x$.
      \end{fullversion}%
    \end{enumerate}
  \end{definition}
  For \ref{def:notion-graph-bounded} note that $B_{1}(x) = C_{1}(x)$
  allows to describe an empty set $\RSASET_{x}$, and otherwise the
  inequality $B_{2}(y,x) \ne C_{2}(y,x)$ makes sure that all four
  bounding functions are determined by $\RSASET_{x}$ as long as $y \in
  \left] B_{1}(x), C_{1}(x) \right[$.  This condition enforces that
  $\RSASET_{x}$ is (path) connected.  We do not need that but also it
  does no harm.
  For \ref{def:notion-piecewise-monotone} observe that in the light of
  a multi-application of \ref{res:psalm23} we would be on the safe
  side if we require $m \in ~ln^{\bigO{1}} x$. At the extreme $m \in
  \smallo{c_1 x^{\frac{1-c}{4}} ~ln x}$ with $c = ~max\left( 2c_{2}-1,
    1-2c_{1} \right)$ is necessary for any meaningful result
  generalizing \ref{res:psalm23}.
\end{fullversion}%
\begin{fullversion}[\refstepcounter{equation}]%
  As in particular \ref{def:notion-monotone} is rather weird to verify
  we provide an easily checkable, sufficient condition for
  monotonicity of a notion.
  \begin{lemma}
    \label{res:criterionMonotone}
    Assume $\RSASET$ is a graph-bounded notion of RSA integers with
    tolerance~$r$ given by continuously differentiable functions
    $\shortmap[B_{1}, C_{1}]{\R_{>1}}{\R_{>1}}$ and $\shortmap[B_{2},
    C_{2}]{\R_{>1}^{2}}{\R_{>1}}$.  Finally, let $x \in \R_{>1}$ be
    such that
    \begin{itemize}
    \item the function $B_{2}(p,x)$ is weakly decreasing in $p$ and
    \item the function $C_{2}(p,x)$ is weakly increasing in $p$
    \end{itemize}
    for $p \in \left]B_{1}(x), C_{1}(x)\right]$, or vice versa.  As
    usual let $\widehat{E}(p)$ be the function given by
    $\widehat{E}(p) = \frac{1}{8\pi} \sqrt{p} ~ln p$. Then the notion
    $\RSASET$ is monotone at $x$ (relative to $\widehat{E}$).
  \end{lemma}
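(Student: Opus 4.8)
The plan is to unwind the definition of monotonicity at $x$ (\ref{def:notion-monotone}) and reduce it to a sign condition on a single derivative. Writing
$$
\Phi(p) := \int_{B_{2}(p,x)}^{C_{2}(p,x)} \frac{dq}{~ln q} + \widehat{E}(B_{2}(p,x)) + \widehat{E}(C_{2}(p,x)),
$$
I must show that $\Phi$ is weakly monotone in $p$ on $[B_{1}(x), C_{1}(x)]$. Since $B_{2}(\cdot,x)$ and $C_{2}(\cdot,x)$ are continuously differentiable with values in $\R_{>1}$, the integrand $\frac{1}{~ln q}$ is continuous and finite on the range of integration (which stays bounded away from $1$ by graph-boundedness), and $\widehat{E}$ is smooth, so $\Phi$ is differentiable. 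First I would differentiate, using the fundamental theorem of calculus on the two integration limits together with the chain rule, and then collect the outcome according to the two boundary derivatives:
$$
\Phi'(p) = \frac{\partial C_{2}}{\partial p}\left(\frac{1}{~ln C_{2}} + \widehat{E}'(C_{2})\right) + \frac{\partial B_{2}}{\partial p}\left(\widehat{E}'(B_{2}) - \frac{1}{~ln B_{2}}\right),
$$
abbreviating $B_{2} = B_{2}(p,x)$ and $C_{2} = C_{2}(p,x)$.

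The heart of the argument is to pin down the sign of each bracketed coefficient for every argument in $\R_{>1}$. Differentiating $\widehat{E}(p) = \frac{1}{8\pi}\sqrt{p}~ln p$ yields $\widehat{E}'(p) = \frac{1}{8\pi\sqrt{p}}\left(\frac{~ln p}{2} + 1\right)$, which is manifestly positive; hence the first bracket $\frac{1}{~ln C_{2}} + \widehat{E}'(C_{2})$ is strictly positive. For the second bracket I must establish the opposite strict inequality $\widehat{E}'(p) < \frac{1}{~ln p}$ for all $p > 1$. Substituting $u = ~ln p > 0$ turns this into $8\pi ~e^{u/2} > \frac{u^{2}}{2} + u$, which I would prove by bounding $~e^{u/2} \geq 1 + \frac{u}{2} + \frac{u^{2}}{8}$ from the exponential power series, so that $8\pi ~e^{u/2} \geq 8\pi + 4\pi u + \pi u^{2}$ dominates $\frac{u^{2}}{2} + u$ term by term. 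This gives $\widehat{E}'(B_{2}) - \frac{1}{~ln B_{2}} < 0$. This elementary but not entirely transparent inequality is the one genuine obstacle; everything else is routine.

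With both coefficient signs fixed, the conclusion drops out of the hypotheses. In the first case $\frac{\partial B_{2}}{\partial p} \leq 0$ and $\frac{\partial C_{2}}{\partial p} \geq 0$: the first summand of $\Phi'$ is a nonnegative factor times a positive bracket, the second a nonpositive factor times a negative bracket, so $\Phi'(p) \geq 0$ on the open interval and $\Phi$ is weakly increasing. In the \emph{vice versa} case both monotonicity assumptions reverse and the identical bookkeeping gives $\Phi'(p) \leq 0$, so $\Phi$ is weakly decreasing; by continuity this extends to the closed interval including its endpoints. In either case $\Phi$ is weakly monotone, which is precisely the assertion that $\RSASET$ is monotone at $x$ relative to $\widehat{E}$.
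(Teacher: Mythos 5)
Your proposal is correct and follows essentially the same route as the paper: differentiate the defining function of monotonicity via the fundamental theorem of calculus and the chain rule, obtaining exactly the same two bracketed coefficients, and conclude by a sign analysis under each monotonicity hypothesis. The only difference is cosmetic --- where the paper asserts that ``some simple calculus'' shows the second bracket $\frac{1}{\ln B_{2}} - \widehat{E}'(B_{2})$ is positive for $B_{2} > 1$, you supply that calculus explicitly via the power-series bound $e^{u/2} \geq 1 + \frac{u}{2} + \frac{u^{2}}{8}$, which is a clean and complete justification of the step the paper leaves to the reader.
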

  %%% \begin{shortversion}
  %%%   \begin{proof}
  %%%     The proof of this follows by elementary calculus.
  %%%   \end{proof}
  %%% \end{shortversion}
  \begin{fullversion}
    \begin{proof}
      The goal is to show that the function
      $$
      h(p) := \frint{\frac{1}{~ln q}}{q}{B_{2}(p,x)}{C_{2}(p,x)} +
      \widehat{E}(B_{2}(p,x)) + \widehat{E}(C_{2}(p,x))
      $$
      is weakly increasing or weakly decreasing in $p$. We write
      $B_{2}'(p,x)$ and $C_{2}'(p,x)$, respectively, for the
      derivative with respect to $p$. Note that
      \begin{align*}
        h'(p) &:=
        \begin{aligned}[t]
          &\underbrace{ \left( \frac{1}{~ln C_{2}(p,x)} + \frac {2+~ln
                C_{2}(p,x)} {16 \pi \sqrt{C_{2}(p,x)}} \right) }_{>0}
          C_{2}'(p,x)
          \\
          & - \underbrace{\left( \frac{1}{~ln B_{2}(p,x)} -
              \frac{2+~ln B_{2}(p,x)}{16 \pi \sqrt{B_{2}(p,x)}}
            \right)}_{>0} B_{2}'(p,x).
        \end{aligned}
      \end{align*}
      Some simple calculus shows that the second underbraced term is always
      positive since $B_{2}(p,x) > 1$.
      %%% (Best consider $g(b) = 1 -
      %%% \frac{~ln(b)(2+~ln(b))}{16\pi\sqrt{b}}$ for $b>1$.  Its only
      %%% extreme is at $b=~exp(1+\sqrt{5})$ where $g(b) =
      %%% 0.\rounded{933}{15794} > 0$\dots)
      Thus if $B_{2}(p,x)$ is weakly decreasing and $C_{2}(p,x)$ is
      weakly increasing, we have that $h(p)$ is weakly increasing.  If
      on the other hand $B_{2}(p,x)$ is weakly increasing and
      $C_{2}(p,x)$ is weakly decreasing it follows that $h(p)$ is
      weakly decreasing.
    \end{proof}
  \end{fullversion}
  Clearly, the conditions of the lemma are not necessary.
\end{fullversion}%
\begin{fullversion}[\refstepcounter{equation}]%
  We can easily extended it, for example, as follows:
  \begin{lemma}
    \label{res:criterionMonotoneSplit}
    Assume $\RSASET$ is a graph-bounded notion of RSA integers with
    tolerance~$r$ given by continuously differentiable functions
    $\shortmap[B_{1}, C_{1}]{\R_{>1}}{\R_{>1}}$ and $\shortmap[B_{2},
    C_{2}]{\R_{>1}^{2}}{\R_{>1}}$.  Further, individually for each $x
    \in \R_{>1}$, the functions $B_{2}(p,x)$ and $C_{2}(p,x)$ are both
    weakly increasing in $p$ for $p \in \left]B_{1}(x),
      C_{1}(x)\right]$.
    %%% Additionally assume that we always have $B_{2}(p,x) \geq
    %%% A(x)$.
    %%% (For example, $A(x) = 1$ is always possible.)
    %%% Further, let $\widehat{E}(p)$ be the function $\widehat{E}(p)
    %%% =
    %%% \frac{1}{8\pi} \sqrt{p} ~ln p$, bounding $|\pi(p)-~li(p)|$ on
    %%% $[B_{1}(x), C_{1}(x)]$.
    Then there are two monotone notions $\RSASET^{1}$ and
    $\RSASET^{2}$ with tolerance~$r$, both having $\RSASET^{i}_{x}
    \subseteq \R_{\geq B_{1}(x)} \times \R_{\geq B_{2}(B_{1}(x),x)}$
    for all $x$, such that $\RSASET = \RSASET^{1} \setminus
    \RSASET^{2}$.
  \end{lemma}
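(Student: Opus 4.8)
The plan is to realize $\RSASET$ as the region squeezed between its lower graph $z = B_{2}(y,x)$ and its upper graph $z = C_{2}(y,x)$, and to produce it as the set difference of two regions that each reach from a common constant floor up to one of these two graphs. Writing $b(x) := B_{2}(B_{1}(x),x)$ for the minimal value of the lower boundary (minimal because $B_{2}(\cdot,x)$ is weakly increasing), I would define for every $x \in \R_{>1}$
$$
\RSASET^{1}_{x} := \Set{(y,z) \in \R_{>1}^{2}; B_{1}(x) < y \leq C_{1}(x),\ b(x) < z \leq C_{2}(y,x)}
$$
and
$$
\RSASET^{2}_{x} := \Set{(y,z) \in \R_{>1}^{2}; B_{1}(x) < y \leq C_{1}(x),\ b(x) < z \leq B_{2}(y,x)}.
$$
Since $b(x) \leq B_{2}(y,x) \leq C_{2}(y,x)$ throughout the relevant range of $y$, the two families share both their $y$\nobreakdash-range and their floor, so their slices are nested intervals and $\RSASET^{1}_{x} \setminus \RSASET^{2}_{x} = \Set{(y,z); B_{1}(x) < y \leq C_{1}(x),\ B_{2}(y,x) < z \leq C_{2}(y,x)} = \RSASET_{x}$, as required. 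Both families also satisfy the containment $\RSASET^{i}_{x} \subseteq \R_{\geq B_{1}(x)} \times \R_{\geq b(x)}$ demanded in the statement, directly from $y > B_{1}(x)$ and $z > b(x)$.

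Next I would check that $\RSASET^{1}$ and $\RSASET^{2}$ are genuine notions of RSA integers with tolerance~$r$, i.e.\ that the enlarged regions still lie inside $\Set{(y,z); x/r < yz \leq x}$. The upper estimate is inherited from $\RSASET$: for $z \leq C_{2}(y,x)$ we have $yz \leq y\,C_{2}(y,x) \leq x$, the last inequality holding because $(y,C_{2}(y,x)) \in \RSASET_{x}$. The lower estimate is where the choice of the floor pays off: on both enlarged regions $y > B_{1}(x)$ and $z > b(x)$, hence $yz > B_{1}(x)\,b(x)$. Letting $y \to B_{1}(x)^{+}$ and $z \to B_{2}(y,x)^{+}$ along points of $\RSASET_{x}$ (whose slices are nonempty on the open $y$\nobreakdash-interval by \ref{def:notion-graph-bounded}, and using continuity of $B_{2}$ to get $B_{2}(y,x) \to b(x)$) shows $B_{1}(x)\,b(x) \geq x/r$, so $yz > x/r$ everywhere. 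This is the step I expect to be the main obstacle: one must see that flattening the lower boundary down to its minimal value is exactly the enlargement that keeps all products above $x/r$, precisely because the infimal product over the enlarged region is still taken in the lower\nobreakdash-left corner $(B_{1}(x),b(x))$.

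Finally, monotonicity of both notions follows from the criterion in \ref{res:criterionMonotone}. For $\RSASET^{1}$ the lower boundary $(y,x)\mapsto b(x)$ is constant in $y$, hence weakly decreasing, while the upper boundary $C_{2}(y,x)$ is weakly increasing by hypothesis; for $\RSASET^{2}$ the lower boundary is again the constant $b(x)$ and the upper boundary $B_{2}(y,x)$ is weakly increasing. In both cases all four boundary functions are continuously differentiable ($b$ by the chain rule from $B_{1}$ and $B_{2}$), so the hypotheses of \ref{res:criterionMonotone} are met and both notions are monotone. The only delicate point is the strict inequality required for graph\nobreakdash-boundedness of $\RSASET^{2}$: wherever $B_{2}(\cdot,x)$ is flat at its minimum one has $b(x) = B_{2}(y,x)$ and the corresponding slice of $\RSASET^{2}$ is empty. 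This is harmless --- empty slices are explicitly admitted in \ref{def:notion-graph-bounded} --- and can be removed formally by restricting the $y$\nobreakdash-range of $\RSASET^{2}$ to where $B_{2}(y,x) > b(x)$, which leaves the set difference unchanged.
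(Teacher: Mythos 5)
Your proof is correct and takes essentially the same route as the paper's: the identical decomposition with the flattened constant floor $A(x) = B_{2}(B_{1}(x),x)$, the same tolerance check via the product at the lower-left corner, and the same appeal to \ref{res:criterionMonotone} with a constant lower boundary against a weakly increasing upper one. You are in fact slightly more careful than the paper on two minor points it glosses over --- the limiting argument establishing $B_{1}(x)\,A(x) \geq x/r$ (the paper asserts this with strict inequality without justification, though $\geq$ already suffices) and the degenerate flat-bottom slices of $\RSASET^{2}$ where graph-boundedness' strict inequality would fail.
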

  \begin{proof}
    %%% We have to struggle a little with the tolerance.
    Let $A(x) := B_{2}( B_{1}(x), x )$.  We define two $[c_{1},
    c_{2}]$-balanced graph-bounded notions $\RSASET^{1}$,
    $\RSASET^{2}$ of RSA integers by the following: the first notion
    $\RSASET^{1}$ is defined by the functions $B_{1}^{1} := B_{1}$,
    $C_{1}^{1} := C_{1}$, $B_{2}^{1}(p,x) := A(x)$ and $C_{2}^{1} :=
    C_{2}$.  The second notion $\RSASET^{2}$ is defined by the
    functions $B_{1}^{1} := B_{1}$, $C_{1}^{1} := C_{1}$,
    $B_{2}^{2}(p,x) := A(x)$ and $C_{2}^{2} := B_{2}$.  Since $x/r <
    B_{1}(x) B_{2}(B_{1}(x),x) = B_{1}(x) A(x)$ both new notions have
    tolerance~$r$ as well.  Then $\RSASET^{1}$, $\RSASET^{2}$ are by
    \ref{res:criterionMonotone} both monotone and $\RSASET =
    \RSASET^{1} \setminus \RSASET^{2}$.
  \end{proof}

  A similar result with $B_{2}$ and $C_{2}$ both weakly decreasing is
  more difficult to obtain while simultaneously retaining the
  tolerance.  A particularly difficult example is the maximal notion
  $\RSAMAXSET$ given by $\RSAMAXSET_{x} = \Set{(y,z) \in \R_{>1}^{2};
    \frac{x}{r} < y z \leq x \land y,z \geq x^{c_{1}}}$.
\end{fullversion}%
The following lemma covers all the estimation work.
\begin{fullversion}
  %%% \mtodo{MN2012/08/30: added sentence.}
  Notice that we could in
  principle obtain explicit values for the $\bigO{}$ constant
  based on \ref{res:recapprox} but the expressions are rather ugly.
\end{fullversion}
\vspace{-0.2cm}
\begin{lemma}[Two-dimensional prime sum approximation for monotone
  notions]
  \label{res:psalm23}
  \hskip2em\hskip0pt\linebreak[3]\hskip-2em plus 1em%
  Assume that we have a monotone $[c_{1},c_{2}]$-balanced notion
  $\RSASET$ of RSA integers with tolerance~$r$, where $0 < c_{1} \leq
  c_{2}$.  (The values $r$, $c_{1}$, $c_{2}$ are allowed to vary with
  $x$.)  Then under the Riemann hypothesis there is a value
  $\widetilde{a}(x) \in \left[ \frac{1}{4c_{2}^{2}},
    \frac{1}{4c_{1}^{2}} \right]$ such that
  $$
  \RSA(x) \in%
  \widetilde{a}(x) \cdot \frac{4 ~area(\RSASET_{x})}{~ln^{2} x} +
  \bigO{
    %%% \frac{1}
    {c_{1}^{-1}} {x^{\frac{3+c}{4}}} },
  $$
  where $c = ~max\left( 2c_{2}-1,
    %%% \frac{8 ~ln~ln x}{~ln x}-1,
    1-2c_{1} \right)$. \ifshortversion\qed\fi
\end{lemma}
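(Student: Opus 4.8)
The plan is to collapse the two-dimensional count onto the outer variable and then feed the result into the prime sum approximation~\ref{res:recapprox}. Since $\RSASET$ is graph-bounded, I first rewrite
$$
\RSA(x) = \primesum{\bigl(\pi(C_{2}(p,x)) - \pi(B_{2}(p,x))\bigr)}{p}{B_{1}(x)}{C_{1}(x)},
$$
so that for each prime $p$ the inner summand is $f(p) := \pi(C_{2}(p,x)) - \pi(B_{2}(p,x))$. Under the Riemann hypothesis the prime number theorem~\ref{pnt-schoenfeld} gives $\left|f(p) - \widetilde{f}(p)\right| \leq \widehat{f}(p)$ for the choices $\widetilde{f}(p) := \frint{\frac{1}{~ln q}}{q}{B_{2}(p,x)}{C_{2}(p,x)}$ (that is, $~li(C_{2}(p,x)) - ~li(B_{2}(p,x))$) and $\widehat{f}(p) := \widehat{E}(B_{2}(p,x)) + \widehat{E}(C_{2}(p,x))$, valid once the relevant arguments exceed $x^{c_{1}} \geq 2657$, which holds for all large~$x$. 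The point is that $(\widetilde{f}+\widehat{f})(p)$ is \emph{exactly} the function whose monotonicity is demanded in~\ref{def:notion-monotone}; hence the assumed monotonicity of the notion is precisely the hypothesis that~\ref{res:recapprox} needs for the outer sum over~$p$.

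Applying~\ref{res:recapprox} with $\intrangeleft = B_{1}(x)$ and $\intrangeright = C_{1}(x)$ yields $\left|\RSA(x) - \widetilde{g}\right| \leq \widehat{g}$, where the main term telescopes, via Fubini and the graph-bounded description of $\RSASET_{x}$, into the double integral
$$
\widetilde{g} = \frint{\frac{\widetilde{f}(p)}{~ln p}}{p}{B_{1}(x)}{C_{1}(x)} = \friint{\frac{1}{~ln y ~ln z}}{y}{z}{\RSASET_{x}}{}.
$$
To bring this into the claimed shape I use $[c_{1},c_{2}]$-balancedness: on $\RSASET_{x}$ one has $~ln y, ~ln z \in [c_{1}~ln x, c_{2}~ln x]$, so $\frac{1}{c_{2}^{2}~ln^{2}x} \leq \frac{1}{~ln y ~ln z} \leq \frac{1}{c_{1}^{2}~ln^{2}x}$ pointwise. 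Integrating over $\RSASET_{x}$ and dividing by $~area(\RSASET_{x})$ shows that the average of the integrand, scaled by $\frac{1}{4}~ln^{2}x$, equals a number $\widetilde{a}(x) \in \left[\frac{1}{4c_{2}^{2}}, \frac{1}{4c_{1}^{2}}\right]$ with $\widetilde{g} = \widetilde{a}(x)\cdot \frac{4~area(\RSASET_{x})}{~ln^{2}x}$. This is precisely the stated main term.

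It remains to estimate the error $\widehat{g}$, and here lies the bulk of the work. Each of its four constituents---the integral $\frint{\frac{\widehat{f}(p)}{~ln p}}{p}{B_{1}(x)}{C_{1}(x)}$, the two boundary contributions $2(\widetilde{f}+\widehat{f})(\cdot)\widehat{E}(\cdot)$, and the integral $\frint{(\widetilde{f}+\widehat{f})(p)\widehat{E}'(p)}{p}{B_{1}(x)}{C_{1}(x)}$---I bound using $\widehat{E}(p) = \frac{1}{8\pi}\sqrt{p}~ln p$ and $\widehat{E}'(p) = \frac{~ln p + 2}{16\pi\sqrt{p}}$, together with the a priori bounds $x^{c_{1}} \leq y,z \leq x^{c_{2}}$ and the width estimate $C_{2}(p,x) - B_{2}(p,x) \leq x^{c_{2}}$ forced by balancedness and the tolerance constraint $yz \leq x$. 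Using $~ln p \geq c_{1}~ln x$ to produce the single factor $c_{1}^{-1}$ and $\sqrt{p} \leq x^{c_{2}/2}$ at the upper end, every constituent comes out as $\bigO{c_{1}^{-1}x^{3c_{2}/2}}$; since $\frac{3c_{2}}{2} \leq \frac{3+c}{4}$ whenever $c_{1},c_{2} \leq \frac{1}{2}$, this is absorbed into the claimed $\bigO{c_{1}^{-1}x^{(3+c)/4}}$ with $c = ~max(2c_{2}-1, 1-2c_{1})$, the maximum serving to make a single clean exponent dominate the contributions from both the upper ($y,z \approx x^{c_{2}}$) and lower ($y,z \approx x^{c_{1}}$) ranges as well as a possibly varying~$r$. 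The main obstacle is exactly this uniform bookkeeping: keeping the dependence on $c_{1}$, $c_{2}$ (and on $r$) explicit so that the comparatively small error provably stays below the stated threshold while the area-term dominates.
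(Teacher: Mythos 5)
Your architecture is the paper's: the inner prime count is replaced by its $\mathrm{li}$-approximation (the paper obtains the identical estimate by invoking the constant special case of \ref{res:recapprox} with $f=1$), you correctly identify that the monotonicity demanded in \ref{def:notion-monotone} is \emph{exactly} the hypothesis ``$\widetilde{f}+\widehat{f}$ monotone'' needed to apply \ref{res:recapprox} to the outer sum, and your derivation of the main term and of $\widetilde{a}(x)\in\left[\frac{1}{4c_{2}^{2}},\frac{1}{4c_{1}^{2}}\right]$ from $\ln y,\ln z\in[c_{1},c_{2}]\ln x$ is the paper's argument verbatim. The divergence --- and the genuine gap --- is in the error bookkeeping. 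You bound every constituent of $\widehat{g}$ using only balancedness ($B_{2},C_{2}\leq x^{c_{2}}$ and outer range of length at most $x^{c_{2}}$), arriving at $\bigO{c_{1}^{-1}x^{3c_{2}/2}}$, and then absorb this via $\frac{3c_{2}}{2}\leq\frac{3+c}{4}$, which as you note requires $c_{2}\leq\frac{1}{2}$. But the lemma assumes no such bound, and $c_{2}>\frac{1}{2}$ in essentially every application the paper makes of it: Theorems \ref{res:order-rsa1}, \ref{res:order-rsa3} and \ref{res:order-rsa2} all have $c_{2}=\frac{1}{2}+\Theta\left(\frac{\ln r}{\ln x}\right)$ (for \ref{res:order-rsa1} the stated error $\bigO{x^{3/4}r^{1/2}}$ is exactly $x^{(3+c)/4}$ with $c=\frac{2\ln r}{\ln x}$, whereas your bound only yields $x^{3/4}r^{3/4}$), and the maximal notion behind \ref{res:order-rsamax} and the reduction \ref{res:RSAalgo-difficult} has $c_{2}=1-c_{1}$, where $x^{3c_{2}/2}$ can exceed $x$ itself, making your estimate vacuous while the claimed $x^{1-c_{1}/2}$ remains nontrivial.

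The missing ingredient is the tolerance constraint beyond its use for balancedness: from $yz\leq x$ on $\RSASET_{x}$ one gets $C_{2}(p,x)\leq x/p$ and $B_{1}(x)\,C_{2}(B_{1}(x),x)\leq x$ (\ref{eq:BCproducts} in the paper), and the paper threads these through each error constituent. For instance $\sqrt{C_{2}(p,x)}\,\ln C_{2}(p,x)\leq\sqrt{x/p}\,\ln(x/p)$, whose integral over $p$ up to $x^{c_{2}}$ gives $\bigO{c_{1}^{-1}x^{\frac{1+c_{2}}{2}}}$, while the $\widehat{E}'$-weighted double integral is bounded using $\int_{B_{2}}^{C_{2}}dq\leq x/p$ by $\frac{x}{c_{1}\ln x}\int\frac{\ln p}{p^{3/2}}\,dp\in\bigO{c_{1}^{-1}x^{1-\frac{c_{1}}{2}}}$, dominated by the \emph{lower} end $p\approx x^{c_{1}}$. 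These two exponents are precisely $\frac{3+(2c_{2}-1)}{4}$ and $\frac{3+(1-2c_{1})}{4}$, which is where the two branches of $c=\max(2c_{2}-1,\,1-2c_{1})$ come from; your uniform bound, which never couples the inner range to $p$, cannot produce either branch. A secondary point: even in your restricted regime, obtaining $\bigO{c_{1}^{-1}x^{3c_{2}/2}}$ for the constituent $\int(\widetilde{f}+\widehat{f})(p)\,\widehat{E}'(p)\,dp$ requires telescoping $\int\widehat{E}'=\widehat{E}(C_{1})-\widehat{E}(B_{1})$ rather than bounding $\widehat{E}'$ by its supremum times the interval length (that cruder route only gives $x^{2c_{2}-c_{1}/2}$, which is worse than $x^{3c_{2}/2}$ whenever $c_{1}<c_{2}$); this should be made explicit.
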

%%% \begin{shortversion}
%%%   \begin{proof}
%%%     The proof will appear in the extended version of this work.
%%%   \end{proof}
%%% \end{shortversion}
Note that the
\begin{shortversion}[following ]%
  omitted%
\end{shortversion}
proof gives a precise expression for $\widetilde{a}(x)$, namely
$$
\widetilde{a}(x)
%%% = \frac{~ln^{2}x}{4 ~area(\RSASET_{x})} \iint_{\RSASET_{x}}
%%% \frac{1}{~ln p ~ln q} \differential{p} \differential{q}
= \frac{ \iint_{\RSASET_{x}} \frac{1}{~ln p ~ln q} \differential{p}
  \differential{q} }{ 4 \iint_{\RSASET_{x}} \frac{1}{~ln^{2} x}
  \differential{p} \differential{q} } .
$$
It turns out that we can only evaluate $\widetilde{a}(x)$ numerically
in our case and so we tend to estimate also this term.  Then we often
obtain $\widetilde{a}(x) \in 1+o(1)$.  Admittedly, this mostly eats up
the advantage obtained by using the Riemann hypothesis.  However, we
accept this because it still leaves the option of going through that
difficult evaluation and obtain a much more precise answer.  If we do
not use the Riemann hypothesis we need to replace $\bigO{c_{1}^{-1}
  x^{\frac{3+c}{4}}}$ with $\bigO{\frac{x}{~ln^{k} x}}$ for any $k>2$
of your choice.

\begin{fullversion}[\refstepcounter{equation}\refstepcounter{equation}\refstepcounter{equation}\refstepcounter{equation}]
  \begin{proof}
    %%% \todo{Fix: $\widetilde{a} \mapsto \frac14 \widetilde{a}$}
    Fix any $x\in\R_{>1}$.  In case $~area( \RSASET_{x} ) = 0$ the
    claim holds with any desired $\widetilde{a}(x)$ and zero big-Oh
    term.  We can thus assume that the area is positive.  As the
    statement is asymptotic and $x^{c_{1}}$ tends to $\infty$ with $x$
    %%% $x^{c_{1}} \xrightarrow{x\to\infty} \infty$
    we can further assume that $x^{c_{1}} \geq 2657$.
    %%% 
    Abbreviating $\widetilde{h}(x) = \frac{4 ~area(\RSASET_{x})}
    {~ln^{2} x} $, we prove that there exists a value
    $\widetilde{a}(x) \in \left[ \frac{1}{4 c_{2}^{2}}, \frac{1}{4
        c_{1}^{2}} \right]$ such that
    $$
    \left| \RSA(x) \quad-\quad \widetilde{a}(x) \cdot \widetilde{h}(x)
    \right| \leq \widehat{h}(x)
    $$
    with
    \begin{align*}
      %%% \widehat{h}(x) &= \frac{4 + 2 ~ln c + 3 ~ln x}{~ln x - 2 ~ln
      %%%   c} x^{\frac{3}{4}} \sqrt{c} + 2 \sqrt{x} \frac{~ln^{3}
      %%%   x}{~ln x -
      %%%   2
      %%%   ~ln c}
      \widehat{h}(x) &=
      %%% \frac{1}{c_{1}} \left(3 + \frac{4}{~ln
      %%%     x}\right)
      %%% \cdot \left( x^{\frac{2-c_{1}}{2}} + x^{\frac{1+c_{2}}{2}}
      %%% \right) +
      %%% \sqrt{x} ~ln^{2} x
      \frac{1}{4\pi c_{1}} \left( 7-6c_{2}+\frac{12}{~ln x} \right)
      x^{\frac{1+c_{2}}{2}} + \frac{1}{8\pi^{2}} \cdot
      x^{\frac{1}{2}+\frac{2 ~ln~ln x}{~ln x}} + \frac{1}{4\pi c_{1}}
      \left( 1+\frac{4}{~ln x} \right) x^{1-\frac{c_{1}}{2}}.
    \end{align*}
    This is slightly more precise and implies the claim.
%    \endnote{%
%      Due to the condition $x^{c_{1}} \geq 2657$ we have
%      $\frac{1}{c_{1}} \leq \frac{~ln x}{~ln 2657}$ or $\frac{1}{~ln
%        x} \leq \frac{c_{1}}{~ln 2657} \leq \frac{1}{2 ~ln 2657} \leq
%      \frac{1}{16}$.  Thus
%      \begin{align*}
%        \widehat{h}(x) &\leq 0.0809 x^{\frac{1+c_{2}}{2}} ~ln x +
%        0.0127 \cdot x^{\frac{1}{2}} ~ln^{2} x + 0.0998
%        x^{1-\frac{c_{1}}{2}} ~ln x
%        \\
%        &\leq 0.1807 x^{\frac{3+c}{4}} ~ln x + 0.0127 \cdot
%        x^{\frac{2+0c}{4}} ~ln^{2} x.
%        \\
%        &\leq 0.1946 x^{\frac{3+c}{4}} ~ln x.
%      \end{align*}
%    }

    Since the given notion is $[c_{1},c_{2}]$-balanced with
    tolerance~$r$ for any $(y,z) \in \RSASET_{x}$ we have $\frac{x}{r}
    \leq y z \leq x$ and $y, z \in [x^{c_{1}}, x^{c_{2}}]$ which
    implies $~ln y, ~ln z \in [c_{1}, c_{2}] ~ln x$.
    %%% Equivalently, for all $x \in \R_{>1}$ and
    %%% $y \in \left[B_{1}(x), C_{1}(x)\right]$ we have $y \in
    %%% [c_{1},c_{2}] \sqrt{x}$ and $\left[B_{2}(y,x),
    %%%   C_{2}(y, x)\right] \subseteq \left[\frac{x}{r y},
    %%%   \frac{x}{y}\right] \cap [c_{1},c_{2}] \sqrt{x}$.
    Equivalently, we have
    \begin{align}
      \label{eq:B1C1balance}
      x^{c_{1}} \leq B_{1}(x) &\leq C_{1}(x) \leq x^{c_{2}}
    \end{align}
    and for $y \in \left]B_{1}(x), C_{1}(x)\right[$ we have
    \begin{align}
      \label{eq:BCtolerance}
      \frac{x}{r y} \leq B_{2}(y,x) < C_{2}(y, x) \leq \frac{x}{y}
    \end{align}
    and
    \begin{align}
      \label{eq:B2C2balance}
      x^{c_{1}} \leq B_{2}(y,x) < C_{2}(y, x) \leq x^{c_{2}}.
    \end{align}
    From \ref{eq:BCtolerance} we infer that for all $y \in
    \left]B_{1}(x), C_{1}(x)\right[$ we have
    \begin{align}
      \label{eq:BCproducts}
      \frac{x}{r} \leq y B_{2}(y,x) \leq
      x\quad\text{and}\quad\frac{x}{r} \leq y C_{2}(y,x) \leq x.
    \end{align}
    In order to estimate
    $$
    \RSA(x) \quad=\qquad \primesum{\quad \primesum{\quad
        1}{q}{B_{2}(p,x)}{C_{2}(p,x)}}{p}{B_{1}(x)}{C_{1}(x)},
    $$
    we apply \ref{res:recapprox} twice.  Since $x^{c_{1}} \geq 2657$
    and so $B_{2}(p,x) \geq 2657$ for the considered $p$ we obtain for
    the inner sum
    \begin{equation*}
      \Biggl|
      \primesum{1}{q}{B_{2}(p,x)}{C_{2}(p,x)}\quad
      -
      \quad\widetilde{g}_{1}(p,x)\quad
      \Biggr|
      \;\leq\;
      \widehat{g}_{1}(p,x),
    \end{equation*}
    where
    \begin{align*}
      \widetilde{g}_{1}(p,x) &= \frint{ \frac{1}{~ln q}
      }{q}{B_{2}(p,x)}{C_{2}(p,x)},
      \\
      \widehat{g}_{1}(p,x) &= \widehat{E}(B_{2}(p,x)) +
      \widehat{E}(C_{2}(p,x)),
    \end{align*}
    since we can use the special case of constant functions in
    \ref{res:recapprox}.  Because we are working under the restriction
    that the notion is monotone, i.e.\ $\widetilde{g}_{1}(p,x) +
    \widehat{g}_{1}(p,x)$ is monotone, we are able to apply the lemma
    a second time.  Since $x^{c_{1}} \geq 2657$ and so $B_{1}(x) \geq
    2657$ we obtain
    \begin{equation*}
      \Biggl|
      \primesum{
        \primesum{1}{q}{B_{2}(p,x)}{C_{2}(p,x)}
      }{p}{B_{1}(x)}{C_{1}(x)}\quad
      -
      \quad\widetilde{g}_{2}(x)\quad
      \Biggr|
      \;\leq\;
      \widehat{g}_{2}(x),
    \end{equation*}
    where
    \begin{footnotesize}
      \begin{align*}
        \widetilde{g}_{2}(x) &= \frint{\frint{\frac{1}{~ln p ~ln
              q}}{q}{B_{2}(p,x)}{C_{2}(p,x)}}{p}{B_{1}(x)}{C_{1}(x)},
        \\
        \widehat{g}_{2}(x) &=
        \begin{aligned}[t]
          & \frac{1}{8 \pi} \preint{p}{B_{1}(x)}{C_{1}(x)}
          \left(\sqrt{B_{2}(p,x)} ~ln B_{2}(p,x) + \sqrt{C_{2}(p,x)}
            ~ln C_{2}(p,x) \right) \cdot \left( \frac{1}{~ln p} +
            \frac{~ln p + 2}{2 \sqrt{p}} \right) \postint
          \\
          &+ \frac{1}{4 \pi} \sqrt{B_{1}(x)} ~ln B_{1}(x)
          \preint{q}{B_{2}(B_{1}(x),x)}{C_{2}(B_{1}(x),x)}
          \frac{1}{~ln q} \postint
          \\
          &+ \frac{1}{4 \pi} \sqrt{C_{1}(x)} ~ln C_{1}(x)
          \preint{q}{B_{2}(C_{1}(x),x)}{C_{2}(C_{1}(x),x)}
          \frac{1}{~ln q} \postint
          \\
          &+ \frac{1}{32 \pi^{2}} \sqrt{B_{1}(x)} ~ln B_{1}(x) \left(
            \sqrt{B_{2}(B_{1}(x),x)} ~ln
            \left(B_{2}(B_{1}(x),x)\right) + \sqrt{C_{2}(B_{1}(x),x)}
            ~ln \left(C_{2}(B_{1}(x),x)\right) \right)
          \\
          &+ \frac{1}{32 \pi^{2}} \sqrt{C_{1}(x)} ~ln C_{1}(x) \left(
            \sqrt{B_{2}(C_{1}(x),x)} ~ln
            \left(B_{2}(C_{1}(x),x)\right) + \sqrt{C_{2}(C_{1}(x),x)}
            ~ln \left(C_{2}(C_{1}(x),x)\right) \right)
          \\
          &+ \frac{1}{8 \pi} \preint{p}{B_{1}(x)}{C_{1}(x)}
          \preint{q}{B_{2}(p,x)}{C_{2}(p,x)} \frac{~ln p+2}{2\sqrt{p}
            ~ln q} \postint \postint.
        \end{aligned}
      \end{align*}
    \end{footnotesize}
    It remains to estimate $\widetilde{g}_{2}(x)$ and
    $\widehat{g}_{2}(x)$ suitably sharply.

    For $(p,q) \in \RSASET_{x}$ we frequently use the estimate $~ln p,
    ~ln q \in [c_{1}, c_{2}] ~ln x$.  For the main term we obtain
    $$
    \widetilde{g}_{2}(x) \in \left[ \frac{1}{4 c_{2}^{2}}, \frac{1}{4
        c_{1}^{2}}\right] \frac{4 ~area(\RSASET_{x})}{~ln^{2} x} .
    $$
    We also read off the exact expression $\widetilde{a}(x) =
    \frac{~ln^{2}x}{4 ~area(\RSASET_{x})} \widetilde{g}_{2}(x)$.
    
    \noindent We treat the error term $\widehat{g}_{2}(x)$ part by part.  For
    the first term we obtain
    \begin{align*}
      \frac{1}{8 \pi} \preint{p}{B_{1}(x)}{C_{1}(x)}
      &\left(\sqrt{B_{2}(p,x)} ~ln B_{2}(p,x) + \sqrt{C_{2}(p,x)} ~ln
        C_{2}(p,x) \right) \cdot \left( \frac{1}{~ln p} + \frac{~ln p
          + 2}{2 \sqrt{p}} \right) \postint
      \\
      &\leq \frac{1}{4 \pi} \preint{p}{x^{c_{1}}}{x^{c_{2}}}
      \sqrt{\frac{x}{p}} ~ln\left(\frac{x}{p}\right) \cdot
      \frac{3}{~ln p} \postint
      \\
      &\leq \frac{3}{4\pi} \frac{1}{c_{1} ~ln x}
      \preint{p}{x^{c_{1}}}{x^{c_{2}}} \sqrt{\frac{x}{p}}
      ~ln\left(\frac{x}{p}\right) \postint
      \\
      &\leq \frac{3}{2\pi} \frac{1}{c_{1}} \left( 1-c_{2} +
        \frac{2}{~ln x} \right) x^{\frac{1+c_{2}}{2}}
      %%% \\
      %%% &\leq \frac{3}{2\pi} \frac{1}{c_{1}} \left( 3 + \frac{4}{~ln
      %%%     x}
      %%% \right) x^{\frac{1+c_{2}}{2}}
      %%% \\
      %%% &
      \in \bigO{c_1^{-1} x^{\frac{1+c_{2}}{2}} },
    \end{align*}
    where we used in the second line that $\frac{~ln p + 2}{2
      \sqrt{p}} \leq \frac{2}{~ln p}$ for all $p \geq 2$. Basic
    calculus shows that $\frac{~ln p (~ln p + 2)}{2 \sqrt{p}}$ is
    maximal at $p=~exp(\sqrt{5}+1)$, where it is less than $1.68$.
    For the fourth line note that
    $$
    \frint{\sqrt{\frac{x}{p}} ~ln\left(\frac{x}{p}\right)}{p}{}{} = 2p
    \sqrt{\frac{x}{p}} \left(~ln\left(\frac{x}{p}\right) + 2\right).
    $$
    The definite integral is not greater than this function evaluated
    at $p = x^{c_{2}}$ since $c_{1} \leq \frac{1}{2}$. Using $c_{2}
    \geq 0$ gives the claim.

    The second term yields
    \begin{align*}
      \frac{1}{8 \pi} \sqrt{B_{1}(x)} & ~ln B_{1}(x)
      \preint{q}{B_{2}(B_{1}(x),x)}{C_{2}(B_{1}(x),x)} \frac{1}{~ln q}
      \postint
      \\
      &\leq \frac{1}{8 \pi c_{1} ~ln x} \sqrt{B_{1}(x)}
      C_{2}(B_{1}(x),x) ~ln B_{1}(x)
      \\
      &\leq \frac{1}{8 \pi c_{1}} x^{\frac{1+c_{2}}{2}}
      %%% \\
      %%% &
      \in \bigO{c_1^{-1} x^{\frac{1+c_{2}}{2}}},
    \end{align*}
    since %%% by \ref{eq:BCproducts} and \ref{eq:B2C2balance}
    we have $\sqrt{B_{1}(x) C_{2}(B_{1}(x),x)}
    \sqrt{C_{2}(B_{1}(x),x)} \leq x^{\frac{1+c_{2}}{2}}$ and $~ln
    B_{1}(x) \leq ~ln x$.

    Similarly we obtain for the third term
    \begin{align*}
      \frac{1}{8 \pi} \sqrt{C_{1}(x)} & ~ln C_{1}(x)
      \preint{q}{B_{2}(C_{1}(x),x)}{C_{2}(C_{1}(x),x)} \frac{1}{~ln q}
      \postint
      \\
      &\leq \frac{1}{8\pi c_{1}} x^{\frac{1+c_{2}}{2}} \in \bigO{
        c_1^{-1} x^{\frac{1+c_{2}}{2}} },
    \end{align*}
    using $\sqrt{C_{1}(x) C_{2}(C_{1}(x),x)} \sqrt{C_{2}(C_{1}(x),x)}
    \leq x^{\frac{1+c_{2}}{2}}$ and $~ln C_{1}(x) \leq ~ln x$.
    
    The fourth term yields
    \begin{align*}
      \frac{1}{32 \pi^{2}} \sqrt{B_{1}(x)} ~ln B_{1}(x) & \left(
        \sqrt{B_{2}(B_{1}(x),x)} ~ln B_{2}(B_{1}(x),x) +
        \sqrt{C_{2}(B_{1}(x),x)} ~ln C_{2}(B_{1}(x),x) \right)
      \\
      &\leq \frac{1}{16 \pi^{2}} \sqrt{x} ~ln^{2} x
      %%% \\
      %%% &
      \in \bigO{ x^{\frac{1+c_{2}}{2}} },
    \end{align*}
    where we used \ref{eq:BCproducts} and the (very weak) bound $~ln
    B_{1}(x), ~ln C_{2}(p,x) \leq ~ln x$.
    %%% 
    The fifth term can be treated similarly.
    We finish by observing for the sixth term
    \begin{align*}
      \frac{1}{8 \pi} \preint{p}{B_{1}(x)}{C_{1}(x)}
      \preint{q}{B_{2}(p,x)}{C_{2}(p,x)} &\frac{~ln p+2}{2\sqrt{p} ~ln
        q} \postint \postint
      \\
      &\leq \frac{1}{8 \pi} \frac{1}{c_{1} ~ln x}
      \preint{p}{B_{1}(x)}{C_{1}(x)}
      \preint{q}{B_{2}(p,x)}{C_{2}(p,x)} \frac{~ln p}{\sqrt{p}}
      \postint \postint
      \\
      &\leq \frac{1}{8 \pi} \frac{1}{c_{1} ~ln x}
      \preint{p}{x^{c_{1}}}{x^{c_{2}}} \frac{~ln p}{\sqrt{p}}
      \preint{q}{0}{\frac{x}{p}}{} \postint \postint
      \\
      &\leq \frac{1}{8 \pi} \frac{1}{c_{1} ~ln x} \cdot x \cdot
      \preint{p}{x^{c_{1}}}{x^{c_{2}}} \frac{~ln p}{p^{3/2}} \postint
      \\
      &\leq \frac{1}{4 \pi} \frac{1}{c_{1}} \left( 1 + \frac{4}{ ~ln
          x} \right) x^{1-\frac{c_{1}}{2}}
      \\
      & \in \bigO{ c_1^{-1} x^{1-\frac{c_{1}}{2}} }
    \end{align*}
    using $B_{1}(x) \geq x^{c_{1}}$, $c_{1} \leq \frac{1}{2}$, and
    $$\frint{\frac{~ln p}{p^{3/2}}}{p}{}{} = \frac{-2 (~ln p + 2)}{\sqrt{p}}.$$

    This completes the proof.
  \end{proof}
  In specific situations one may obtain better estimates.  In
  particular, when we substitute $C_{2}(p,x)$ by $x/p$ in the
  estimation of the sixth summand of the error we may loose much.

  Of course we can generalize this lemma to notions composed of few
  monotone ones.  We leave the details to the reader.
\end{fullversion}
As mentioned before, in many standards the selection of the primes $p$
and $q$ is additionally subject to the side condition that
$~gcd((p-1)(q-1), e) = 1$ for some fixed public exponent $e$ of the
RSA cryptosystem. To handle these restrictions,
we \begin{shortversion}[prove]state a theorem from the extended
  version \end{shortversion}
\begin{theorem}\label{res:coprimalityE}
  Let $e \in \N_{> 2}$ be a public RSA exponent and $x \in \R$. Then
  under the Extended Riemann Hypothesis we have for the number
  $\pi_{e}(x)$ of primes $p \leq x$ with $~gcd(p-1,e) = 1$ that
  $$
  \pi_{e}(x) \in \frac{\phi_1(e)}{\phi(e)} \cdot ~li(x) +
  \bigO{\sqrt{x} ~ln x},
  $$
\end{theorem}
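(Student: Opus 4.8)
The plan is to reduce the statement to the prime number theorem for arithmetic progressions under the Extended Riemann Hypothesis. The starting point is that the divisibility condition depends only on $p$ modulo $e$: since $~gcd(p-1,e) = ~gcd((p \bmod e)-1,e)$, the primes counted by $\pi_{e}(x)$ are governed by the set of admissible residues
$$
S := \Set{a \bmod e; ~gcd(a,e)=1 \text{ and } ~gcd(a-1,e)=1}.
$$
Here the first condition $~gcd(a,e)=1$ is included because a residue class with $~gcd(a,e)\neq 1$ can contain at most the finitely many primes dividing $e$; these contribute only $\bigO{1}$ for fixed $e$ and will be swept into the error. The second condition is precisely $~gcd(a-1,e)=1$. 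Thus every prime $p\leq x$ with $~gcd(p-1,e)=1$ either divides $e$ or lands in one of the classes of $S$.

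Next I would split the count by residue class. Writing $\pi(x;e,a)$ for the number of primes $p\leq x$ with $p\equiv a\pmod e$, this gives
$$
\pi_{e}(x) = \sum_{a \in S} \pi(x;e,a) + \bigO{1}.
$$
Now I would invoke the conditional prime number theorem for arithmetic progressions: under the Extended Riemann Hypothesis, for every $a$ with $~gcd(a,e)=1$ and fixed $e$,
$$
\pi(x;e,a) = \frac{~li(x)}{\phi(e)} + \bigO{\sqrt{x}~ln x},
$$
which is the analogue for Dirichlet $L$\=functions of the explicit bound quoted for $\pi(x)$ above. Summing over the $\left|S\right|$ classes in $S$, and absorbing both the $\left|S\right| = \bigO{1}$ error contributions and the $\bigO{1}$ primes dividing $e$, yields
$$
\pi_{e}(x) = \frac{\left|S\right|}{\phi(e)}~li(x) + \bigO{\sqrt{x}~ln x}.
$$

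It then remains to identify $\left|S\right|$ with $\phi_{1}(e)$. By the Chinese remainder theorem the counting function $e \mapsto \left|S\right|$ is multiplicative, and for a prime power $\ell^{k}\| e$ the admissible residues are exactly those avoiding both $0$ and $1$ modulo $\ell$, of which there are $\ell^{k-1}(\ell-2)$. Multiplying over the prime powers dividing $e$ reproduces $\phi_{1}(e)$, matching the claimed main term.

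The only substantive input is the $\bigO{\sqrt{x}~ln x}$ error term in the progression PNT, which I would cite rather than reprove; it comes from the explicit formula for $\psi(x;e,a)$ together with the zero information supplied by the Extended Riemann Hypothesis, exactly paralleling the RH\=conditional bound on $\left|\pi(x)-~li(x)\right|$ used elsewhere in the toolbox. The remaining work is pure bookkeeping, and the one place to be careful is the correct exclusion of residues with $~gcd(a-1,e)=1$ but $~gcd(a,e)\neq 1$, ensuring they supply only primes dividing $e$. Since $e$ is treated as a constant, summing over the $\bigO{1}$ relevant progressions keeps the total error at $\bigO{\sqrt{x}~ln x}$, so this is not expected to be a genuine obstacle.
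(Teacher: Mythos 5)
Your proposal is correct and follows essentially the same route as the paper's proof: reduce to primes in arithmetic progressions via the ERH\=conditional Dirichlet bound (the paper cites \citeauthor{oes79}'s explicit estimate $\left|\pi_{e;a}(x) - \frac{1}{\phi(e)}~li(x)\right| \leq \sqrt{x}(~ln x + 2 ~ln e)$), and identify the number of admissible residues with $\phi_{1}(e)$ by a Chinese-remainder count of $(\ell-2)\ell^{k-1}$ residues per prime power, exactly as in the paper's computation of $\#(\Z_e^\times \cap (1+\Z_e^\times))$. Your explicit bookkeeping for the finitely many primes dividing $e$ is a small point the paper passes over silently, but it changes nothing of substance.
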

where $~li(x) = \preint{t}{0}{x} \frac{1}{~ln t} \postint$ is the
integral logarithm, $\phi(e)$ is Euler's totient function and
\begin{equation}\label{def:phi1}
  \frac{\phi_1(e)}{\phi(e)} = \prod_{\substack{l \mid e\\l \text{ prime}}} \left(1 - \frac{1}{l-1}\right).
\end{equation}
\ifshortversion\expandafter\qed\fi
\begin{fullversion}
  \begin{proof}
    We first show that the number of elements in $\Z_e^\times \cap
    (1+\Z_e^\times)$ is exactly $\phi_1(e)$.  Write $e =
    \prod_{\substack{l \mid e\\ l \text{ prime}}} l^{f(l)}$. Observe
    that by the Chinese Remainder Theorem we have
    $$
    \Z_e^\times \cap (1+\Z_e^\times) = \bigoplus_{\substack{l \mid
        e\\l \text{ prime}}} \left( \Z_{l^{f(l)}}^\times \cap (1 +
      \Z_{l^{f(l)}}^\times) \right)
    $$
    and each factor in this expression has size $(l-2) l^{f(l) -
      1}$. Multiplying up all factors gives
    $$
    \#(\Z_e^\times \cap (1+\Z_e^\times)) = \prod_{\substack{l \mid
        e\\l \text{ prime}}} \left(1 - \frac{1}{l-1}\right) \left(1 -
      \frac{1}{l}\right) l^{f(l)} = \phi_1(e).
    $$
    %%% Now any prime that does not fulfill the above condition has
    %%% the property that some prime $l$ dividing $e$ also divides
    %%% $p-1$. By inclusion/exclusion the probability (for a randomly
    %%% selected prime) that this happens is
    %%% $$
    %%% \psi(e) = \sum_{1 \leq k \leq m} (-1)^{k-1}
    %%% \sum_{\begin{array}{c} l_1, \ldots, l_k \mid e\\
    %%%     l_1 \neq \dots \neq l_k \end{array}} \frac{1}{(l_1-1)
    %%%   \cdots (l_k -1)}
    %%% $$
    %%% where $m$ is the number of primes dividing $e$. We now prove
    %%% that $\psi(e) = 1- \phi_1(e)$ by induction on the number of
    %%% different primes dividing $e$. For a prime number $e$ the
    %%% claim is true, since $1-\frac{1}{l_1 - 1} = \frac{l_1 - 2}{l_1
    %%%   - 1}$. Now assume that for any integer $e'$ that is
    %%% divisible by exactly $k-1$ different primes $l_1, \ldots,
    %%% l_{k-1}$ only the claim holds. The inclusion/exclusion
    %%% principle says that
    %%% $$
    %%% 1-\psi(e) = 1-\psi(e') \left( 1 - \frac{1}{l_k-1} \right) -
    %%% \frac{1}{l_k-1} = \frac{l_k-2}{l_k-1} \left( 1-\psi(e')
    %%% \right) = \frac{(l_1-2) \cdots (l_k - 2)}{(l_1-1) \cdots (l_k
    %%%   - 1)} = \phi_1(e).
    %%% $$
    To show the result for $\pi_{e}(x)$ 
    %%%
    %%% \mtodo{MN2012/08/30: We need a reference here. DL:
    %%%   Done. Bach\&Shallit give the above theorem an point to the cited
    %%%   reference. This was actually once already in here... MN: Is this
    %%%   \cite{bacsha96}? Check please and give precise position.
    %%%   Reformulated slightly.}
    %%% 
    note that \citet{oes79} implies the following quantitative version
    of \citeauthor{dir37}'s theorem on the number $\pi_{e; a}(x)$ of
    primes $p \leq x$ with $p \equiv a \in \Z_e$ when $~gcd(a,e) = 1$
    under the Extended Riemann Hypothesis
    $$
    \left| \pi_{e; a}(x) - \frac{1}{\phi(e)} \cdot ~li(x) \right| \leq \sqrt{x}
      (~ln x + 2~ln e).
    $$
    This is also documented in \citet[Theorem 8.8.18]{bacsha96}.  We
    now have to sum over $\phi_1(e)$ residue classes and so obtain
    $$
    \pi_{e}(x) \in \frac{\phi_1(e)}{\phi(e)} \cdot ~li(x) +
    \bigO{\phi_1(e) \sqrt{x} ~ln x},
    $$
    which proves the claim.
  \end{proof}
\end{fullversion}%
This theorem shows that the prime pair approximation in
\ref{res:psalm23} can be easily adapted to RSA integers whose prime
factors satisfy the conditions of \ref{res:coprimalityE} (when
assuming the Extended Riemann Hypothesis), since the density of such
primes differs for every fixed $e$ essentially just by a
multiplicative constant compared to the density of arbitrary primes.

\section{Some common definitions for RSA integers}
\label{sec:NotionsInParticular}

We will now give formal definitions of \begin{shortversion}[three
  ]two\end{shortversion} specific notions of RSA integers.  In
particular, we consider the following example definitions within our
framework:
\begin{itemize}
  \begin{fullversion}
  \item The number theoretically inspired notion following
    \citeauthor{decmor08}.  Note that this occurs in no standard and
    no implementation.
  \end{fullversion}
\item The simple construction given by just choosing two primes in
  given intervals. This construction occurs in several standards, like
  the standard of the RSA foundation \citep{rsa00}, the standard
  resulting from the European NESSIE project \citep{prebir03} and the
  FIPS 186-3 standard \citep{fips1863}.  Also open source
  implementations of \texttt{OpenSSL} \citep{openssl09},
  %%% \texttt{SSH} \citep{openssh09}, \texttt{IPsec}
  %%% \citep{openswan09},
  \texttt{GnuPG} \citep{gnupg09} and the GNU crypto library
  \texttt{GNU Crypto} \citep{gnucrypto09} use some variant of this
  construction.

\item An algorithmically inspired construction which allows one prime
  being chosen arbitrarily and the second is chosen such that the
  product is in the desired interval.  This was for example specified
  as the IEEE standard 1363 \citep{araarn00}, Annex A.16.11.  However,
  we could not find any implementation following this standard.
\end{itemize}
  \begin{fullversion}%
    \begin{figure}[h]
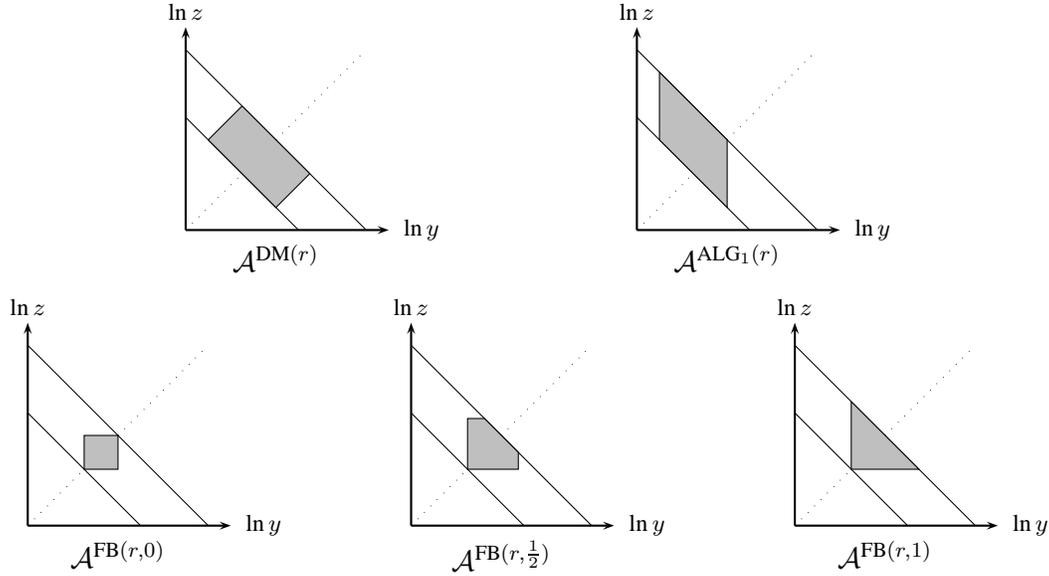

      \centering \psset{unit=3mm}%
      \newpsstyle{notionsplit}{linestyle=none}%
      \hfill%
      \drawnotion[{%
        \DNtitle{$\RSADMSET$}%
      }]{8}{3}{\RSADMdraw}%
      \hfill%
      \drawnotion[{%
        \DNtitle{$\RSAALGSET$}%
      }]{8}{3}{\RSAALGdraw}%
      \hfill%
      \null\\
      %%% \hfill
      %%% \input{\jobname-notion-lnsigma0}
      \drawnotion[{%
        \DNtitle{$\RSAFIXSET[r,0]$}%
      }]{8}{3}{\RSAFIXdraw{0}}%
      \hfill%
      \drawnotion[{%
        \DNtitle{$\RSAFIXSET[r,\frac{1}{2}]$}%
      }]{8}{3}{\RSAFIXdraw{0.5}}%
      \hfill%
      \drawnotion[{%
        \DNtitle{$\RSAFIXSET[r,1]$}%
      }]{8}{3}{\RSAFIXdraw{1}}%
      \caption{Three notions of RSA integers.}
      \label{fig:RSAThreeNotions}
    \end{figure}
  \end{fullversion}%

\begin{fullversion}[\refstepcounter{equation}\refstepcounter{equation}\refstepcounter{equation}]
  \subsection{A number theoretically inspired notion}

  In \cite{decmor08} on the suggestion of B.~de~Weger, the number
  $\QDM(x)$ of RSA~integers up to~$x$ was defined as the count of
  numbers whose two prime factors differ by at most a factor $r$,
  namely%%% i.e.\ one counts
  %%% $n=p q$ with $p<q$ both prime and $q < r p$,
  \begin{equation*}
    \QDM(x)
    :=
    \#\Set{
      n \in \N;
      \begin{array}{l}
        \exists p, q \in \mathbb{P} \colon\\
        n=p q \Land p < q < r p \Land n \leq x 
      \end{array}
    }.
  \end{equation*}
  \condbreak{10mm}\noindent%
  Written as a notion of RSA integers in the sense above, we analyze
  %%% 
  \begin{fullversion}
    \hanghere[1.3\baselineskip]{\psset{unit=1.5mm}\drawnotion[]{8}{3}{\RSADMdraw}}
  \end{fullversion}
  \begin{equation}
    \label{def:RSA1}
    \RSADMSET
    := \left\langle \Set{ (y,z) \in \R^{2}; \frac{y}{r} < z < r y 
        \Land
        \frac{x}{r} < y z \leq x } \right\rangle_{x \in \R_{>1}}.
  \end{equation}
  Note that the prime pair counting function of this notion is closely
  related to the function $\QDM(x)$: Namely we have
  $$
  \RSADM(x) = 2\left(\QDM(x)-\QDM(\frac{x}{r})\right) + \left(
    \pi\left(\sqrt{x}\right) - \pi\left( \sqrt{\frac{x}{r}} \right)
  \right),
  $$
  where the last part is comparatively small.
  \begin{fullversion}[We obtain using
    \ref{res:psalm23}:\refstepcounter{equation}]%
    We now analyze the behavior of the function $\RSADM(x)$ under the
    Riemann hypothesis.  Similar to \cite{decmor08}, we rewrite
    \begin{equation}
      \label{eq:RSA1SUM}
      \frac{1}{2} \cdot   \RSADM(x)
      = 
      \begin{aligned}[t]
        & \primesum{ \primesum{1}{q}{\frac{x}{r p}}{r p}
        }{p}{\frac{\sqrt{x}}{r}}{\sqrt{\frac{x}{r}}}\\
        &+ \primesum{ \primesum{1}{q}{p}{\frac{x}{p}}
        }{p}{\sqrt{\frac{x}{r}}}{\sqrt{x}} \quad+\quad
        \frac{\pi\left(\sqrt{x}\right) - \pi\left( \sqrt{\frac{x}{r}}
          \right)}{2}.
      \end{aligned}
    \end{equation}
    With these bounds we obtain using \ref{res:psalm23}:
  \end{fullversion}
  \begin{theorem}\label{res:order-rsa1}
%%%     \leavevmode\mtodo{MN2012/08/30: changed formulation. Cross check please. DL: formulation ok. Changed proof.}%
    Under the Riemann hypothesis we have
    $$
    \RSADM(x) \in \widetilde{a}(x) \frac{4 x}{~ln^{2} x} \left(~ln r -
      \frac{~ln r}{r} \right) + \bigO{ x^{\frac{3}{4}} r^{\frac{1}{2}}
    }
    $$
    with $\widetilde{a}(x) \in \left[ \left(1 - \frac{~ln r}{~ln x +
          ~ln r}\right)^2, \left(1 + \frac{2 ~ln r}{~ln x - 2 ~ln
          r}\right)^2\right]$.  This makes sense as long as $r \in
    O(x^{\frac12-\epsilon})$ for some $\epsilon>0$.  If additionally
    $~ln r \in o(~ln x)$ then $\widetilde{a}(x) \in 1+o(1)$.
    \ifshortversion\expandafter\qed\fi
  \end{theorem}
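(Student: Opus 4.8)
The plan is to feed the notion $\RSADMSET$ from \ref{def:RSA1} into the two-dimensional prime-sum approximation \ref{res:psalm23}; the work is to pin down its balance parameters, its area, and to arrange the region so that the lemma is applicable. In logarithmic coordinates $\upsilon = ~ln y$, $\zeta = ~ln z$ the region $\RSADMSET_x$ becomes the parallelogram $~ln x - ~ln r < \upsilon + \zeta \leq ~ln x$, $|\upsilon - \zeta| < ~ln r$, whose vertices show that $\upsilon, \zeta$ range over $[\frac{1}{2}(~ln x - 2~ln r),\, \frac{1}{2}(~ln x + ~ln r)]$. Thus $\RSADMSET$ is $[c_1, c_2]$-balanced with $c_1 = \frac{1}{2} - \frac{~ln r}{~ln x}$ and $c_2 = \frac{1}{2} + \frac{~ln r}{2~ln x}$, so that $c = ~max(2c_2 - 1,\, 1 - 2c_1) = \frac{2~ln r}{~ln x}$ and $x^{(3+c)/4} = x^{3/4} r^{1/2}$. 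The restriction $r \in O(x^{1/2-\epsilon})$ is exactly what keeps $c_1$ bounded away from $0$, so that $c_1^{-1} \in \bigO{1}$ and the error term of \ref{res:psalm23} collapses to $\bigO{x^{3/4} r^{1/2}}$.

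The one real obstacle is that $\RSADMSET$ is \emph{not} monotone as a single region: describing its vertical slices by $B_2(p,x) = ~max(\frac{p}{r}, \frac{x}{rp})$ and $C_2(p,x) = ~min(rp, \frac{x}{p})$, the lower boundary falls and then rises (turning at $p = \sqrt{x}$) while the upper boundary rises and then falls (turning at $p = \sqrt{x/r}$), so \ref{res:criterionMonotone} cannot be applied to the whole region at once. I would circumvent this using the symmetry of the notion exactly as prepared in \ref{eq:RSA1SUM}: restricting to $q > p$ and doubling, $\RSADM(x) = 2(N_A + N_B) + (\pi(\sqrt{x}) - \pi(\sqrt{x/r}))$, where $N_A$ counts the prime pairs with $p \in ]\frac{\sqrt{x}}{r}, \sqrt{x/r}]$, $q \in ]\frac{x}{rp}, rp]$ and $N_B$ those with $p \in ]\sqrt{x/r}, \sqrt{x}]$, $q \in ]p, \frac{x}{p}]$; the last summand counts the diagonal $p = q$ and is $\bigO{\sqrt{x}}$ by the prime number theorem. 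On the first half the bounds are $B_2 = \frac{x}{rp}$ (decreasing), $C_2 = rp$ (increasing), and on the second $B_2 = p$ (increasing), $C_2 = \frac{x}{p}$ (decreasing); hence each half is monotone by \ref{res:criterionMonotone} and is a $[c_1, c_2]$-balanced notion of tolerance $r$ with the same $c_1, c_2$ as above.

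It remains to compute the area and assemble the estimate. Since the Jacobian of $(y, z) = (e^{\upsilon}, e^{\zeta})$ is $e^{\upsilon + \zeta}$, the substitution $s = \upsilon + \zeta$, $t = \upsilon - \zeta$ gives $~area(\RSADMSET_x) = \frac{1}{2}\int_{-~ln r}^{~ln r} \int_{~ln x - ~ln r}^{~ln x} e^{s}\, ds\, dt = ~ln r \cdot x(1 - \frac{1}{r}) = x(~ln r - \frac{~ln r}{r})$. Applying \ref{res:psalm23} to the two monotone halves and adding, the two error terms merge into $\bigO{x^{3/4} r^{1/2}}$ (which absorbs the $\bigO{\sqrt{x}}$ diagonal term), while the main terms combine — using that the two half-regions have areas summing to $\frac{1}{2}~area(\RSADMSET_x)$ and that a convex combination of values in $[\frac{1}{4c_2^2}, \frac{1}{4c_1^2}]$ stays in that interval — into a single factor $\widetilde{a}(x) \in [\frac{1}{4c_2^2}, \frac{1}{4c_1^2}]$ multiplying $\frac{4x}{~ln^2 x}(~ln r - \frac{~ln r}{r})$. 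Finally $\frac{1}{4c_1^2} = \left(\frac{~ln x}{~ln x - 2~ln r}\right)^{2} = \left(1 + \frac{2~ln r}{~ln x - 2~ln r}\right)^{2}$ and $\frac{1}{4c_2^2} = \left(\frac{~ln x}{~ln x + ~ln r}\right)^{2} = \left(1 - \frac{~ln r}{~ln x + ~ln r}\right)^{2}$ recover the stated interval, and when $~ln r \in o(~ln x)$ both $c_1, c_2 \to \frac{1}{2}$, giving $\widetilde{a}(x) \in 1 + o(1)$. The subtle point throughout — more than any individual estimate — is keeping the error bound and the $\widetilde{a}$-interval uniform across the monotone split.
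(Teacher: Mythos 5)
Your proposal is correct and follows essentially the same route as the paper's own proof: the same split of the (doubled) upper half into the two monotone pieces of \ref{eq:RSA1SUM}, monotonicity via \ref{res:criterionMonotone}, application of \ref{res:psalm23} with $c_{1} = \frac12 - \frac{~ln r}{~ln x}$, $c_{2} = \frac12 + \frac{~ln r}{2~ln x}$, the trivial $\bigO{\sqrt{x}}$ bound on the diagonal term, and the same area $x\left(~ln r - \frac{~ln r}{r}\right)$ (which the paper obtains as two separate double integrals rather than by your $(s,t)$-substitution). Your explicit convex-combination argument for merging the two $\widetilde{a}$-values, and the observation that $r \in O(x^{\frac12-\epsilon})$ keeps $c_{1}^{-1}$ bounded, make precise two points the paper leaves implicit.
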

  You may want to sum this up as $\RSADM(x) \in (1+o(1)) \frac{4
    x}{~ln^{2} x} \left(~ln r - \frac{~ln r}{r} \right)$.  However,
  you then forego the option of actually calculating
  $\widetilde{a}(x)$.
  %%% \begin{fullversion}
  %%%   If we drop the condition $~ln r \in o(~ln x)$ only the asymptotics of
  %%%   the function $\widetilde{a}(x)$ changes.
  %%% \end{fullversion}%
  %%% This description summarizes that we have an analytical
  %%% approximation to $\RSADM(x)$ in $(1+o(1)) \frac{4 x}{~ln^{2} x}
  %%% \left(~ln r - \frac{~ln r}{r} \right)$ with error $\bigO{
  %%%   x^{\frac{3}{4}} r^{\frac{1}{2}} }$.  The error in the $o(1)$
  %%% may
  %%% of course be larger than the other one.  But we can circumvent
  %%% the
  %%% $o(1)$ if needed by specifying the concrete value of
  %%% $\widetilde{a}(x)$ by the price of a presumably not elementary
  %%% expressible double integral.
  \begin{fullversion}
    \begin{proof}
      Consider $x$ large enough such that all sum boundaries are
      beyond $2657$, i.e.\ $\frac{\sqrt{x}}{r} \geq 2657$.  By
      definition $\RSADMSET$ is a notion of tolerance~$r$.  Further it
      is $\left[ c_{1}, c_{2} \right]$-balanced with $c_{1} = ~log_{x}
      \left( \frac{\sqrt{x}}{r} \right)= \frac{1}{2}- \frac{~ln r}{~ln
        x}$ and $c_{2} = ~log_{x} \left( \sqrt{rx} \right)=
      \frac{1}{2}+\frac{~ln r}{2 ~ln x}$.  As depicted next to
      \ref{def:RSA1}, we treat the upper half of the notion as the
      union of those two notions matching the two double sums in
      \ref{eq:RSA1SUM}, which both inherit being
      $[c_{1},c_{2}]$-balanced of tolerance~$r$.  Considering the
      inner bounds ${\frac{x}{r p}}$~to~${r p}$ and
      ${p}$~to~${\frac{x}{p}}$, respectively, as a function of the
      outer variable~$p$, we observe that the lower and upper bound in
      each case have opposite monotonicity behavior and thus by
      \ref{res:criterionMonotone} each part is a monotone notion.  We
      can thus apply \ref{res:psalm23}.
      %%% Note that for any $(p,q) \in \RSADMSET_{x}$ we have $p, q
      %%% \in
      %%% \left[\frac{1}{r}, \sqrt{r} \right] \sqrt{x}$. Observe that
      %%% $\frac{x}{r} < p^{2} < r p^{2}$ and $p^{2} < p q \leq x$.
      %%% Thus
      %%% $\left( r p \right)' \cdot \left(\frac{x}{r p} \right)' = -
      %%% \frac{x}{p^{2}} \leq 0$ and $\left( \frac{x}{p} \right)'
      %%% \cdot
      %%% \left( p \right)' = -\frac{x}{p^{2}} \leq 0$.
      %%% Since $\RSADMSET$ is
      %%% clearly graph-bounded, it follows by
      %%% \ref{res:criterionMonotone}
      %%% that $\RSADMSET$ is a piecewise monotone $\left[ c_{1},
      %%%   c_{2}
      %%% \right]$-balanced notion of RSA integers with tolerance~$r$
      %%% with
      Under the restriction $~ln r \in o(~ln x)$ we have $c_{1}, c_{2} \in
      \frac{1}{2}+o(1)$, which implies that $\frac{1}{c_{i}^{2}} \in 4
      \left(1+o(1)\right)$ for both $i \in \{ 1,2\}$.  Computing the
      area of the two parts yields
      $$
      \frint{\frint{1}{q}{\frac{x}{r p}}{r
          p}}{p}{\frac{\sqrt{x}}{r}}{\sqrt{\frac{x}{r}}} = \frac{1}{2}
      \cdot x \left(1 - \frac{~ln r}{r} - \frac{1}{r}\right)
      $$
      and
      $$
      \frint{\frint{1}{q}{p}{\frac{x}{p}}}{p}{\sqrt{\frac{x}{r}}}{\sqrt{x}}
      = \frac{1}{2} \cdot x \left(~ln r - 1 + \frac{1}{r} \right).
      $$
      For the error term we obtain $\mathcal{O}(x^{\frac{3}{4}}
      r^{\frac{1}{2}})$ noting that the number $\pi\left( \sqrt{x}
      \right)$ of prime squares up to $x$ is at most $\sqrt{x}$.
    \end{proof}
    Actually, we can even prove that the error term is in $\mathcal{O}
    \left( x^{\frac34} r^{\frac14} \right)$. We lost this
    \begin{fullversion}[due to the general form chosen for
      \ref{res:psalm23}.]%
      in the last steps of the proof of \ref{res:psalm23} when we
      replaced $C_{2}(p,x) = r p$ by $x/p$.
    \end{fullversion}
  \end{fullversion}
\end{fullversion}

\subsection{A fixed bound notion}\label{sec:fixedBound}

\begin{fullversion}
  A second possible definition for RSA integers can be stated as
  follows:%
\end{fullversion}
We consider the number of integers smaller than a real positive
bound~$x$ that have exactly two prime factors $p$ and $q$, both lying
in a fixed interval $]B, C]$, in formula:
\begin{equation*}
  \Qclean{2}(x)
  :=
  \#\Set{
    n \in \N;
    \begin{array}{c}
      \exists p, q \in \P{B,C}\colon\\
      n=p q \Land n \leq x
    \end{array}
  }.
\end{equation*}
To avoid problems with rare prime squares, which are also not
interesting when talking about RSA integers, we instead count
%%% 
\begin{fullversion}
  \hanghere{\psset{unit=1.5mm} \drawnotion[]{8}{3}{\RSAFIXdraw{0.45}}}
\end{fullversion}
\begin{equation*}
  \label{def:Q}
  \Q{2}(x)
  :=
  \#\Set{
    (p,q) \in \left(\P{B,C}\right)^{2}; p q \leq x
  }.
\end{equation*}
Such functions are treated in \ifanonymous Anonymous (2010) \else
\cite{loenus10c} \fi.\iffullversion\\\fi In the context of RSA
integers we consider the notion
\begin{fullversion}[\refstepcounter{equation}]
  \begin{equation}
    \label{def:RSA3}
    \RSAFIXSET[r,\sigma]
    := \left\langle
      \Set{
        (y,z) \in \R_{>1}^{2};
        \sqrt{\frac{x}{r}} < y,z \leq \sqrt{r^{\sigma} x} \Land y z \leq x
      } \right\rangle_{x \in \R_{>1}}
  \end{equation}
\end{fullversion}%
\begin{shortversion}%
  \begin{equation*}
    \label{def:RSA3}
    \RSAFIXSET[r,\sigma]
    := \left\langle
      \Set{
        (y,z) \in \R_{>1}^{2};
        \sqrt{\frac{x}{r}} < y,z \leq \sqrt{r^{\sigma} x} \Land y z \leq x
      } \right\rangle_{x \in \R_{>1}}
  \end{equation*}
\end{shortversion}%
with $\sigma \in [0,1]$. The parameter $\sigma$ describes the
(relative) distance of the restriction $yz \leq x$ to the center of
the rectangle in which $y$ and $z$ are allowed.
\begin{fullversion}[\refstepcounter{equation}]%
  We split the corresponding counting function into two double sums:
  \begin{equation}\label{eq:RSA3SUM}
    \RSAFIX[r,\sigma](x) = 
    \begin{aligned}[t]
      &\primesum{ \primesum{1}{q}{\sqrt{\frac{x}{r}}}{\sqrt{r^{\sigma}
            x}} }{p}{\sqrt{\frac{x}{r}}}{\sqrt{\frac{x}{r^{\sigma}}}}
      \\
      &+ \primesum{ \primesum{1}{q}{\sqrt{\frac{x}{r}}}{\frac{x}{p}}
      }{p}{\sqrt{\frac{x}{r^{\sigma}}}}{\sqrt{r^{\sigma} x}}.
    \end{aligned}
  \end{equation}
\end{fullversion}%
%%% fbpars1 := [B1=sqrt(\frac{x}{r}^{s}), C1=sqrt(x*r^{s}),
%%% B2=(p->sqrt(\frac{x}{r})), C2=(p->x/p)]; fbpars2 :=
%%% [B1=sqrt(\frac{x}{r}), C1=sqrt(\frac{x}{r}^{s}),
%%% B2=(p->sqrt(\frac{x}{r})), C2=(p->sqrt(r^{s}*x))];
The next theorem follows directly from \ifanonymous Anonymous (2010)
\else \cite{loenus10c} \fi but %%% for
%%% the purpose of a self-contained presentation
we can also derive it from \ref{res:psalm23}%
\begin{fullversion}[:]
  similar to \ref{res:order-rsa1}.
\end{fullversion}
\begin{theorem}\label{res:order-rsa3}
%%%   \leavevmode\mtodo{MN2012/08/30: changed formulation. Cross check please. DL: formulation ok, changed proof.}%
  We have under the Riemann hypothesis
  $$
  \RSAFIX[r,\sigma](x) \in \widetilde{a}(x) \frac{4x}{~ln^{2} x}
  \left( \sigma ~ln r + 1 - \frac{2}{r^{\frac{1-\sigma}{2}}} +
    \frac{1}{r} \right) + \bigO{ x^{\frac{3}{4}} r^{\frac{1}{4}} }
  $$
  with $\widetilde{a}(x) \in \left[ \left(1 - \frac{\sigma ~ln r}{~ln
        x + \sigma ~ln r}\right)^2, \left(1 + \frac{~ln r}{~ln x - ~ln
        r}\right)^2\right]$.  If additionally $~ln r \in o(~ln x)$
  then $\widetilde{a} \in 1+o(1)$.
  %%% 
  \ifshortversion\expandafter\qed\fi
\end{theorem}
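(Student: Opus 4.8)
The plan is to follow the route already used for \ref{res:order-rsa1}, namely to exhibit $\RSAFIXSET[r,\sigma]$ as a monotone $[c_1,c_2]$-balanced notion of tolerance~$r$ and then feed it into \ref{res:psalm23}. First I would read off the parameters. Since both $y$ and $z$ are confined to $\left]\sqrt{x/r}, \sqrt{r^{\sigma}x}\right]$, the notion is $[c_1,c_2]$-balanced with $c_1 = \frac12 - \frac{~ln r}{2~ln x}$ and $c_2 = \frac12 + \frac{\sigma~ln r}{2~ln x}$; and because $y,z > \sqrt{x/r}$ already forces $yz > x/r$ while $yz \leq x$ is imposed explicitly, it has tolerance~$r$. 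Substituting these $c_i$ into the interval $\left[\frac{1}{4c_2^2}, \frac{1}{4c_1^2}\right]$ delivered by \ref{res:psalm23} and simplifying gives precisely the stated range for $\widetilde a(x)$.

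The upper inner boundary is $C_2(p,x) = ~min\left(\sqrt{r^{\sigma}x}, x/p\right)$, which has a kink at $p = \sqrt{x/r^{\sigma}}$ and is therefore not continuously differentiable, so \ref{res:criterionMonotone} cannot be applied to the whole region at once. I would split the count as in \ref{eq:RSA3SUM}, into the rectangular part (both inner bounds constant) and the hyperbolic part (constant lower bound, weakly decreasing upper bound $x/p$). On each part the two inner bounds have compatible monotonicity, so \ref{res:criterionMonotone} certifies monotonicity; one checks that each part still has tolerance~$r$ and inherits the same balancing $[c_1,c_2]$. Applying \ref{res:psalm23} separately yields values $\widetilde a_1, \widetilde a_2 \in \left[\frac{1}{4c_2^2}, \frac{1}{4c_1^2}\right]$, and since the combined main term is the sum of the two individual main terms, the effective $\widetilde a(x)$ is a convex combination of $\widetilde a_1$ and $\widetilde a_2$ weighted by the two sub-areas and hence stays in the same interval. (Alternatively one could verify \ref{def:notion-monotone} directly for the unsplit region, since the defining integral is weakly decreasing in $p$ despite the kink.)

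The computational heart is the area of $\RSAFIXSET[r,\sigma]_x$. Writing $a = \sqrt{x/r}$, $b = \sqrt{r^{\sigma}x}$ and $y_0 = x/b = \sqrt{x/r^{\sigma}}$, I would integrate the column height in $z$ over $p$: the full height $b-a$ for $p \leq y_0$, and $x/p - a$ for $p \in \left]y_0, b\right]$. Using $b/y_0 = r^{\sigma}$ the elementary integral collapses to $~area(\RSAFIXSET[r,\sigma]_x) = x\left(\sigma~ln r + 1 - \frac{2}{r^{\frac{1-\sigma}{2}}} + \frac1r\right)$, which is exactly the bracketed factor of the claim once multiplied by $4/~ln^2 x$.

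For the error I compute $c = ~max(2c_2-1, 1-2c_1) = \frac{~ln r}{~ln x}$ (using $\sigma \leq 1$), so the generic bound $\bigO{c_1^{-1} x^{\frac{3+c}{4}}}$ of \ref{res:psalm23} becomes $\bigO{x^{\frac34} r^{\frac14}}$ whenever $c_1$ is bounded away from $0$. Here the sharper $r^{1/4}$ (versus the $r^{1/2}$ for $\RSADMSET$) already drops out of the lemma, because the prime factors range over a narrower band, so $c_1$ is larger and the dominant term $x^{1-c_1/2}$ stays at $x^{3/4}r^{1/4}$; no precision is lost, since on the hyperbolic part the binding boundary is $x/p$ itself. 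Finally $~ln r \in o(~ln x)$ forces $c_1,c_2 \to \frac12$, hence $\frac{1}{4c_1^2}, \frac{1}{4c_2^2} \to 1$ and $\widetilde a(x) \in 1+o(1)$. I expect the only real friction to be the bookkeeping at the split --- confirming both sub-notions keep tolerance~$r$ and share the $[c_1,c_2]$ balancing so that the two applications of \ref{res:psalm23} report $\widetilde a_i$ from the same interval --- together with carrying the area integral through so that the $r^{-(1-\sigma)/2}$ term emerges with the correct constant; the statement can otherwise simply be quoted from \cite{loenus10c}.
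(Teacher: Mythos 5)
Your proposal is correct and follows essentially the same route as the paper: the same split of $\RSAFIXSET[r,\sigma]$ into the rectangular and hyperbolic pieces matching \ref{eq:RSA3SUM}, monotonicity of each piece via \ref{res:criterionMonotone}, application of \ref{res:psalm23} with $c_1 = \frac12 - \frac{~ln r}{2~ln x}$ and $c_2 = \frac12 + \frac{\sigma ~ln r}{2 ~ln x}$, and the same area computation yielding $x\left(\sigma ~ln r + 1 - \frac{2}{r^{(1-\sigma)/2}} + \frac1r\right)$ and error $\bigO{x^{3/4}r^{1/4}}$. Your explicit convex-combination argument for the two $\widetilde{a}_i$ and the derivation of $c = \frac{~ln r}{~ln x}$ merely spell out steps the paper leaves implicit.
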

%%% \begin{fullversion} 
%%%   If we drop the condition $~ln r \in o(~ln x)$ only the asymptotics of
%%%   the function $\widetilde{a}(x)$ changes.
%%% \end{fullversion}%
%%% \begin{shortversion}
%%%   \begin{proof}
%%%     Similar to the prior also this fact follows by
%%%     \ref{res:psalm23} and \ref{res:criterionMonotone} based on
%%%     \ref{eq:RSA3SUM}.
%%%   \end{proof}
%%% \end{shortversion}
\begin{fullversion}
  \begin{proof}
    Let $x$ be such that all sum boundaries are beyond $2657$. By
    definition $\RSAFIXSET$ is a notion of tolerance~$r$.  Further for
    all $\sigma \in [0,1]$ it is clearly $\left[ c_{1}, c_{2}
    \right]$-balanced with $c_{1} = ~log_x \sqrt{\frac{x}{r}} =
    \frac{1}{2}- \frac{~ln \sqrt{r}}{~ln x}$ and $c_{2} = ~log_x
    \sqrt{r^{\sigma} x} = \frac{1}{2}+\frac{\sigma ~ln r}{2 ~ln x}$.
    As depicted next to \ref{def:RSA3}, we treat the notion as the
    union of two notions corresponding to the two double sums in
    \ref{eq:RSA3SUM}, which are both $[c_{1},c_{2}]$-balanced of
    tolerance~$r$.

    Consider the inner bounds
    $\sqrt{\frac{x}{r}}$~to~$\sqrt{r^{\sigma} x}$ and
    $\sqrt{\frac{x}{r}}$~to~$\frac{x}{p}$ respectively, as a function
    of the outer variable~$p$ (while $\sigma$ is fixed): We observe
    that the lower and upper bound in the first case are constant and
    in the second case consist of a constant lower bound and an
     weakly decreasing upper bound.  Thus by \ref{res:criterionMonotone} each
    part is a monotone notion and we can apply \ref{res:psalm23}.
    %%% Note that for any $(p,q) \in \RSADMSET_{x}$ we have $p, q \in
    %%% \left[\frac{1}{r}, \sqrt{r} \right] \sqrt{x}$. Observe that
    %%% $\frac{x}{r} < p^{2} < r p^{2}$ and $p^{2} < p q \leq x$.
    %%% Thus
    %%% $\left( r p \right)' \cdot \left(\frac{x}{r p} \right)' = -
    %%% \frac{x}{p^{2}} \leq 0$ and $\left( \frac{x}{p} \right)' \cdot
    %%% \left( p \right)' = -\frac{x}{p^{2}} \leq 0$.
    %%% Since $\RSADMSET$ is
    %%% clearly graph-bounded, it follows by
    %%% \ref{res:criterionMonotone}
    %%% that $\RSADMSET$ is a piecewise monotone $\left[ c_{1}, c_{2}
    %%% \right]$-balanced notion of RSA integers with tolerance~$r$
    %%% with
    
    As in the proof of \ref{res:order-rsa1}, we have under the additional restriction $\ln r \in o(~ln(x))$ that $\frac{1}{c_{i}^{2}} \in 4 \left(1+o(1)\right)$ for both $i \in \{
    1,2\}$.  Computing the area of the two parts yields
    $$
    \frint{\frint{1}{q}{\sqrt{\frac{x}{r}}}{\frac{x}{p}}
    }{p}{\sqrt{\frac{x}{r^{\sigma}}}}{\sqrt{r^{\sigma} x}} = x \left(
      \sigma ~ln r + \frac{1}{r^{(1+\sigma)/2}} -
      \frac{1}{r^{(1-\sigma)/2}} \right)
    $$
    and
    $$
    \frint{ \frint{1}{q}{\sqrt{\frac{x}{r}}}{\sqrt{r^{\sigma}
          x}}}{p}{\sqrt{\frac{x}{r}}}{\sqrt{\frac{x}{r^{\sigma}}}} = x
    \left(1 - \frac{1}{r^{(1-\sigma)/2}} - \frac{1}{r^{(1+\sigma)/2}}
      + \frac{1}{r} \right)
    $$
    For the error term we obtain $\bigO{x^{\frac{3}{4}}
      r^{\frac{1}{4}}}$.
  \end{proof}
\end{fullversion}

\subsection{An algorithmically inspired notion}

A \ifshortversion second\else third \fi
option to define RSA integers is the following notion: Assume you wish
to generate an RSA integer between $\frac{x}{r}$ and $x$, which has
two prime factors of roughly equal size. Then algorithmically we might
first generate the prime $p$ and afterward select the prime $q$ such
that the product is in the correct interval. As we will see later,
this procedure does ---~however~--- not produce every number with the
same probability, see \ref{sec:genProperly}.  Formally, we consider
the notion

\begin{fullversion}[\refstepcounter{equation}\refstepcounter{equation}\refstepcounter{equation}\refstepcounter{equation}]
  \begin{fullversion}
    \hanghere[0\baselineskip]{\psset{unit=1.5mm}\drawnotion[]{8}{3}{\RSAALGdraw}}
  \end{fullversion}%
  \begin{equation}
    \label{def:RSA2}
    \RSAALGSET
    := \left\langle 
      \Set{ (y,z) \in \R_{>1}^{2};
        \begin{array}{c}
          \frac{\sqrt{x}}{r} < y \leq \sqrt{x},
          \\
          \frac{x}{r y} < z \leq \frac{x}{y},
          \\
          \frac{x}{r} < y z \leq x
        \end{array}
      } \right\rangle_{x \in \R_{>1}.}
    \qquad\qquad
  \end{equation}%
  \begin{fullversion}[\refstepcounter{equation}\refstepcounter{equation}\refstepcounter{equation}]%
    We proceed with this notion similar to the previous one. By
    observing
    %%% 
    \begin{equation}\label{eq:RSA2SUM}
      \RSAALG(x) =
      \begin{aligned}[t]
        &\primesum{ \primesum{1}{q}{\sqrt{x}}{\frac{x}{p}}
        }{p}{\frac{\sqrt{x}}{r}}{\sqrt{x}}\\
        &+ \primesum{ \primesum{1}{q}{\frac{x}{r p}}{\sqrt{x}}
        }{p}{\frac{\sqrt{x}}{r}}{\sqrt{x}},
      \end{aligned}
    \end{equation}
    and again applying \ref{res:psalm23} and
    \ref{res:criterionMonotone} we obtain
  \end{fullversion}
  \begin{shortversion}
    We obtain for this notion using \ref{res:psalm23}
  \end{shortversion}
  \begin{theorem}
    \label{res:order-rsa2}
%%%     \leavevmode\mtodo{MN2012/08/30: changed formulation. Cross check please. DL: formulation ok, changed proof.}%
    We have under the Riemann hypothesis
    $$
    \RSAALG(x) \in \widetilde{a}(x) \frac{4 x}{~ln^{2} x} \left(~ln r
      - \frac{~ln r}{r} \right) + \mathcal{O}\left( x^{\frac{3}{4}}
      r^{\frac{1}{2}} \right)
    $$
    with $\widetilde{a}(x) \in \left[ \left(1 - \frac{2 ~ln r}{~ln x +
          2 ~ln r}\right)^2, \left(1 + \frac{2 ~ln r}{~ln x - 2 ~ln
          r}\right)^2\right]$. If additionally  $~ln r \in o(~ln x)$
    then $\widetilde{a} \in 1+o(1)$.
    \ifshortversion\expandafter\qed\fi
  \end{theorem}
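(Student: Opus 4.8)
The plan is to follow the same route as the proofs of \ref{res:order-rsa1} and \ref{res:order-rsa3}: exhibit $\RSAALGSET$ as a union of two monotone $[c_{1},c_{2}]$-balanced notions of tolerance~$r$, apply the two-dimensional prime sum approximation \ref{res:psalm23} to each half, and combine the resulting main and error terms. First I would fix $x$ large enough that every summation boundary exceeds $2657$, that is $\frac{\sqrt{x}}{r}\geq 2657$, so that the Riemann-hypothesis bound on $|\pi-~li|$ is available. Reading off the defining inequalities $\frac{\sqrt{x}}{r} < y \leq \sqrt{x}$ and $\frac{x}{ry} < z \leq \frac{x}{y}$, the largest admissible $z$ is $r\sqrt{x}$ and the smallest is $\frac{\sqrt{x}}{r}$, so the notion is $[c_{1},c_{2}]$-balanced with $c_{1} = \frac{1}{2} - \frac{~ln r}{~ln x}$ and $c_{2} = \frac{1}{2} + \frac{~ln r}{~ln x}$; tolerance~$r$ holds by the third defining constraint $\frac{x}{r} < yz \leq x$.

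Next I would invoke the decomposition \ref{eq:RSA2SUM}, which splits $\RSAALG(x)$ at the line $z=\sqrt{x}$ into a sum over the region with inner bounds $\sqrt{x}$ to $\frac{x}{p}$ and a sum over the region with inner bounds $\frac{x}{rp}$ to $\sqrt{x}$; these two regions tile $\RSAALGSET_{x}$ without overlap. In the first region the lower inner bound is constant and the upper inner bound $\frac{x}{p}$ is weakly decreasing in $p$; in the second the lower bound $\frac{x}{rp}$ is weakly decreasing and the upper bound $\sqrt{x}$ is constant. Since a constant bound is simultaneously weakly increasing and weakly decreasing, in each half the two boundary functions fall under one of the two cases of \ref{res:criterionMonotone}, certifying monotonicity, so \ref{res:psalm23} applies to both pieces.

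The remaining substantive step is the two area integrals. A direct computation gives $\int_{\sqrt{x}/r}^{\sqrt{x}}\left(\frac{x}{p}-\sqrt{x}\right)dp = x\left(~ln r - 1 + \frac{1}{r}\right)$ for the first region and $\int_{\sqrt{x}/r}^{\sqrt{x}}\left(\sqrt{x}-\frac{x}{rp}\right)dp = x\left(1 - \frac{1}{r} - \frac{~ln r}{r}\right)$ for the second, whose sum is $x\left(~ln r - \frac{~ln r}{r}\right)$; multiplying by the factor $\frac{4}{~ln^{2}x}$ from \ref{res:psalm23} produces the claimed main term. For the constant I would feed $c_{1},c_{2}$ into the interval $\left[\frac{1}{4c_{2}^{2}},\frac{1}{4c_{1}^{2}}\right]$ supplied by \ref{res:psalm23}: substituting $c_{2}=\frac{1}{2}+\frac{~ln r}{~ln x}$ yields $\frac{1}{4c_{2}^{2}} = \left(1-\frac{2~ln r}{~ln x + 2~ln r}\right)^{2}$, and $c_{1}=\frac{1}{2}-\frac{~ln r}{~ln x}$ yields $\frac{1}{4c_{1}^{2}} = \left(1+\frac{2~ln r}{~ln x - 2~ln r}\right)^{2}$, exactly the stated $\widetilde{a}(x)$-interval. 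Finally, with $c = ~max(2c_{2}-1, 1-2c_{1}) = \frac{2~ln r}{~ln x}$ one has $x^{(3+c)/4} = x^{3/4}\,x^{~ln r/(2~ln x)} = x^{3/4} r^{1/2}$ and $c_{1}^{-1}\in\bigO{1}$ whenever $r\in O(x^{1/2-\epsilon})$, so the generic error $\bigO{c_{1}^{-1}x^{(3+c)/4}}$ collapses to $\bigO{x^{3/4} r^{1/2}}$. The hypothesis $~ln r\in o(~ln x)$ sends both endpoints of the interval to $1$, giving $\widetilde{a}(x)\in 1+o(1)$.

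I expect the only real difficulty to be bookkeeping rather than mathematics: verifying that \ref{eq:RSA2SUM} genuinely partitions $\RSAALGSET_{x}$ and that the constant and weakly-monotone boundary functions fit the hypotheses of \ref{res:criterionMonotone} in each half (the constant-bound-as-both-monotone observation is what makes the ``or vice versa'' clause apply). The error exponent and the $r^{1/2}$ factor are then inherited verbatim from \ref{res:psalm23}, exactly as in \ref{res:order-rsa1}, and no sharper analysis is needed here; note that the statement differs from \ref{res:order-rsa1} only in the $\widetilde{a}(x)$-interval, which traces back solely to the larger $c_{2}$ forced by $z$ reaching up to $r\sqrt{x}$ rather than $\sqrt{rx}$.
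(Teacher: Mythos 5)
Your proposal is correct and follows essentially the same route as the paper's own proof: the identical split of $\RSAALG(x)$ via \ref{eq:RSA2SUM} at $z=\sqrt{x}$, the same monotonicity certification through \ref{res:criterionMonotone} (constant bound counting as weakly monotone), the same application of \ref{res:psalm23} with $c_{1}=\frac12-\frac{\ln r}{\ln x}$, $c_{2}=\frac12+\frac{\ln r}{\ln x}$, and the same area and error computations. Incidentally, your two area values are attached to the correct regions, whereas the paper's proof displays them swapped (a harmless typo, since only their sum $x\left(\ln r-\frac{\ln r}{r}\right)$ enters the main term).
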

  \begin{fullversion}
    %%% If we drop the condition $~ln r \in o(~ln x)$ only the asymptotics of
    %%% the function $\widetilde{a}(x)$ changes.
    \begin{proof}
      Again let $x$ be such that all sum boundaries are beyond
      $2657$. By definition $\RSAALGSET$ is a notion of tolerance~$r$.
      Further it is clearly $\left[c_{1}, c_{2} \right]$-balanced with
      $c_{1} = ~log_x \frac{\sqrt{x}}{r} = \frac{1}{2}- \frac{~ln
        r}{~ln x}$ and $c_{2} = ~log_x r \sqrt{x} = \frac{1}{2} +
      \frac{~ln r}{~ln x}$.  As depicted next to \ref{def:RSA2}, we
      treat the notion as the union of two notions corresponding to
      the two double sums in \ref{eq:RSA2SUM}, which are both
      $[c_{1},c_{2}]$-balanced of tolerance~$r$.

      If we consider the inner bounds $\sqrt{x}$~to~$\frac{x}{p}$ and
      $\frac{x}{r p}$~to~$\sqrt{x}$, respectively, as a function of
      the outer variable~$p$, we observe that in both cases one of
      them is constant and the other decreasing. Furthermore by
      \ref{res:criterionMonotone} each part is a monotone notion. We
      can thus apply \ref{res:psalm23}.

As for the previous notions we have under the additional restriction $\ln r \in o(~ln(x))$ that $\frac{1}{c_{i}^{2}} \in 4 \left(1+o(1)\right)$ for both $i \in \{ 1,2\}$.  
      Computing the
      area of the two parts yields
      $$
      \frint{\frint{1}{q}{\sqrt{x}}{\frac{x}{p}}}{p}{\frac{\sqrt{x}}{r}}{\sqrt{x}}
      = x \left(1 - \frac{~ln r}{r} - \frac{1}{r} \right)
      $$
      and
      $$
      \frint{\frint{1}{q}{\frac{x}{r
            p}}{\sqrt{x}}}{p}{\frac{\sqrt{x}}{r}}{\sqrt{x}} = x
      \left(~ln r - 1 + \frac{1}{r} \right).
      $$
      For the error term we obtain $\bigO{x^{\frac{3}{4}}
        r^{\frac{1}{2}}}$.
    \end{proof}
    Note that we also could have employed
    \ref{res:criterionMonotoneSplit}, but in this particular case we
    decided to use another split of the notion.
  \end{fullversion}

  The IEEE standard P1363 suggest a slight variant, both generalize to
  \begin{equation}
    \RSAALGVARSET[r,\sigma](x)
    := \left\langle 
      \Set{ (y,z) \in \R_{>1}^{2};
        \begin{array}{c}
          r^{\sigma-1} \sqrt{x} < y \leq r^{\sigma} \sqrt{x},\\
          \frac{x}{r y} < z \leq \frac{x}{y},
          \\
          \frac{x}{r} < y z \leq x
        \end{array}
      } \right\rangle_{x \in \R_{>1}},
  \end{equation}%
  with $\sigma \in [0,1]$.
\end{fullversion}
\begin{shortversion}
  \begin{equation*}
    \RSAALGVARSET[r, \sigma](x)
    := \left\langle 
      \Set{ (y,z) \in \R_{>1}^{2};
        \begin{array}{c}
          r^{\sigma-1} \sqrt{x} < y \leq r^{\sigma} \sqrt{x}
          \Land \frac{x}{r y} < z \leq \frac{x}{y}
          \\
          \frac{x}{r} < y z \leq x
        \end{array}
      } \right\rangle_{x \in \R_{>1}},
  \end{equation*}%
  with $\sigma \in [0,1]$. The parameter $\sigma$ describes here the
  (relative) position of the defining area of the notion with respect
  to the diagonal.  Write $\sigma' := ~max(\sigma, 1-\sigma)$.
\end{shortversion}
\begin{fullversion}
  Now, our notion above is $\RSAALGVARSET[r,0]$, and the IEEE variant
  is $\RSAALGVARSET[r,\frac12]$.  By similar reasoning as above we
  obtain
\end{fullversion}
\begin{shortversion}
  Similar to the theorem above we obtain
\end{shortversion}
\begin{theorem}
  \label{res:order-rsa2b}
%%%   \leavevmode\mtodo{MN2012/08/30: changed formulation. Cross check please. DL: as above.}%
  We have under the Riemann hypothesis
  $$
  \RSAALGVAR[r, \sigma](x) \in%
  \widetilde{a}(x) \frac{4x}{~ln^{2}x} \left( ~ln r - \frac{~ln r}{r}
  \right) + \bigO{x^{\frac{3}{4}} r^{\frac{1}{2}}},
  $$
  with $\widetilde{a}(x) \in \left[ \left(1 - \frac{2 \sigma' ~ln
        r}{~ln x + 2 \sigma' ~ln r}\right)^2, \left(1 + \frac{2
        (1+\sigma) ~ln r}{~ln x - 2 (1+\sigma) ~ln
        r}\right)^2\right]$, where $\sigma' = ~max(\sigma, 1-\sigma)$.
%%%   \mtodo{MN2012/08/30>DL: Why is the error term here that much better
%%%     than in \ref{res:order-rsa2}? DL: It is not: cross-check gave 1/2}
  If additionally $~ln r \in o(~ln x)$ then $\widetilde{a}(x) \in
  1+o(1)$ \qed
  %%% \begin{fullversion}
  %%%   \hanghere{\rput(-4,-2.5){\psset{unit=1.5mm}
  %%%       \drawnotion[]{8}{3}{\RSAALGVARdraw{0.5}}}}
  %%% \end{fullversion}
\end{theorem}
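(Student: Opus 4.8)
The plan is to identify $\RSAALGVAR[r,\sigma](x)$ with the prime-pair counting function of the notion $\RSAALGVARSET[r,\sigma]$ and feed that notion into \ref{res:psalm23}. First I would note that the constraints $\frac{x}{ry} < z \leq \frac{x}{y}$ already force $\frac{x}{r} < yz \leq x$, so the third line of the definition is redundant and the notion genuinely has tolerance~$r$; equivalently $\RSAALGVAR[r,\sigma](x) = \#\{(p,q) \in \PR^{2}; r^{\sigma-1}\sqrt{x} < p \leq r^{\sigma}\sqrt{x} \Land \frac{x}{rp} < q \leq \frac{x}{p}\}$. Reading off the extreme admissible exponents shows the notion is $[c_{1},c_{2}]$-balanced with $c_{1} = \frac{1}{2} - (1+\sigma)\frac{~ln r}{~ln x}$, coming from the smallest admissible second factor $z \approx \sqrt{x}/r^{1+\sigma}$, and $c_{2} = \frac{1}{2} + \sigma'\frac{~ln r}{~ln x}$ with $\sigma' = ~max(\sigma, 1-\sigma)$, coming from whichever of $y \leq r^{\sigma}\sqrt{x}$ or $z \leq r^{1-\sigma}\sqrt{x}$ reaches furthest out. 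Substituting these two values into the interval $\left[\frac{1}{4c_{2}^{2}}, \frac{1}{4c_{1}^{2}}\right]$ furnished by \ref{res:psalm23} reproduces verbatim the stated bracket for $\widetilde{a}(x)$, and $~ln r \in o(~ln x)$ collapses both endpoints to $1+o(1)$.

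Since the inner bounds $\frac{x}{rp}$ and $\frac{x}{p}$ are \emph{both} decreasing in $p$, \ref{res:criterionMonotone} does not apply to the notion as a whole, so the key structural step is to split it into monotone pieces. I would cut the outer range at $p = \sqrt{x}$. On $\left(r^{\sigma-1}\sqrt{x}, \sqrt{x}\right]$ the line $z = \sqrt{x}$ lies strictly inside the inner interval $\left(\frac{x}{rp}, \frac{x}{p}\right]$, so splitting the inner sum there gives two pieces, each with one constant bound $\sqrt{x}$ and one decreasing bound $\frac{x}{rp}$ or $\frac{x}{p}$; both are monotone by \ref{res:criterionMonotone}, exactly as in the proof of \ref{res:order-rsa2}. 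The new feature for $\sigma>0$ is the range $\left(\sqrt{x}, r^{\sigma}\sqrt{x}\right]$, where the whole inner interval sits below $\sqrt{x}$ and both bounds decrease; here \ref{res:criterionMonotone} fails and I would verify \ref{def:notion-monotone} directly, checking that $\int_{x/(rp)}^{x/p}\frac{1}{~ln q}\,\differential{q} + \widehat{E}\!\left(\frac{x}{rp}\right) + \widehat{E}\!\left(\frac{x}{p}\right)$ is weakly decreasing — both endpoints move left and $\widehat{E}$ as well as $q \mapsto \frac{1}{~ln q}$ are monotone, so a short derivative check settles it. Then \ref{res:psalm23} applies to each of the two or three pieces.

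The main term is $\sigma$-independent, which I would see most cleanly in the coordinates $u = ~ln y$, $s = ~ln y + ~ln z$: the region becomes the rectangle $u \in \left((\sigma-1)~ln r + \frac{1}{2}~ln x, \sigma ~ln r + \frac{1}{2}~ln x\right]$, $s \in \left(~ln x - ~ln r, ~ln x\right]$, and since $\differential{y}\,\differential{z} = e^{s}\,\differential{u}\,\differential{s}$ the area factors as $~ln r \cdot \int_{~ln x - ~ln r}^{~ln x} e^{s}\,\differential{s} = ~ln r \cdot x\left(1 - \frac{1}{r}\right)$. The shift by $\sigma$ only translates the $u$-window and leaves this product untouched, so summing the pieces gives $~area = x\left(~ln r - \frac{~ln r}{r}\right)$ for every $\sigma$; multiplying by $\frac{4}{~ln^{2}x}$ and the common factor $\widetilde{a}(x)$ yields the claimed main term.

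The real obstacle is the sharp error $\bigO{x^{\frac{3}{4}} r^{\frac{1}{2}}}$ on the hard piece: feeding its data blindly into \ref{res:psalm23} gives only $\bigO{x^{\frac{3}{4}} r^{\frac{1+\sigma}{2}}}$, because the generic bound internally extends the outer integration down to $x^{c_{1}} = \sqrt{x}/r^{1+\sigma}$, whereas the piece really lives on $\left(\sqrt{x}, r^{\sigma}\sqrt{x}\right]$. The remedy is the sharpening already flagged after \ref{res:order-rsa1}: I would rerun the error estimate of \ref{res:psalm23} keeping the true outer range, exploiting that the dominant integrands (such as $\frac{~ln p}{p^{3/2}}$) are decreasing, so the omitted extension is precisely where the loss sat. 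Done this way each contribution picks up only a power $r^{\sigma/2}$, $r^{(1-\sigma)/2}$, or $r^{\sigma'/2}$, all bounded by $r^{1/2}$, while the boundary introduced by the cut at $p=\sqrt{x}$ adds merely $\bigO{x^{3/4}}$; collecting everything gives $\bigO{x^{\frac{3}{4}} r^{\frac{1}{2}}}$, and the case $~ln r \in o(~ln x)$ with $\widetilde{a}(x) \in 1+o(1)$ follows as for the earlier theorems.
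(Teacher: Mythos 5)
Your proposal is correct, and it follows the route the paper itself intends: for \ref{res:order-rsa2b} the paper offers no worked proof at all, only the remark that it is obtained ``by similar reasoning'' as \ref{res:order-rsa2}, i.e.\ by the template split-into-monotone-pieces, certify via \ref{res:criterionMonotone}, apply \ref{res:psalm23}, compute areas --- which is exactly your plan, and your balancedness parameters $c_{1} = \frac{1}{2} - (1+\sigma)\frac{\ln r}{\ln x}$ and $c_{2} = \frac{1}{2} + \sigma'\frac{\ln r}{\ln x}$ reproduce the stated bracket for $\widetilde{a}(x)$ verbatim, while your change of coordinates shows cleanly why the main term is $\sigma$-independent. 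Where you supply substance the paper leaves implicit is precisely at the two points where the analogy to $\sigma = 0$ is not automatic. First, on the piece $p \in \left(\sqrt{x}, r^{\sigma}\sqrt{x}\right]$ both inner bounds $\frac{x}{rp}$ and $\frac{x}{p}$ decrease, so neither \ref{res:criterionMonotone} nor \ref{res:criterionMonotoneSplit} (which requires both bounds \emph{increasing}) applies, and your direct verification of \ref{def:notion-monotone} is the right move; only note that the integral term is not decreasing merely because ``both endpoints move left'' --- the integrand $\frac{1}{\ln q}$ grows to the left, and the derivative check reduces to $r\ln\frac{x}{rp} \geq \ln\frac{x}{p}$, which holds once $\ln x \geq 4\ln r + O(1)$, i.e.\ exactly in the regime where the theorem is meaningful (compare the remark $r \in O(x^{\frac{1}{2}-\epsilon})$ after \ref{res:order-rsa1}). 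Second, your diagnosis of the error term is accurate: feeding this piece blindly into \ref{res:psalm23} gives only $\bigO{c_{1}^{-1} x^{1-\frac{c_{1}}{2}}} = \bigO{x^{\frac{3}{4}} r^{\frac{1+\sigma}{2}}}$, because the generic bound for the sixth error summand extends the outer integration down to $x^{c_{1}} = \sqrt{x}/r^{1+\sigma}$, and no finer subdivision repairs this since the generic bound depends only on the global $c_{1}$; rerunning the estimates with the true outer range $\left(\sqrt{x}, r^{\sigma}\sqrt{x}\right]$ (where $\int \frac{\ln p}{p^{3/2}}\,dp$ is governed by the lower limit $\sqrt{x}$) recovers $\bigO{x^{\frac{3}{4}} r^{\frac{1}{2}}}$, and this is precisely the slack the paper itself flags in the remarks after the proofs of \ref{res:psalm23} and \ref{res:order-rsa1}. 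So your write-up is not just consistent with the paper's one-line proof --- it fills in the two gaps that line glosses over.
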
 
%%% If we drop the condition $~ln r \in o(~ln x)$ only the asymptotics of
%%% the function $\widetilde{a}(x)$ changes.

\iffullversion
\subsection{Summary}
\fi As we see, \begin{fullversion}[both ]all\end{fullversion} notions%
\begin{fullversion}[~]%
  , summarized in \ref{fig:RSAThreeNotions},%
\end{fullversion}
open a slightly different view.  However the outcome is not that
different, at least the numbers of described RSA integers are quite
close to each other, see \ref{sec:CompareNotions}.
\begin{shortversion} The proof that this is the case for \emph{all}
  reasonable notions can be found in the extended
  version\ifanonymous\else\ \cite{loenus11a}\fi.
\end{shortversion}
%This is actually a special case of \ref{res:RSAEquivalent}.

%%% \footnotetext{Actually we can enclose both results:
%%%   \begin{align*}
%%%     \Q[\sqrt{\frac{x}{r}},\sqrt{rx}]{2}(x) - \delta \leq \RSADM(x)
%%%     \leq \Q[\frac{\sqrt{x}}{r},\sqrt{rx}]{2}(x) - \delta,
%%%   \end{align*}
%%%   where $\delta$ corrects the asymptotically negligible
%%%   contribution by squareful numbers that are not counted in
%%%   $\RSADM(x)$.  Unfortunately we cannot simply compare the results
%%%   in \cite{decmor08} to the ones in \cite{loenus09}, since
%%%   \citeauthor{decmor08} did not assume the Riemann hypothesis. We
%%%   thus have to redo their analysis using the tools described in
%%%   \cite{loenus09}. It will be shown that \citeauthor{decmor08}'s
%%%   result implies also under the Riemann hypothesis that for $r \in
%%%   o(~ln x)$ the term $\RSADM(x)$ is of order $\frac{4x}{~ln^{2} x}
%%%   \left( ~ln r - \frac{~ln r}{r}\right)$.  The lower bound is
%%%   approximately $\frac{4x}{~ln^{2}x} \left( ~ln r -
%%%     (1-\frac{1}{r})\right)$ and the upper $\frac{4x}{~ln^{2}x}
%%%   \left( ~ln r + (\frac{1}{r^{2}}-\frac{2}{\sqrt{r}}+1)\right)$,
%%%   see *deleted*.}

Current standards and implementations of various crypto packages
mostly use the notions $\RSAFIXSET[4,0]$, $\RSAFIXSET[4,1]$,
$\RSAFIXSET[2,0]$ or $\RSAALGVARSET[2,1/2]$.  For details see
\ref{sec:concImplem}.

\begin{fullversion}
  \section{Arbitrary notions}
  \label{sec:CompareNotions}

  The preceding examinations show that the order of the analyzed
  functions differ by a factor that only depends on the notion
  parameters, i.e.\ on~$r$ and $\sigma$,
  \begin{fullversion}[see \ref{res:order-rsa1},
    \bare\ref{res:order-rsa3} and \bare\ref{res:order-rsa2}. ]
    summarizing:
    \begin{theorem*}
      \label{res:RSAapproxSame}
      Assuming $~ln r \in o(~ln x)$ and $r>1$ and $\sigma \in [0,1]$
      we have
      \begin{enumerate}
      \item $\RSADM(x) \in (1+o(1)) \frac{4 x}{~ln^{2} x} \left(~ln r
          - \frac{~ln r}{r} \right),$
      \item $\RSAFIX[r,\sigma](x) \in (1+o(1)) \frac{4x}{~ln^{2} x}
        \left( \sigma ~ln r + 1 - \frac{2}{r^{\frac{1-\sigma}{2}}} +
          \frac{1}{r} \right)$,
      \item $\RSAALG(x) \in (1+o(1)) \frac{4 x}{~ln^{2} x} \left(~ln r
          - \frac{~ln r}{r} \right).$ \qed
      \end{enumerate}
    \end{theorem*}
  \end{fullversion}
  It is obvious that the three considered notions with many parameter
  choices cover about the same number of integers.

  To obtain a much more general result, we consider the following
  maximal notion
  \hanghere[-0.3\baselineskip]{\psset{unit=1.5mm}\drawnotion[]{8}{3}{\RSAMAXdraw}}
  \begin{equation}
    \label{def:RSAMAX}
    \RSAMAXSET
    := \left\langle 
      \Set{ (y,z) \in \R_{>1}^{2};
        \begin{array}{c}
          x^{c_{1}} < y \leq x^{1-c_{1}},
          \\
          x^{c_{1}} < z \leq x^{1-c_{1}},
          \\
          \frac{x}{r} < y z \leq x
        \end{array}
      } \right\rangle_{x \in \R_{>1}.}
    \qquad\qquad
  \end{equation}%
  All of the notions discussed in \ref{sec:NotionsInParticular} are
  subsets of this notion. Using the same techniques as above, we
  obtain:
  \begin{theorem}\label{res:order-rsamax}
    For $~ln r \in o(~ln x)$ we have under the Riemann hypothesis
    \begin{enumerate}
    \item\label{res:order-rsamax-i} For $c_1 \leq \frac{1}{2} - \frac{~ln r}{2~ln x}$ and for some $l$ and large $x$ additionally $c_1 > \frac12 - ~ln^l x ~ln r$, we have that 
      \begin{small}
      $$
      \RSAMAX(x) \in \widetilde{a}(x) \frac{4x}{~ln^{2} x} \left( (1 -
        2 c_{1}) \left(1-\frac{1}{r}\right) ~ln x - 1 + \frac{~ln r +
          1}{r} \right) + \mathcal{O}\left(c^{-1}
        x^{1-\frac{c_{1}}{2}}  ~ln^{l+1}\right),
      $$
      \end{small}
    \item\label{res:order-rsamax-ii} when $c_1 > \frac{1}{2} - \frac{~ln r}{2~ln x}$, we obtain the fixed bound notion
      $$
      \RSAMAX(x) \in \widetilde{a}(x) \frac{4x}{~ln^{2} x} \left( (1 -
        2 c_{1}) ~ln x + \frac{1}{x^{1-2 c_{1}}} - 1 \right) +
      \mathcal{O}\left( c_1^{-1} \cdot x^{1-\frac{c_{1}}{2}}
        x\right)\hspace{2cm}.
      $$
      This is independent of $r$.
    \end{enumerate}
    In both cases $\widetilde{a}(x) \in \left[ \frac{1}{4(1-c_{1})^2},
      \frac{1}{4c_{1}^2} \right]$.  In
    particular for $c_1 \in \frac{1}{2}+o(1)$ we have
    $\widetilde{a}(x) \in 1+o(1)$. \ifshortversion\qed\fi
  \end{theorem}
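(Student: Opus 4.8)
The plan is to reduce both cases to \ref{res:psalm23}. In the notation of \ref{def:notion-graph-bounded} the notion has $B_1(x)=x^{c_1}$, $C_1(x)=x^{1-c_1}$, $B_2(p,x)=~max(x^{c_1},\frac{x}{rp})$ and $C_2(p,x)=\frac{x}{p}$, where I used that $yz\leq x$ together with $y,z>x^{c_1}$ already forces $y,z<x^{1-c_1}$, so the upper balancedness bound is redundant; consequently $c_2=1-c_1$ and $c=~max(2c_2-1,1-2c_1)=1-2c_1$. The difficulty flagged in the text is that $\RSAMAXSET$ is not directly monotone: once the product bound is active, $B_2(p,x)=\frac{x}{rp}$ for small $p$, so both inner bounds are decreasing there, a situation covered neither by \ref{res:criterionMonotone} nor by \ref{res:criterionMonotoneSplit}. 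I would avoid this by decomposing $\RSAMAXSET_x=T\setminus T'$ into two \emph{triangle} notions $T:=\{(y,z)\in\R_{>1}^{2}: y,z>x^{c_1},\ yz\leq x\}$ and $T':=\{(y,z)\in\R_{>1}^{2}: y,z>x^{c_1},\ yz\leq\frac{x}{r}\}\subseteq T$, each of which has constant lower inner bound $x^{c_1}$ and weakly decreasing upper inner bound and is therefore monotone by \ref{res:criterionMonotone}; the price is that their tolerances become the large value $x^{1-2c_1}$ rather than $r$, which is harmless since \ref{res:psalm23} permits $r$ to vary with $x$. Note that $T'=\emptyset$ exactly when $x^{2c_1}>\frac{x}{r}$, i.e.\ in case \ref{res:order-rsamax-ii}.

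Applying \ref{res:psalm23} to $T$ and to $T'$ and subtracting the two prime-pair counts, the error terms add to $\bigO{c_1^{-1}x^{1-c_1/2}}$ (using $c=1-2c_1$). Elementary integration yields $~area(T)=x\bigl((1-2c_1)~ln x+x^{2c_1-1}-1\bigr)$ and $~area(T')=\frac{x}{r}\bigl((1-2c_1)~ln x-~ln r\bigr)-\frac{x}{r}+x^{2c_1}$. In case \ref{res:order-rsamax-ii} we have $T'=\emptyset$ and $\RSAMAXSET_x=T$, so the resulting formula is manifestly $r$-independent. In case \ref{res:order-rsamax-i} the product bound is active and $~area(T)-~area(T')=x\bigl((1-2c_1)(1-\frac1r)~ln x-1+\frac{~ln r+1}{r}\bigr)$, which is the claimed main factor.

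The one real obstacle I anticipate is producing a \emph{single} coefficient $\widetilde{a}(x)$ lying in $\left[\frac{1}{4(1-c_1)^{2}},\frac{1}{4c_1^{2}}\right]$. Combining the two applications naively via $\widetilde{a}_T\,~area(T)-\widetilde{a}_{T'}\,~area(T')=\widetilde{a}(x)\bigl(~area(T)-~area(T')\bigr)$ need not keep $\widetilde{a}(x)$ inside this interval, since it is a difference quotient of two numbers that individually lie in it. The remedy is to invoke the integral form of the main term from the proof of \ref{res:psalm23}: each main term equals $\iint\frac{1}{~ln p\,~ln q}\,dp\,dq$ over its region, and these integrals are additive under the set difference, so the combined main term is $\iint_{\RSAMAXSET_x}\frac{1}{~ln p\,~ln q}\,dp\,dq$. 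Setting $\widetilde{a}(x):=\frac{~ln^{2}x}{4\,~area(\RSAMAXSET_x)}\iint_{\RSAMAXSET_x}\frac{1}{~ln p\,~ln q}\,dp\,dq$ and using $~ln p,~ln q\in[c_1~ln x,(1-c_1)~ln x]$ throughout $\RSAMAXSET_x$ then forces $\widetilde{a}(x)\in\left[\frac{1}{4(1-c_1)^{2}},\frac{1}{4c_1^{2}}\right]$, which is $1+o(1)$ once $c_1\in\frac12+o(1)$.

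Finally, some care is needed to keep the error negligible against the main term as $c_1\to\frac12$, where the leading term $(1-2c_1)~ln x$ degenerates; I expect this is precisely what the extra hypothesis $c_1>\frac12-~ln^{l}x\,~ln r$ and the additional $~ln^{l+1}x$ factor in the error bound of case \ref{res:order-rsamax-i} are there to absorb, either through the triangle split above or through a slightly finer piecewise-monotone decomposition in the spirit of the remark following \ref{res:criterionMonotoneSplit}.
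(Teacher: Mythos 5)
Your proposal is correct, but in the band case \ref{res:order-rsamax-i} it takes a genuinely different route from the paper. In case \ref{res:order-rsamax-ii} the two arguments coincide: the paper observes that $\RSAMAXSET$ is then simply the fixed-bound notion $\RSAFIXSET[x^{1-2c_1},1]$ (your triangle $T$) and invokes \ref{res:order-rsa3}. In case \ref{res:order-rsamax-i}, however, the paper insists on keeping tolerance~$r$: it triangulates the band into $m = (1-2c_1)\frac{\ln x}{\ln r}$ monotone, $[c_1,1-c_1]$-balanced pieces of tolerance~$r$ and sums \ref{res:psalm23} over the pieces; the hypothesis $c_1 > \frac12 - \ln^{l} x \ln r$ exists solely to force $m \in \bigO{\ln^{l+1}x}$, and the summation over the $m$ pieces is precisely what produces the $\ln^{l+1}x$ factor in the stated error. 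Your decomposition $\RSAMAXSET_x = T\setminus T'$ instead gives up the tolerance (it blows up to $x^{1-2c_1}$), which is indeed harmless since the error in \ref{res:psalm23} depends on the tolerance only through the balancedness exponents $c_1$ and $c_2 = 1-c_1$; this sidesteps the obstruction flagged after \ref{res:criterionMonotoneSplit}, where retaining the tolerance is named as the difficulty and $\RSAMAXSET$ as the hard example. You also correctly identified and resolved the one genuine subtlety of the subtraction route: for the paper's \emph{sum} of pieces the combined coefficient is a convex combination of the individual $\widetilde{a}_j$ with positive area weights and automatically stays in $\left[\frac{1}{4(1-c_1)^2},\frac{1}{4c_1^2}\right]$ --- which is why the paper never confronts the issue --- whereas for a \emph{difference} it need not, and your appeal to the exact integral expression for the main term (stated explicitly right after \ref{res:psalm23}), which is additive under set difference, is exactly the right fix. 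Your area computations and the case boundary ($T'=\emptyset$ iff $c_1 \geq \frac12 - \frac{\ln r}{2\ln x}$) check out against the stated main terms. One small correction to your closing remark: your split needs neither the hypothesis $c_1 > \frac12 - \ln^{l}x\,\ln r$ nor the $\ln^{l+1}x$ factor --- it yields the cleaner error $\bigO{c_1^{-1}x^{1-c_1/2}}$ and thus a slightly stronger statement; those two features are artifacts of the paper's piecewise-tolerance-$r$ decomposition only, not something any version of the triangle split must absorb.
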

  \begin{fullversion}
 Case \ref{res:order-rsamax-i} considers the case where the notion $\RSAMAXSET$ looks like a thin band. The other alternative \ref{res:order-rsamax-ii} treats the case where the notion is actually a triangle, namely the notion $\RSAFIXSET[x^{1-2c_1},1]$. In the former case we have to make sure that the band is not too long so that we may apply  \ref{res:psalm23} for not too many pieces. As noted after \ref{def:notion-special}, the first case could still be somewhat extended.
    \begin{proof}
      As usual let $x$ be such that all sum boundaries are beyond
      $2657$. By definition $\RSAMAXSET$ is a notion of tolerance~$r$.
      Further it is clearly $\left[c_{1}, 1-c_{1}
      \right]$-balanced. For $c_1 > \frac{1}{2} - \frac{~ln r}{2~ln
        x}$ the result follows directly from \ref{res:order-rsa3},
      since $\RSAMAXSET$ is simply the fixed bound notion
      $\RSAFIXSET[x^{1-2c_1},1]$.
      
      For $c_1 \leq \frac{1}{2} - \frac{~ln r}{2~ln x}$ we treat the
      notion as the sum of several monotone, $[c_1, 1-c_1]$-balanced
      notions of tolerance $r$ by triangulating the maximal notion as
      indicated in the picture next to \ref{def:RSAMAX}. The number
      $m$ of necessary cuts is $(1-2c_1) ~\frac{~ln x}{~ln r}$ which
      is in $\bigO{~ln^{l+1} x}$ by assumption. This gives by \ref{res:psalm23} the
      claim.
      %%% For $c_1 \leq \frac{1}{2} - \frac{~ln r}{2~ln x}$ we treat
      %%% the notion as the difference of two monotone, $[c_1,
      %%% 1-c_1]$-balanced fixed-bound notions of tolerance
      %%% $x^{1-2c_1}$ and $\frac{1}{r} x^{1-2c_1}$,
      %%% respectively. More specifically we write
      %%% $$\RSAMAXSET_x = \RSAFIXSET[x^{1-2c_1},1]_x \setminus \RSAFIXSET[x^{1-2c_1}/r,1]_{x/r}.$$ 
      %%% Applying \ref{res:order-rsa3} twice yields
      %%% $$
      %%% \RSAFIX[x^{1-2c_1},1](x) \in \widetilde{a}_1(x) (1-2c_1) ~ln
      %%% x + \frac{1}{x^{1-2 c_1}} - 1 + \mathcal{O}\left(c^{-1}
      %%%   \cdot x^{1-\frac{c_{1}}{2}}\right)
      %%% $$
      %%% and
      %%% $$
      %%% \RSAFIX[x^{1-2c_1}/2,1](x/r) \in \widetilde{a}_2(x)
      %%% \frac{1}{r} (1-2c_1) ~ln x - \frac{~ln r }{r} - \frac{1}{r}
      %%% + \frac{1}{x^{1-2 c_1}}+ \mathcal{O}\left(c^{-1} \cdot
      %%%   x^{1-\frac{c_{1}}{2}}\right)
      %%% $$
      %%% with $\widetilde{a}_1(x), \widetilde{a}_2(x) \in \left[
      %%%   \frac{1}{4(1-c_{1})^2}, \frac{1}{4c_{1}^2} \right]$. This
      %%% gives the claim.
    \end{proof}
  \end{fullversion}
  We obtain
  %%% \begin{namedtheorem*}{Old version of \ref{res:RSAEquivalent}}
  %%%   Let $c_{1}, c_{2} \in \frac{1}{2}+o(1)$.  Then all piecewise
  %%%   monotone $[c_{1},c_{2}]$-balanced notion $\RSASET$ of RSA
  %%%   integers with tolerance~$r$ are essentially equivalent, i.e.\
  %%%   their main term is equal up to a term of order $o(~ln x)$,
  %%%   provided that there is a constant $c \in \R_{> 0}$ such that
  %%%   for
  %%%   every $x \in \R_{>1}$ we have $~area(\RSASET_{x}) \geq c x$.
  %%% \end{namedtheorem*}
  \begin{theorem}
    \label{res:RSAEquivalent}
    Let $c_{1}, c_{2} \in \frac{1}{2}+o(1)$, $r>1$ with $~ln r \in
    \bigOmega{\frac{1-2c_1}{~ln^l x}} \cap o(~ln x)$ be possibly
    $x$-dependent values, and $a \in \left]0,1\right[$ constant.
    Consider a
    \begin{fullversion}%
      piecewise
    \end{fullversion}%
    monotone notion $\RSASET$ of RSA integers with tolerance~$r$ such
    that for large $x \in \R_{>1}$ we have $~area \RSASET_{x} \geq a
    x$.  Then
    \begin{align*}
      \RSA(x) &= \frac{4x}{~ln^{2}x} \cdot \widetilde{a}(x)
    \end{align*}
    where $\widetilde{a}(x) \in o(~ln x)$ and $\widetilde{a}(x) \geq a
    - \epsilon(x)$ for some $\epsilon(x) \in o(1)$.

    %%% \begin{align*}
    %%%   \RSA(x) &\in \widetilde{a}(x) \cdot \frac{4x}{~ln^{2}x} +
    %%%   \mathcal{O}(x^{\frac{3+c}{4}})
    %%% \end{align*}
    %%% with $\widetilde{a}(x) \geq a$ and $\widetilde{a}(x) \in o(~ln
    %%% x)$.

    In particular, the prime pair counts of two such notions can
    differ by at most a factor of order $o(~ln
    x)$. \ifshortversion\qed\fi
  \end{theorem}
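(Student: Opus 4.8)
The plan is to sandwich $\RSA(x)$ between a lower bound supplied by the area hypothesis and an upper bound supplied by the fact that $\RSASET$ lives inside a maximal notion, and then to read off the two asserted properties of $\widetilde{a}(x) := \RSA(x) \cdot \frac{~ln^{2} x}{4x}$. The common thread is that $c_{1}, c_{2} \in \frac{1}{2}+o(1)$ forces $c := ~max(2c_{2}-1, 1-2c_{1}) \in o(1)$, so the error exponent $\frac{3+c}{4}$ of \ref{res:psalm23} stays strictly below $1$, and that its correction factor always lies in $\left[\frac{1}{4c_{2}^{2}}, \frac{1}{4c_{1}^{2}}\right] \subseteq 1+o(1)$.

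For the lower bound I would write the piecewise monotone notion as $\RSASET_{x} = \biguplus_{j=1}^{m} \RSASET_{j,x}$ with monotone, $[c_{1},c_{2}]$-balanced pieces of tolerance~$r$, so that $\RSA(x)$ is the sum of the corresponding prime pair counts. Applying \ref{res:psalm23} to each piece and adding up, the main terms combine to at least $(1-o(1)) \cdot \frac{4 ~area(\RSASET_{x})}{~ln^{2} x} \geq (1-o(1)) \cdot \frac{4ax}{~ln^{2} x}$ by the hypothesis $~area(\RSASET_{x}) \geq a x$, while the accumulated error is $m \cdot \bigO{c_{1}^{-1} x^{\frac{3+c}{4}}}$. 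Taking $m \in ~ln^{\bigO{1}} x$ (which is safe by the remark after \ref{def:notion-special}) together with $c_{1}^{-1} \in \bigO{1}$ and $\frac{3+c}{4} < 1$ makes this error $o(x / ~ln^{2} x)$, whence $\widetilde{a}(x) \geq a - \epsilon(x)$ for a suitable $\epsilon(x) \in o(1)$.

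For the upper bound I would embed $\RSASET$ into the maximal notion $\RSAMAXSET$ whose balancing parameter $\gamma$ is the smaller of $c_{1}$ and $1-c_{2}$, hence again $\gamma \in \frac{1}{2}+o(1)$: balancedness enlarges the admissible square to $\left[x^{\gamma}, x^{1-\gamma}\right]^{2}$ and the shared tolerance window $\frac{x}{r} < y z \leq x$ gives $\RSASET_{x} \subseteq \RSAMAXSET_{x}$, so that $\RSA(x) \leq \RSAMAX(x)$. By \ref{res:order-rsamax} the factor $\RSAMAX(x) \cdot \frac{~ln^{2} x}{4x}$ is $(1+o(1))$ times an expression whose dominant part is $(1-2\gamma) ~ln x$, every remaining summand being $\bigO{1}$ or of order $~ln r \in o(~ln x)$. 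Since $\gamma \in \frac{1}{2}+o(1)$ gives $(1-2\gamma) ~ln x \in o(~ln x)$, this yields $\widetilde{a}(x) \in o(~ln x)$ in either case of \ref{res:order-rsamax}; the hypothesis $~ln r \in \bigOmega{\frac{1-2c_{1}}{~ln^{l} x}}$ is exactly the side condition that makes the thin band case applicable when it is the relevant one.

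The closing assertion is then immediate: for two admissible notions the quotient of their prime pair counts is $\widetilde{a}(x) / \widetilde{a}'(x)$, a ratio of a quantity in $o(~ln x)$ over one bounded below by $a - \epsilon(x) \to a > 0$, hence itself in $o(~ln x)$. The step I expect to be the main obstacle is the bookkeeping in the lower bound: controlling the per-piece error once it is multiplied by the possibly $x$-dependent number $m$ of monotone pieces, and verifying that the stated coupling between $r$, $c_{1}$ and the exponent $l$ simultaneously keeps $m$ small and lets the thin band case of \ref{res:order-rsamax} fire. Once \ref{res:psalm23} is in force, the passage from area to prime pair count on each piece is routine.
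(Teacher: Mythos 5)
Correct, and essentially the paper's own argument: the paper obtains the same lower bound $\RSA(x) \geq a \cdot \frac{4x}{\ln^{2} x} - \widehat{a}(x)$ with $\widehat{a}(x) \in \bigO{x^{\frac{3+c}{4}}}$ from \ref{res:psalm23} (your explicit summation over the $m$ monotone pieces with $m \in \ln^{\bigO{1}} x$ is exactly the multi-application the paper leaves implicit via the remark after \ref{def:notion-special}), and the same upper bound by embedding $\RSASET_{x}$ into the $[c_{1},1-c_{1}]$-balanced maximal notion and citing \ref{res:order-rsamax} together with $\ln r \in o(\ln x)$ --- your choice $\gamma = \min(c_{1}, 1-c_{2})$ is precisely the paper's normalization ``without loss of generality $c_{1}+c_{2} \leq 1$''. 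You also correctly identify the hypothesis $\ln r \in \bigOmega{\frac{1-2c_{1}}{\ln^{l} x}}$ as the side condition activating the thin-band case of \ref{res:order-rsamax}, which is indeed its (unstated) role in the paper's proof; no gaps.
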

  \begin{shortversion}
    The statement of the theorem follows basically by an application
    of \ref{res:psalm23} and \ref{res:order-rsamax}.  Clearly the
    theorem remains true if the notion is a union of, say, two or
    three monotone notions as, for example, $\RSADMSET$ or
    $\RSAALGSET$.
    %%% \begin{proof}
    %%%   The proof will appear in the extended version of this work.
    %%% \end{proof}
  \end{shortversion}
  \begin{fullversion}
    \begin{proof}
      Let $\RSASET$ be as specified.  Assume $x$ to be large enough to
      grant that $~area \RSASET_{x} \geq a x$ and $x^{c_{1}} > 2657$.
      Without loss of generality we assume $c_{1}+c_{2}\leq 1$.
      Otherwise we replace $c_{2} = 1-c_{1}$.  Denote $c :=
      ~max(2c_{2}-1,1-2c_{1})$, this now is always in $[0,1]$.
      %%% c= max( 2(c_{2}+c_{1})-2, 0 )+(1-2c_{1})
      %%% x^{2c_{1}} \leq x --> 0 < c_{1} <= 1/2.
      %%% 0 <= 1-2c_{1} < 1
      %%% (Or $\RSASET_{x} \in \R_{>2657} \times \R_{>2657}$.)
      By \ref{res:psalm23} we obtain
      $$
      \RSA(x) \geq a \frac{4x}{~ln^{2} x} - \widehat{a}(x),\qquad
      \widehat{a}(x) \in \mathcal{O}(x^{\frac{3+c}{4}}).
      $$
      To provide an upper bound, we consider the $[c_{1},
      1-c_{1}]$-balanced maximal notion \ref{def:RSAMAX}.
      %%% given by the sets
      %%% $$
      %%% \RSAMAXSET_{x} := \Set{(y,z) \in \R^{2}; y,z \geq
      %%%   x^{c_{1}}\Land \frac{x}{r} < y z \leq x }.
      %%% $$
      As mentioned above we have for all $x \in \R_{>1}$ that
      $\RSASET_{x} \subseteq \RSAMAXSET_{x}$, and so $\RSA(x) \leq
      \RSAMAX(x)$.  Note that $c_{1} \leq \frac{1}{2}$, as otherwise
      $\mathcal{A}_{x}$ would be empty rather than having area at
      least $a x$.  By assumption we have $c_{1} \in \frac{1}{2}+o(1)$
      and thus $0\leq 1-2c_{1} \in o(1)$. Now the claim follows from
      \ref{res:order-rsamax} and the assumption $~ln r \in o(~ln x)$.
    \end{proof}
  \end{fullversion}

  \begin{fullversion}
    In the following we will analyze the relation between the proposed
    notions in more detail.  Namely, we carefully check how each of
    the notions can be enclosed in terms of the others. Clearly the
    fixed bound notions $\RSAFIXSET[r, \sigma]$
    \begin{fullversion}[grow with
      increasing~$\sigma$.\refstepcounter{equation} ]%
      enclose each other:
      %%% \begin{figure}
      %%%   \centering%
      %%%   \input{\jobname-compareFB}
      %%%   \caption{Enclosing the fixed bound notions of RSA integers
      %%%     with themselves.}
      %%%   \label{fig:RSAcompareFB}
      %%% \end{figure}
      \begin{lemma}\label{res:compareCountRSA3}
        For $r \in \R_{>1}$, $x \in \R_{>1}$ and $\sigma, \sigma' \in
        [0,1]$ with $\sigma \leq \sigma'$ we have
        $$
        \RSAFIX[\sqrt{r}, 1](x/\sqrt{r}) \leq \RSAFIX[r, 0](x) \leq
        \RSAFIX[r, \sigma](x) \leq \RSAFIX[r, \sigma'](x) \leq
        \RSAFIX[r,1](x)
        $$
      \end{lemma}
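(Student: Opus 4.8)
The plan is to reduce every inequality in the chain to an inclusion between the underlying regions and then invoke the obvious monotonicity of the prime pair counting function: whenever $\RSAFIXSET[r,\sigma]_{x} \subseteq \RSAFIXSET[r',\sigma']_{x'}$ as subsets of $\R_{>1}^{2}$, every prime pair counted on the left is also counted on the right, so $\RSAFIX[r,\sigma](x) \leq \RSAFIX[r',\sigma'](x')$. Hence it suffices to verify the four corresponding region inclusions, which is purely a matter of comparing the defining inequalities of \ref{def:RSA3}.

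First I would dispatch the three rightmost inequalities, which only vary $\sigma$ while keeping $r$ and $x$ fixed. The region $\RSAFIXSET[r,\sigma]_{x}$ is cut out by the common lower box bound $\sqrt{x/r} < y,z$, the product bound $yz \leq x$, and the upper box bound $y,z \leq \sqrt{r^{\sigma} x}$. Only the last depends on $\sigma$, and since $r > 1$ the quantity $\sqrt{r^{\sigma} x}$ is weakly increasing in $\sigma$. Thus $\sigma \leq \sigma'$ forces $\RSAFIXSET[r,\sigma]_{x} \subseteq \RSAFIXSET[r,\sigma']_{x}$, and taking the extreme admissible values $\sigma = 0$ and $\sigma' = 1$ of the range $[0,1]$ yields the central chain $\RSAFIX[r,0](x) \leq \RSAFIX[r,\sigma](x) \leq \RSAFIX[r,\sigma'](x) \leq \RSAFIX[r,1](x)$.

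The leftmost inequality is the only one needing a genuine parameter substitution, so I would treat it separately by evaluating \ref{def:RSA3} at the shifted arguments. Writing out $\RSAFIXSET[\sqrt{r},1]$ at size $x/\sqrt{r}$ gives the lower box bound $\sqrt{(x/\sqrt{r})/\sqrt{r}} = \sqrt{x/r}$, the upper box bound $\sqrt{\sqrt{r}\cdot x/\sqrt{r}} = \sqrt{x}$, and the product bound $yz \leq x/\sqrt{r}$. The box $\left(\sqrt{x/r},\sqrt{x}\,\right]^{2}$ is exactly that of $\RSAFIXSET[r,0]_{x}$, whereas the product bound $yz \leq x/\sqrt{r}$ is strictly tighter than the bound $yz \leq x$ of $\RSAFIXSET[r,0]_{x}$ because $r > 1$. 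Hence $\RSAFIXSET[\sqrt{r},1]_{x/\sqrt{r}} \subseteq \RSAFIXSET[r,0]_{x}$, and monotonicity gives $\RSAFIX[\sqrt{r},1](x/\sqrt{r}) \leq \RSAFIX[r,0](x)$.

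There is no substantial obstacle: the whole argument is bookkeeping driven by set inclusion. The only point demanding care is the substitution in the leftmost inequality, where one must confirm that the two rescalings of the size argument and of the tolerance conspire to make the boxes coincide \emph{exactly} while leaving the product bound strictly smaller --- which is precisely what the simplification of the nested square roots delivers.
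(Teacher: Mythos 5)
Your proposal is correct and takes essentially the same approach as the paper: the paper's proof likewise reduces every inequality to a region inclusion, observing $x/\sqrt{r} \leq x$ for the leftmost inequality (after the substitution the two boxes coincide exactly, just as you compute) and $\sqrt{r^{\sigma} x} \leq \sqrt{r^{\sigma'} x}$ for $\sigma \leq \sigma'$ for the remaining ones. Your write-up merely makes explicit the bookkeeping the paper leaves to the reader.
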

      \begin{proof}
        For the first inequality simply observe that $x/\sqrt{r} \leq
        x$.  The remaining inequalities follow from the fact that
        $\sqrt{r^{\sigma} x} \leq \sqrt{r^{\sigma'} x}$ whenever
        $\sigma \leq \sigma'$.
      \end{proof}
    \end{fullversion}
    We can also enclose different notions by each other:
    \begin{lemma}\label{res:compareCountRSAAll}
      For $r \in \R_{>1}$ and $x \in \R_{>1}$ we have
      $$
      \frac{1}{2} \RSAFIX[r,1](x) \leq \frac{1}{2}\RSADM(x) \leq
      \RSAALG(x) \leq \RSAFIX[r^{2},1](x)
      $$
      %%% 
      %%% $$
      %%% \frac{1}{2} \RSAFIX[r,1](x) \leq \RSAALG(x) \leq \RSADM(x)
      %%% \leq \RSAFIX[r^{2},\frac{1}{2}](x) \leq \RSAFIX[r^{2},1](x)
      %%% $$
    \end{lemma}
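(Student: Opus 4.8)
The plan is to reduce all three inequalities to inclusions of the defining regions. The prime-pair counting function is monotone under inclusion: if, for the relevant $x$, the region of one notion is contained in the region of another, then every prime pair counted by the first notion is also counted by the second, so the first count is at most the second. Hence I only need a few elementary containments among $\RSADMSET_{x}$, $\RSAALGSET_{x}$, $\RSAFIXSET[r,1]_{x}$ and $\RSAFIXSET[r^{2},1]_{x}$; no number-theoretic input (in particular not \ref{res:psalm23}) is needed, since these are exact inequalities between the counting functions.

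For the outer two inequalities I would verify the containments directly. For the left one I claim $\RSAFIXSET[r,1]_{x} \subseteq \RSADMSET_{x}$: if $\sqrt{x/r} < y,z \le \sqrt{rx}$ with $yz \le x$, then $yz > (\sqrt{x/r})^{2} = x/r$ gives the tolerance bound, while $y/r \le \sqrt{rx}/r = \sqrt{x/r} < z$ and $z \le \sqrt{rx} = r\sqrt{x/r} < ry$ (using $y > \sqrt{x/r}$) give the ratio condition $y/r < z < ry$. This yields $\RSAFIX[r,1](x) \le \RSADM(x)$, and halving gives the first inequality. For the right one I claim $\RSAALGSET_{x} \subseteq \RSAFIXSET[r^{2},1]_{x}$, where the latter region is the set of $(y,z)$ with $\sqrt{x}/r < y,z \le r\sqrt{x}$ and $yz \le x$: from $\sqrt{x}/r < y \le \sqrt{x}$ and $\frac{x}{ry} < z \le \frac{x}{y}$ I get $z > \frac{x}{ry} \ge \frac{x}{r\sqrt{x}} = \frac{\sqrt{x}}{r}$ and $z \le \frac{x}{y} < \frac{x}{\sqrt{x}/r} = r\sqrt{x}$, while $y$ already lies in the required range and $yz \le x$ is part of the definition; this gives $\RSAALG(x) \le \RSAFIX[r^{2},1](x)$.

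The middle inequality $\frac{1}{2}\RSADM(x) \le \RSAALG(x)$ is the subtle one, because $\RSADMSET$ is symmetric whereas $\RSAALGSET$ records essentially only the pairs with the smaller prime first. Here the plan is to split $\RSADMSET_{x}$ along the main diagonal and show $\{(y,z) \in \RSADMSET_{x} : y \le z\} \subseteq \RSAALGSET_{x}$. For such $(y,z)$: the condition $y \le z$ together with $yz \le x$ forces $y \le \sqrt{x}$; the condition $z < ry$ forces $x/r < yz < ry^{2}$, hence $y > \sqrt{x}/r$; and $\frac{x}{ry} < z \le \frac{x}{y}$ is merely a rewriting of the tolerance $x/r < yz \le x$. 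Writing $N_{<}$ and $N_{=}$ for the numbers of prime pairs in $\RSADMSET_{x}$ with $p < q$ and with $p = q$, symmetry gives $\RSADM(x) = 2N_{<} + N_{=}$, so $\frac{1}{2}\RSADM(x) = N_{<} + \frac{1}{2}N_{=} \le N_{<} + N_{=}$, which is exactly the number of prime pairs with $p \le q$; by the inclusion each such pair lies in $\RSAALGSET_{x}$ and is therefore counted by $\RSAALG(x)$.

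The only genuine obstacle is the bookkeeping in this middle step: one must place the prime-square (diagonal) contribution $N_{=}$ on the correct side of the estimate and match ``counting the pairs with $p \le q$'' to the asymmetric shape of $\RSAALGSET$. Everything else is the routine verification of the four defining inequalities of each notion, so the lemma follows once the three containments above are checked.
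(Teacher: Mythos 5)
Your proof is correct, but it is organized differently from the paper's. The paper never phrases the argument as region containments: it works with the explicit double-sum decompositions \ref{eq:RSA3SUM}, \ref{eq:RSA1SUM} and \ref{eq:RSA2SUM}, identifying $\frac{1}{2}\RSAFIX[r,1](x)$ (``due to the restriction $p<q$'') with the second summand of \ref{eq:RSA1SUM} for the first inequality, then enlarging the $q$-summation ranges (using $rp \le x/p$ for $p \le \sqrt{x/r}$, and $x/(rp)\le p$) to pass from \ref{eq:RSA1SUM} to \ref{eq:RSA2SUM} for the middle inequality, and enlarging ranges once more for the last. You replace all of this by three set inclusions --- $\RSAFIXSET[r,1]_{x}\subseteq\RSADMSET_{x}$, the half $\Set{(y,z)\in\RSADMSET_{x}; y\le z}\subseteq\RSAALGSET_{x}$, and $\RSAALGSET_{x}\subseteq\RSAFIXSET[r^{2},1]_{x}$ --- combined with the identity $\RSADM(x)=2N_{<}+N_{=}$, and all three containments check out against the definitions. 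Two dividends of your route are worth noting. First, your diagonal bookkeeping is more scrupulous than the paper's: the paper's claimed equality $\frac{1}{2}\RSAFIX[r,1](x)=\sum_{p}\sum_{p<q\le x/p}1$ silently drops the prime squares (the left side is $N_{<}+N_{=}/2$, the right side $N_{<}$), which is harmless only because \ref{eq:RSA1SUM} carries the compensating term $\frac{1}{2}\bigl(\pi(\sqrt{x})-\pi(\sqrt{x/r})\bigr)$ --- exactly your $N_{=}/2$ --- whereas you place that term on the correct side explicitly. Second, your first containment actually proves the stronger statement $\RSAFIX[r,1](x)\le\RSADM(x)$, which is precisely the first half of the paper's subsequent, tighter lemma $\RSAFIX[r,1](x)\le\RSADM(x)\le\RSAFIX[r^{2},\frac{1}{2}](x)$, and the pointwise verification you give for it coincides with the proof given there. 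What the paper's sum-level presentation buys is reuse: the decompositions \ref{eq:RSA1SUM}--\ref{eq:RSA2SUM} are already set up for the asymptotic theorems, so the comparison comes almost for free; your presentation is self-contained, needs no number theory, and makes the inclusion logic (and the role of the diagonal) visible.
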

    %%% \begin{shortversion}
    %%%   \begin{proof}
    %%%     The proof will appear in the extended version of this
    %%%     work.
    %%%   \end{proof}
    %%% \end{shortversion}
    \begin{fullversion}
      \begin{proof}
        We prove every inequality separately.
        \begin{figure}
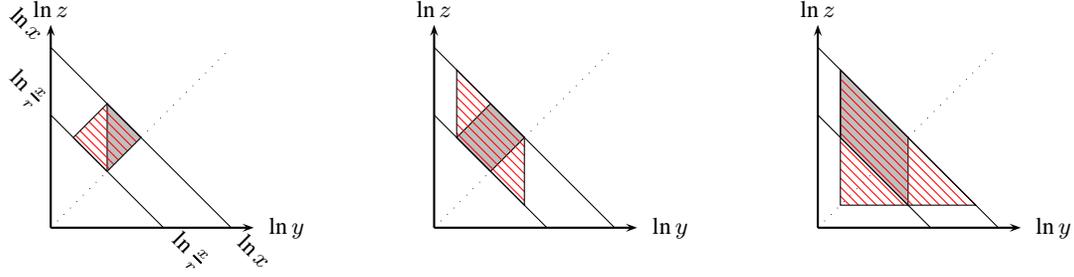

          \centering%
          %%% \input{\jobname-compare3} \hfill
          %%% \input{\jobname-compare2}
          %%% \hfill \input{\jobname-compare1}
          %%% \\
          \newpsstyle{notionsplit}{linestyle=none}%
          \drawnotion{8}{3}{%
            \begin{psclip}{\psline[linestyle=none]%
                (-\mylnx,-\mylnx)(\mylnx,\mylnx)(\mylnx,-\mylnx)}
              \RSAFIXdraw{1}%
              \newpsstyle{notionarea}{style=notionarea2} \RSADMdraw
            \end{psclip}
            \psline[style=notionarea](0,0)(1,1)%
          }%
          \hfill%
          \drawnotion[]{8}{3}{%
            \begin{psclip}{\psline[linestyle=none]%
                (-\mylnx,-\mylnx)(\mylnx,\mylnx)(\mylnx,-\mylnx)}
              \RSADMdraw
            \end{psclip}
            \psline[style=notionarea](0,0)(1,1)%
            \newpsstyle{notionarea}{style=notionarea2} \RSAALGdraw%
          }%
          \hfill%
          \drawnotion[]{8}{3}{%
            \begin{psclip}{\psline[linestyle=none]%
                (-\mylnx,-\mylnx)(\mylnx,\mylnx)(\mylnx,-\mylnx)}
            \end{psclip}
            \RSAALGdraw%
            {%
              \newpsstyle{notionarea}{style=notionarea2}
              \psset{unit=2\psunit}%%% square r (double ln r)
              \RSAFIXdraw{1}}%
          }%
          \caption{Enclosing notions of RSA integers using others.}
          \label{fig:RSAcompareAll}
        \end{figure}
        For an easier understanding of the proof a look at
        \ref{fig:RSAcompareAll} is advised:
        \begin{description}
        \item[$\frac{1}{2}{\RSAFIX[r,1](x)} \leq
          \frac{1}{2}{\RSADM(x)}$: ] Consider the double sum
          \ref{eq:RSA3SUM}
          $$
          \frac{1}{2} \RSAFIX[r,1](x) = \frac{1}{2} \primesum{
            \primesum{1}{q}{\sqrt{\frac{x}{r}}}{\frac{x}{p}}} {p}
          {\sqrt{\frac{x}{r}}} {\sqrt{r x}} = \primesum{
            \primesum{1}{q}{p}{\frac{x}{p}}} {p} {\sqrt{\frac{x}{r}}}
          {\sqrt{x}}
          $$
          due to the restriction $p < q$. This is exactly the second
          summand in \ref{eq:RSA1SUM}.
        \item[$\frac{1}{2}{\RSADM(x)} \leq {\RSAALG(x)}$: ] Consider
          again the double sum \ref{eq:RSA1SUM}. We expand the
          summation area for $q$ (thus increasing the number of
          primepairs we count) in order to obtain the sum
          \ref{eq:RSA2SUM} for the algorithmic notion: For the first
          summand we obtain from $p \leq \sqrt{\frac{x}{r}}$ that $rp
          \leq \frac{x}{p}$ and for the second summand from the same
          argument that $\frac{x}{rp} \leq p$. The third summand
          disappears while doing this, since the squares (which are
          counted by the third summand) are now counted by the second
          summand. Thus we can bound the whole sum from above by
          changing the summation area for $q$ in this way.
          %%% Assume that $\frac{\sqrt{x}}{r} < p \leq
          %%% \sqrt{\frac{x}{r}}$. Then $r p < \sqrt{r x} \leq
          %%% \frac{x}{p}$. If on the other hand $\sqrt{\frac{x}{r}} <
          %%% p
          %%% \leq \sqrt{x}$, then $\frac{x}{r p} < \sqrt{\frac{x}{r}}
          %%% \leq
          %%% p$ and the claim follows.
        \item[${\RSAALG(x)} \leq {\RSAFIX[r^{2},1](x)}$: ] We proceed
          as in the previous step, by replacing in the sum
          \ref{eq:RSA2SUM} the summation area for $q$: Since $p \leq
          \sqrt{x}$, we obtain $\frac{x}{rp} \geq
          \frac{\sqrt{x}}{r}$. Now since $\sqrt{x} \leq r \sqrt{x}$
          the claim follows.\qed
          %%% Observe that $\sqrt{x} \leq r \sqrt{x}$ for the
          %%% inequality on
          %%% $p$ and that $\frac{x}{r p} \geq \frac{\sqrt{x}}{r}$ for
          %%% all
          %%% $\frac{\sqrt{x}}{r} < p \leq \sqrt{x}$.
        \end{description}
      \end{proof}
    \end{fullversion}
    We actually can enclose the \citeauthor{decmor08} notion even
    tighter by the fixed bound
    notion:%%% (see \ref{fig:RSAcomparisonDMFB}):
    \begin{lemma}
      For $r \in \R_{>1}$ and $x \in \R_{>1}$ we have
      $$
      \RSAFIX[r,1](x) \leq \RSADM(x) \leq
      \RSAFIX[r^{2},\frac{1}{2}](x).
      $$
    \end{lemma}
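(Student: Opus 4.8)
The plan is to prove both inequalities entirely at the level of the defining regions. Since $\RSA(x)$ merely counts the prime pairs lying in $\RSASET_{x}$, a containment of one region in another that holds for every $x$ immediately yields the corresponding inequality between the two prime pair counting functions, the prime pairs counted by the smaller notion forming a subset of those counted by the larger one. It therefore suffices to establish the two region containments $\RSAFIXSET[r,1]_{x} \subseteq \RSADMSET_{x} \subseteq \RSAFIXSET[r^{2},\frac{1}{2}]_{x}$ after unfolding the definitions; a picture analogous to \ref{fig:RSAcompareAll} makes both inclusions transparent.

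For the left inclusion I would take $(y,z) \in \RSAFIXSET[r,1]_{x}$, so that $\sqrt{x/r} < y,z \leq \sqrt{rx}$ and $yz \leq x$, and verify the three de~Weger constraints in turn. From $z > \sqrt{x/r}$ and $y \leq \sqrt{rx}$ one gets $y/r \leq \sqrt{x/r} < z$; symmetrically $z \leq \sqrt{rx} < ry$ using $y > \sqrt{x/r}$; and $yz > \sqrt{x/r}\cdot\sqrt{x/r} = x/r$ from the two lower box bounds. Together with the inherited bound $yz \leq x$ this places $(y,z)$ in $\RSADMSET_{x}$.

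For the right inclusion I would take $(y,z) \in \RSADMSET_{x}$, so $y/r < z < ry$ and $x/r < yz \leq x$. Unfolding $\RSAFIXSET[r^{2},\frac{1}{2}]$, its box bounds read $\sqrt{x}/r < y,z \leq \sqrt{rx}$ together with $yz \leq x$, since $\sqrt{(r^{2})^{1/2}x} = \sqrt{rx}$ and $\sqrt{x/r^{2}} = \sqrt{x}/r$. The lower bound $y > \sqrt{x}/r$ follows by combining $z < ry$ with $yz > x/r$ to obtain $ry^{2} > x/r$, hence $y^{2} > x/r^{2}$; the upper bound $y < \sqrt{rx}$ follows by feeding $z > y/r$ into $yz \leq x$ to get $y^{2}/r < x$. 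The two bounds for $z$ are obtained identically by the symmetry of $\RSADMSET_{x}$ under interchange of $y$ and $z$, and $yz \leq x$ is inherited.

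Since each individual step is a single algebraic manipulation, there is no genuine obstacle here; the only points demanding care are the parameter bookkeeping --- correctly unfolding $\RSAFIXSET[r^{2},\frac{1}{2}]$ so that the $\sigma$-raised upper bound $\sqrt{r^{\sigma}x}$ at $\sigma = \frac{1}{2}$ and tolerance $r^{2}$ collapses to $\sqrt{rx}$ --- and keeping the strict versus non-strict inequalities consistent with the half-open intervals appearing in the definitions.
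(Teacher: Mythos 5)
Your proposal is correct and takes essentially the same route as the paper: both establish the region containments $\RSAFIXSET[r,1]_{x} \subseteq \RSADMSET_{x} \subseteq \RSAFIXSET[r^{2},\frac{1}{2}]_{x}$ by the same algebraic manipulations (the paper's second chain $\frac{x}{r^{2}} < \frac{1}{r}pq < p^{2} < q^{2} < rpq \leq rx$ is exactly your two-sided bound derivation), with the count inequalities following immediately since $\RSA(x)$ counts prime pairs in the region. The only cosmetic difference is that the paper restricts to $p<q$ and implicitly invokes the symmetry of both notions, whereas you verify both coordinates directly, which is if anything slightly more complete.
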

    %%% \begin{shortversion}
    %%%   \begin{proof}
    %%%     The proof follows from basic calculus.
    %%%   \end{proof}
    %%% \end{shortversion}
    \begin{fullversion}
      \begin{proof}
        Assume $\sqrt{\frac{x}{r}} < p < q \leq \sqrt{r x}$ and $p q
        \leq x$. Then $\frac{x}{r} < p q \leq x$ and $q \leq r p$. If
        on the other hand $\frac{x}{r} < p q \leq x$ and $p < q < r
        p$, then $\frac{x}{r^{2}} < \frac{1}{r} p q < p^{2} < q^{2} <
        r p q \leq r x$ and the claim follows.
      \end{proof}
      All the inclusion described above are  compatible to the result from \ref{res:RSAEquivalent}. However, many of the explicit inclusions are much tighter.
    \end{fullversion}
    %%% \begin{figure}[h!]
    %%%   \centering%
    %%%   \input{\jobname-compare-dm-ln}
    %%%   \caption{Comparison of two notions of RSA integers.}
    %%%   \label{fig:RSAcomparisonDMFB}
    %%% \end{figure}

    \begin{shortversion}
      \noindent
      Please note that proofs for the results in this section are
      postponed to the extended
      version\ifanonymous\else\cite{loenus11a}\fi.
    \end{shortversion}
  \end{fullversion}
\end{fullversion}

%%%%%%%%%%%%%%%%%%%%%%%%%%%%%%%%% 
%%% END NUMBER THEORETIC PART %%%
%%%%%%%%%%%%%%%%%%%%%%%%%%%%%%%%% 

\section{Generating RSA integers}\label{sec:genProperly}
%\everytheorem{}
\def\leftflag#1{}

In this section we analyze how to generate RSA integers properly.  It
completes the picture and we found several implementations overlooking
this kind of arguments.

\noindent We wish that all the algorithms generate integers with the
following properties:
\begin{itemize}
\item If we fix $x$ we should with at least overwhelming probability
  generate integers that are a product of a prime pair in
  $\RSASET_{x}$.
\item These integers (not the pairs) should be selected roughly
  uniformly at random.
\item The algorithm should be efficient.  In particular, it should
  need only few primality tests.
\end{itemize}
\begin{fullversion}
  For the first point note that we usually use probabilistic primality
  tests with a very low error probability, for example \cite{mil76},
  \cite{rab80}, \cite{solstr77}, or \cite{art66a}%%% , in
  %%% composition with \citeauthor{gra98}'s Frobenius test
  . %%% If one believes in the Riemann hypothesis one can use
  %%% \citeauthor{mil76}'s ideas to prove primality.  But cheaper and
  %%% not
  %%% relying on the Riemann hypothesis if one does not (and has some
  %%% time) one can use \citeauthor*{agrkay04}'s primality test to do
  %%% so.
  Deterministic primality tests are also available but at present for
  these purposes by far too slow.\nocite{agrkay04}
\end{fullversion}%

%%% We assume that any algorithm that produces an integer $n = pq$
%%% will at some point generate (not necessarily uniformly at random
%%% as we will see later) the first prime $p$ from some
%%% interval. Similarly the second prime will be computed. Finally
%%% there will be a point in the algorithm that computes the product
%%% of the two selected primes. Note that all algorithms given in
%%% \ref{sec:concImplem} are of this form.
\subsection{Rejection sampling}
Assume that $\RSASET$ is a $[c_{1},c_{2}]$-balanced notion of RSA
integers with tolerance~$r$.  The easiest approach for generating a
pair from $\RSASET$ is based on \citeauthor{neu51}'s rejection
sampling method.
\begin{fullversion}[\refstepcounter{equation}]%
  For this the following definition comes in handy:
  \hanghere[.4\baselineskip]{%
    \psset{unit=1mm,labelsep=1mm}
    \begin{pspicture}(-2,0)(12,16) \pscustom[style=notionarea]{%
        \pscurve(0,4)(2,5)(6,3)(8,3.5)%
        \psline(8,3.5)(8,9.5)%
        \pscurve[liftpen=1](8,9.5)(6,9.2)(2,10.7)(0,10)%
        \psline(0,10)(0,3.93)%
      }%
      \psline[linewidth=1pt, linestyle=dotted](6,1)(6,12)
    \end{pspicture}
  }%%%\nobreak%
  \begin{definition}[Banner]
    \rightskip=18mm%
    A \emph{banner} is a graph-bounded notion of RSA integers such
    that for all $x \in \R_{>1}$ and for every prime $p \in
    [B_{1}(x),C_{1}(x)]$ the number $f_{x}(p)$
    %%% $\#\treatparameter(\PR \cap [ B_{2}(p,x), C_{2}(p,x) ])$
    of primes in the interval $[B_{2}(p,x),C_{2}(p,x)]$ is almost
    independent of~$p$ in the following sense: $ \frac {~max\Set{
        f_{x}(p); p \in [B_{1}(x),C_{1}(x)] \cap \PR}} {~min\Set{
        f_{x}(p); p \in [B_{1}(x),C_{1}(x)] \cap \PR}} \in
    1+o\treatparameter(1).  $
  \end{definition}

  For example, a rectangular notion, where $B_{2}(p,x)$ and
  $C_{2}(p,x)$ do not depend on $p$, is a banner.  Now given any
  notion~$\RSASET$ of RSA integers we select a banner $\mathcal{B}$ of
  (almost) minimal area enclosing $\mathcal{A}_{x}$.  Note that there
  may be many choices for $\mathcal{B}$.  We can easily generate
  elements in $\mathcal{B}_{x} \cap \N^{2}$: Select first an
  appropriate $y \in [ B_{1}(x), C_{1}(x) ] \cap \N$, second an
  appropriate $z \in [ B_{2}(p,x), C_{2}(p,x) ] \cap \N$.  By the
  banner property this chooses $(y,z)$ almost uniformly.
  \begin{skipme}
    %%%% Old version of banners
    \begin{definition}[Banner]
      Let $I_{1}, I_{2} \subset \R$ be two intervals and
      $\shortmap[f]{I_{1}}{\R}$ be a map. \par\pushright{\vbox
        to0pt{\hfill
          \input{\jobname-banner}\vss}}\vskip-\baselineskip\vskip0pt%
      \noindent
      A \emph{banner} is any subset $A \subset \R^{2}$ of the form
      $$\Set{(y,z) \in \R^{2}; y \in I_{1}, z-f(y) \in I_{2} }.$$
    \end{definition}
    Now given any notion~$\RSASET$ of RSA integers we select for every
    $x \in \R_{>1}$ one banner $\mathcal{B}_{x}$ of minimal area
    enclosing $\mathcal{A}_{x}$. Note that there may be many choices
    for $\mathcal{B}_{x}$. We can in a standard way generate elements
    uniformly at random in $\mathcal{B}_{x} \cap \N^{2}$: Select first
    an appropriate $y \in I_{1} \cap \N$, second an appropriate $z \in
    (f(y)+I_{2}) \cap \N$.  We assume here that the number of primes
    in $f(y)+I_{2}$ depends only weakly on $y$, for example, between
    $\frac{~length(I_{2})}{~ln \sqrt{x} - ~ln r}$ and
    $\frac{~length(I_{2})}{~ln \sqrt{x} + ~ln r}$.  Using
    \ref{pnt-schoenfeld} we can refine the notion of a banner to
    ensure that asymptotically.  Clearly, a constant $f$ fulfills that
    precisely.
  \end{skipme}
\end{fullversion}
\begin{shortversion}
  Let $\mathcal{B}_x := x^{[c_{1}, c_{2}]} \times x^{[c_{1}, c_{2}]}$.
  There may be better ways for choosing $\mathcal{B}_{x} \supseteq
  \mathcal{A}_{x}$, but we skip this here.
\end{shortversion}%
We obtain the following straightforward Las Vegas algorithm:
\begin{algorithm}{algoRSALasVegas}[Generating an RSA integer (Las
  Vegas version)]
\item A notion $\RSASET$, a bound $x \in \R_{>1}$.
\item An integer $n=p q$ with $(p,q) \in \RSASET_{x}$.
\item \begin{blockuntil}{$y$ prime and $z$ prime.}
  \item \begin{blockuntil}{$(y,z) \in \RSASET_{x}$.}
    \item Select $(y,z)$ at random from $\mathcal{B}_{x} \cap \N^{2}$%
      \begin{fullversion}[.]
        ~as just described.
      \end{fullversion}
    \end{blockuntil}
  \end{blockuntil}
\item $p \gets y$, $q \gets z$.
\item \RETURN $p q$.
\end{algorithm}
The expected repetition count of the inner loop is $\frac{\#\mathcal{B}(x)}{\RSA(x)}$ which is roughly
$\frac{~area(\mathcal{B}_{x})}{~area(\RSASET_{x})}$.  The expected
number of primality tests is about
$\frac{~area(\RSASET_{x})}{\RSA(x)}$.  \begin{fullversion}[This ]%
 By
  \ref{res:RSAEquivalent} this \end{fullversion}%
is for many notions in $\bigO{~ln^2 x}$.  We have seen implementations
(for example the one of \texttt{GnuPG}) where the inner and outer loop
have been exchanged.  This increases the number of primality tests by
the repetition count of the inner loop.
\begin{fullversion}%
  For $\RSAFIXSET[r,1]$ this is a factor of about
  \hanghere[\baselineskip]{\psset{unit=1.3mm}\drawnotion[]{8}{4}{
      \newpsstyle{notionsplit}{linestyle=none} \rput(-1,-1){%
        \psset{unit=2\psunit}%
        \newpsstyle{notionarea}{style=bannerarea} \RSAFIXdraw{0}%
      }%
      \RSAFIXdraw{1} }}
  $$
  \frac {\RSAFIX[r^{2},0](r x)} {\RSAFIX[r,1](x)} \sim%
  \frac{\frac{1}{r} -2 + r}{~ln r + \frac1r - 1} =%
  \frac{ (r-1)^{2} }{r (~ln r - 1) + 1},
  $$
  which for $r=2$ is equal to
  $2.\roundeddown{58}{86994495620898308053844319421}$ and even worse
  for larger $r$.
\end{fullversion}%
Also easily checkable additional conditions, like $~gcd((p-1)(q-1),
e)=1$, should be checked before the primality tests to improve the
efficiency.

\subsection{Inverse transform sampling}
Actually we would like to avoid generating out-of-bound pairs
completely.
\begin{fullversion}[\refstepcounter{equation} ]%
  Then a straightforward attempt to construct such an algorithm looks
  the following way:
  \begin{algorithm}{algoRSANonUniform}[Generating an RSA integer
    (non-uniform version)]
  \item A notion $\RSASET$, a bound $x \in \R_{>1}$.
  \item An integer $n=p q$ with $(p,q) \in \RSASET_{x}$.
  \item \begin{blockuntil}{$y$ prime.}
    \item Select $y$ uniformly at random from $\Set{y \in \R; \exists
        z\in\N\colon (y,z) \in \RSASET_{x}} \cap \N$.
    \end{blockuntil}
  \item $p \gets y$.
  \item \begin{blockuntil}{$z$ prime.}
    \item Select $z$ uniformly at random from $\Set{z \in \R; (p,z)
        \in \RSASET_{x}} \cap \N$.
    \end{blockuntil}
  \item $q \gets z$.
  \item \RETURN $p q$.
  \end{algorithm}
  %%% Note that if $\RSASET_{x}$ has the property that for any $y_{0}
  %%% \in\R$ the set $\Set{z \in \R; (y_{0},z) \in \RSASET_{x}}$ is
  %%% connected (i.e.\ an interval), the algorithm can be easily
  %%% implemented. Otherwise this set is a disjoint union of non-empty
  %%% intervals (of varying length) and one would have to select one
  %%% of the intervals (with a certain probability) first and
  %%% afterwards select an element from the selected interval. We
  %%% leave the details to the interested reader.
  The main problem with \ref{algoRSANonUniform} is that the output it
  produces typically is not uniform since the sets $\Set{z \in \R;
    (p,z) \in \RSASET_{x}} \cap \N$ do not necessarily have the same
  cardinality when changing $p$.
\end{fullversion}
To retain uniform selection, we need to select the primes $p$
non-uniformly with the following distribution:
\begin{definition}\label{def:density}
  Let $\RSASET$ be a notion of RSA integers with tolerance~$r$.  For
  every $x \in \R_{>1}$ the associated \emph{cumulative distribution
    function} of $\RSASET_{x}$ is defined as
  $$
  \map[F_{\RSASET_{x}}]{\R}{[0,1]}{y}{\frac{~area\left(\RSASET_{x}
        \cap ([1, y] \times \R) \right)}{~area(\RSASET_{x})}.}
  $$
\end{definition}
In fact we should use the function $
\map[G_{\RSASET_{x}}]{\R}{[0,1]}{y}{\frac{\#\left(\RSASET_{x} \cap
      (([1, y] \cap \mathbb{P}) \times \mathbb{P})
    \right)}{\RSA_{x}},} $ in order to compute the density but
computing $G_{\RSASET_{x}}$ (or its inverse) is tremendously
expensive.  Fortunately, by virtue of \ref{res:psalm23} we know that
$F_{\RSASET_{x}}$ approximates $G_{\RSASET_{x}}$ quite well for
monotone, $[c_1, c_2]$-balanced notions $\RSASET$.  So we use the
function $F_{\RSASET_{x}}$ to capture the distribution properties of a
given notion of RSA integers.  As can be seen by inspection, in
practically relevant examples this function is sufficiently easy to
handle, see \ref{tab:densitiesDeriv}.
\begin{fullversion}
  Using this we modify \ref{algoRSANonUniform} such that each element
  from~$\RSASET_{x}$ is selected almost uniformly at random:
\end{fullversion}
\begin{shortversion}
  We obtain the following algorithm:
\end{shortversion}
\begin{algorithm}{algoRSAFinal}[Generating an RSA integer]
\item A notion $\RSASET$, a bound $x \in \R_{>1}$.
\item An integer $n=p q$ with $(p,q) \in \RSASET_{x}$.
\item \begin{blockuntil}{$y$ prime.}
  \item Select $y$ with distribution $F_{\RSASET_{x}}$ from $\Set{y
      \in \R; \exists z\colon (y,z) \in \RSASET_{x}} \cap \N$.
  \end{blockuntil}
\item $p \gets y$.
\item \begin{blockuntil}{$z$ prime.}
  \item Select $z$ uniformly at random from $\Set{z \in \R; (p,z) \in
      \RSASET_{x}} \cap \N$.
  \end{blockuntil}
\item $q \gets z$.
\item \RETURN $p q$.
\end{algorithm}
As desired, this algorithm generates any pair $(p,q) \in \RSASET_{x}
\cap \left( \PR \times \PR \right)$ with almost the same probability.
In order to generate $y$ with distribution $F_{\RSASET_{x}}$ one can
use inverse transform sampling, see for example \cite{knu98}%
\begin{fullversion}[.\refstepcounter{equation} ]%
  :
  \begin{theorem}[Inverse transform sampling]
    \label{res:invTransSampling}
    Let $F$ be a continuous cumulative distribution function with
    inverse $F^{-1}$ for $u \in [0,1]$ defined by
    $$F^{-1}(u) := ~inf \Set{ x\in\R; F(x)=u }.$$
    If $U$ is uniformly distributed on $[0,1]$, then $F^{-1}(U)$
    follows the distribution $F'$.
  \end{theorem}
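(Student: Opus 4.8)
The plan is to show directly that $X := F^{-1}(U)$ has cumulative distribution function $F$; since $F$ is continuous (and, where it matters, differentiable with density $F'$), this is exactly the assertion that $X$ follows the distribution $F'$. So I would reduce everything to proving $\Pr[X \leq x] = F(x)$ for every $x \in \R$.

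The heart of the argument is the equivalence
$$
F^{-1}(u) \leq x \quad\Longleftrightarrow\quad u \leq F(x),
$$
which I would establish for all $u$ in the range of $F$. For the direction $u \leq F(x) \Rightarrow F^{-1}(u) \leq x$ I would use continuity of $F$: if $u < F(x)$ the intermediate value theorem (applied to the continuous nondecreasing $F$ with $F(t)\to 0$ as $t\to-\infty$) produces a point $t \leq x$ with $F(t) = u$, so that $F^{-1}(u) = ~inf\Set{t; F(t) = u} \leq x$; the boundary case $u = F(x)$ is immediate since then $x$ itself lies in that set. Conversely, writing $x_{0} := F^{-1}(u)$, continuity makes $\Set{t; F(t) = u}$ closed and nonempty, so the infimum is attained and $F(x_{0}) = u$; monotonicity of $F$ together with $x_{0} \leq x$ then yields $u = F(x_{0}) \leq F(x)$.

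Given this equivalence, the conclusion follows in one line from the uniformity of $U$ on $[0,1]$: because $F(x) \in [0,1]$,
$$
\Pr[X \leq x] = \Pr[F^{-1}(U) \leq x] = \Pr[U \leq F(x)] = F(x).
$$
I would dispose of the values $u \in \{0,1\}$ and of any $u$ outside the range of $F$ with a brief separate remark; they form a Lebesgue-null set and hence do not affect the law of $X$.

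The step I expect to be the only real obstacle is the equivalence itself, specifically because the theorem defines the generalized inverse through the \emph{equation} $F(x) = u$ rather than the more usual $F(x) \geq u$. For a merely monotone distribution function the set $\Set{t; F(t) = u}$ could be empty, and the definition would be vacuous; it is exactly the continuity hypothesis that rescues it, guaranteeing via the intermediate value theorem that this set is nonempty and closed for each $u$ in the range, so that the infimum is attained and the two-sided implication above is clean. Flat stretches of $F$, on which $F$ is constant, correspond to intervals carrying no probability, so the leftmost-point choice built into the infimum is immaterial for the resulting distribution.
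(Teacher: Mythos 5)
Your proof is correct and takes essentially the same route as the paper: the paper's entire proof is the one-line identity $\mathrm{prob}(F^{-1}(U)\leq x)=\mathrm{prob}(U\leq F(x))=F(x)$, and what you have done is supply the justification it leaves implicit, namely the equivalence $F^{-1}(u)\leq x \Leftrightarrow u\leq F(x)$ via continuity and the intermediate value theorem, plus the null-set remark for $u\in\{0,1\}$.
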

  \begin{proof}
    We have $~prob(F^{-1}(U) \leq x) = ~prob(U \leq F(x)) = F(x)$.
  \end{proof}
\end{fullversion}
The expected number of primality tests now is in $\bigO{~ln x}$\begin{fullversion}[.]: If $\RSASET$ is $[c_{1},1]$-balanced then $F_{\RSASET_{x}}(y) = 0$
    as long as $y \leq x^{c_{1}}$.  The exit probability of the first
    loop is $~prob(\text{$y$ prime})$ where $y$ is chosen according to
    the distribution $F_{\RSASET_{x}}'$. Thus
    \begin{align*}
      ~prob(\text{$y$ prime}) &\sim \preint{y}{1}{x}
      \frac{F_{\RSASET_{x}}'(y)}{~ln y} \postint \in
      \left[\frac{1}{~ln x},\frac{1}{c_{1}~ln x}\right]
    \end{align*}
    and we expect $\bigO{~ln x} \cap \bigOmega{ c_{1} ~ln x}$
    repetitions of the upper loop until $y$ is prime.
  %\end{proof}
\end{fullversion}
Of course we have to take into account that for each trial~$u$ an
inverse $F_{\RSASET_{x}}^{-1}(u)$ has to be computed ---~at least
approximately~---, yet this cost is usually negligible compared to a
primality test\begin{fullversion}, see
  \ref{tab:densitiesDeriv}\end{fullversion}.

\subsection{Other constructions}
There are variants around, where the primes are selected differently:
Take an integer randomly from a suitable interval and increase the
result until the first prime is found.  This has the advantage that
the amount of randomness needed is considerably lower and by
optimizing the resulting algorithm can also be made much faster.  The
price one has to pay is that the produced primes will not be selected
uniformly at random: Primes $p$ for which $p-2$ is also prime will be
selected with a much lower probability than randomly selected primes
of a given length. As shown in \cite{bradam93} the output entropy of
such algorithms is still almost maximal and also generators based on
these kind of prime-generators might be used in practice.

\subsection{Comparison}
We have seen that \ref{algoRSALasVegas} and \bare\ref{algoRSAFinal}
are practical uniform generators for any symmetric or antisymmetric
notion.

Note that \ref{algoRSALasVegas} and \bare\ref{algoRSAFinal} may,
however, still produce numbers in a non-uniform fashion: In the last
step of both algorithms a product is computed that corresponds to
either one pair or two pairs in $\RSASET_{x}$.  To solve this problem
we have two choices: Either we replace $\RSASET$ by its symmetric
version~$\mathcal{S}$ defined by $ \mathcal{S}_{x} :=
\Set{(y,z) \in \R_{>1}^{2}; (y,z) \in \RSASET_{x} \vee (z,y) \in
  \RSASET_{x}}, $ or by its, say, top half~$\mathcal{T}$ given by $
\mathcal{T}_{x} := \Set{(y,z) \in \mathcal{S}_{x}; z\geq y} $ before
anything else.

\begin{fullversion}
  It is now relatively simple to instantiate the above algorithms
  using the notions proposed in \ref{sec:NotionsInParticular}: Namely
  for an algorithm following the Las Vegas approach, one simply needs
  to find suitable banner that encloses the desired notion.  In order
  to instantiate \ref{algoRSAFinal} we need to determine the inverse
  of the corresponding cumulative distribution function for the
  respective notion%
  \begin{fullversion}[. We leave further details to the reader and the
    extended version \ifanonymous Anonymous (2011)\else\cite{loenus11a}\fi. ]%
    \ (see \ref{tab:densitiesDeriv}).%
    \begin{table}[t!]
      \centering
      \begin{tabular}{|l|l|c|}
        \hline
        Notion $\RSASET$ & $F'_{\RSASET_{x\mathstrut}}$ & Plot\\
        \hline
        $\RSADMSET$ & 
        \footnotesize $\begin{cases}
          \frac{2 (r^{\mathstrut 2} y^{2} - x)}{x y (r-1) ~ln r}  & \text{if
            $\frac{\sqrt{x}}{r} < y \leq  \sqrt{\frac{x}{r}}$},
          \\
          \frac{2 r (x-y^{2})}{x y (r-1) ~ln r}  & \text{if
            $\sqrt{\frac{x}{r}} < y \leq  \sqrt{x}$},
          \\
          \mathstrut 0 & \text{otherwise}.
        \end{cases}$ & 
        \psset{yunit=25mm, xunit=0.5mm,labelsep=1mm}
        \begin{pspicture}[shift=-2ex](-0.5,0)(50,0.5)
          \psaxes[ticks=none, labels=none]{->}(0,0)(48,0.4)%
          %%% \psaxes[tickstyle=bottom]{->}(0,0)(48,1)%
          %%% \uput{3mm}[270]{0}(48,0){$y$}%
          \uput{.5mm}[0]{0}(48,0){$\footnotesize{y}$}%
          \psset{plotpoints=100}%
          %%% \parplot{2}{1.0}{1.0}%
          \psplot{0}{16}{ 0 %
          }%
          \psplot{16}{22.627416997969522}{ 1 128 div x 2 exp 256 sub
            mul x 2 log mul div%
          }%
          \psplot{22.627416997969522}{32}{ 1 256 div x 2 exp 1024 sub
            mul x 2 log mul div neg }%
          \psplot{32}{48}{ 0 %
          }%%%
        \end{pspicture}
        \\
        \hline $\RSAFIXSET$ & \footnotesize $\begin{cases}
          \frac{\sqrt{r}\left( r^{\frac{1+\sigma}{2}} - 1
            \right)}{\sqrt{x} \left( \sigma r \ln r + r - 2
              r^{\frac{1+\sigma}{2}} + 1 \right)} & \text{if
            $\sqrt{\frac{x}{r}} < y \leq
            \sqrt{\frac{x}{r^{\sigma}}}$},
          \\
          \frac{r \sqrt{x} - \sqrt{r} y}{\sqrt{x} y \left( \sigma r
              ~ln r + r - 2 r^{\frac{1+\sigma}{2}} + 1 \right)} &
          \text{if $\sqrt{\frac{x}{r^{\sigma}}} < y \leq
            \sqrt{r^{\sigma} x}$},
          \\
          0 & \text{otherwise}.
        \end{cases}$ & \psset{yunit=75mm, xunit=0.5mm,labelsep=1mm}
        \begin{pspicture}[shift=-4ex](-0.5,-0.03)(50,0.2)
          \psaxes[ticks=none, labels=none]{->}(0,0)(48,0.13)%
          %%% \uput{3mm}[270]{0}(48,0){$y$}%
          %%% \uput{3mm}[180]{0}(0,1){$f(y)$}%
          \psset{plotpoints=100}%
          %%% \parplot{2}{1.0}{1.0}%
          \uput{.5mm}[0]{0}(48,0){$\footnotesize{y}$}%
          \psplot{0}{22.627416997969522}{ 0 %
          }%
          \psplot{22.627416997969522}{28.319761280092951}{
            0.091536003124457324 }%
          \psplot{28.319761280092951}{35.311032113216946}{ 7.0682249 x
            div 0.15805 sub }%
          \psplot{35.311032113216946}{48}{ 0 %
          }%%%
        \end{pspicture}
        \\
        \hline $\RSAALGSET$ & \footnotesize $\begin{cases} \frac{1}{y
            ~ln r} & \text{if $\frac{\sqrt{x}}{r} < y \leq \sqrt{x}$},
          \\
          0 & \text{otherwise}.
        \end{cases}$ & \psset{yunit=150mm, xunit=.5mm,labelsep=1mm}
        \begin{pspicture}[shift=*](-0.5,-0.02)(50,0.07)
          \psaxes[ticks=none, labels=none]{->}(0,0)(48,0.06)%
          %%% \psaxes[tickstyle=bottom]{->}(0,0)(48,1)%
          %%% \uput{3mm}[270]{0}(48,0){$y$}%
          %%% \uput{3mm}[180]{0}(0,1){$f(y)$}%
          \psset{plotpoints=100}%
          %%% \parplot{2}{1.0}{1.0}%
          \uput{.5mm}[0]{0}(48,0){$\footnotesize{y}$}%
          \psplot{0}{15.811388300841898}{ 0 %
          }%
          \psplot{15.811388300841898}{31.622776601683796}{ 1 2 div 2
            sqrt mul x div }%
          \psplot{31.622776601683796}{48}{ 0 }%
        \end{pspicture}
        \\
        \hline
      \end{tabular}
      \caption{%
        Non-cumulative density functions with respect to $y$.
      }%
      \label{tab:densitiesDeriv}
    \end{table}
  \end{fullversion}%
  Still \ref{algoRSALasVegas} and \bare\ref{algoRSAFinal} are
  practically uniform generators for any symmetric or antisymmetric
  notion.  Considering run-times we observe that \ref{algoRSAFinal} is
  much faster, but we have to use inverse transform sampling to
  generate the first prime.  Despite the simplicity of the approaches
  some common implementations use corrupted versions of
  \ref{algoRSALasVegas} or \bare\ref{algoRSAFinal} as explained below.
  Essentially, they buy some extra simplicity by relaxing the
  uniformity requirement.
\end{fullversion}

\section{Output entropy}\label{sec:entropy}
The entropy of the output distribution is an important quality measure
of a generator. For primality tests several analyses where performed,
see for example \cite{bradam93} or \cite{joypai06}. For generators of
RSA integers we are not aware of any work in this direction.

Let $\RSASET_x$ be any monotone notion. Consider a generator
$G_{\rho}$ that produces a pair of primes $(p,q) \in \RSASET_x$ with
distribution $\rho$. Seen as random variables, $G_{\rho}$ induces two
random variables $P$ and $Q$ by its first and the second coordinate,
respectively.
%%% Consider any of the above generators for integers from a monotone
%%% notion $\RSASET$. In each case two primality tests are employed
%%% whose output we model by random variables $P$ and $Q$,
%%% respectively.
The entropy of the generator $G_{\rho}$ is given
by
\begin{equation}
\label{eqn:productEntropy}
H(G_{\rho}) = H(P \times Q) = H(P) + H\Left(Q ; P\Right),
\end{equation}
\begin{shortversion}
  where $H$ denotes the binary entropy and $H\Left(Q ; P\Right)$
  denotes the conditional entropy.
\end{shortversion}
\begin{fullversion}
  where $H$ denotes the entropy and the conditional entropy is given
  by
  $$H\Left(Q ; P\Right) = -\sum_{p \in P} ~prob(P = p) \sum_{q \in Q} ~prob(Q = q \mid P = p) ~log_2(~prob(Q = q \mid P = p)).$$
\end{fullversion}%
If $\rho$ is the uniform distribution $U$ we obtain the maximal
entropy, which we can approximate by \ref{res:psalm23}, namely
$$
H(G_{U}) = ~log_2(\RSA(x)) \approx ~log_2(~area(\RSASET_x)) -
~log_2(~ln x) + 1
$$
with an error of very small order. The algorithms from
\ref{sec:genProperly}, however, return the product $P \cdot Q$. The
entropy of this random variable is at most $H(P \times Q)$ and can be
at most one bit smaller than this:
\begin{align*}
  H(P\cdot Q) &= - \sum\limits_{\substack{n=pq \in \N\\(p,q) \in \RSASET_x}} ~prob(P \cdot Q = n) ~log_2(~prob(P \cdot Q = n))\\
  &\geq - \sum\limits_{(p,q) \in \RSASET_{x}} ~prob(P \times Q = (p,q)) ~log_2(2 ~prob(P \times Q = (p,q))) \\
  &= H(P \times Q) - 1.
\end{align*}
% Arbitrary notions $\RSASET$ can be split disjointly in their symmetric part and their antisymmetric part.  Due to the additivity of the entropy on disjoint sets, we also obtain in general that the maximal output entropy of any of the above generators for integers from $\RSASET_x$ is at least $H_{~max} - 1$.
\begin{fullversion}
  One should note here that in real-world implementations the
  generation of the primes might be biased, for example when one uses
  the above mentioned quite natural generator \texttt{PRIMEINC},
  analyzed in \citealp{bradam93}. \texttt{PRIMEINC} chooses an integer
  and then outputs the first prime equal to or larger than this
  number.  Note that \ref{algoRSALasVegas} and \ref{algoRSAFinal} do
  not depend on any prime generator but sample integers until they are
  prime. However, this is not the case in many standards and
  implementations discussed in \ref{sec:concImplem}.

  To estimate the entropy of an RSA generator $G = P \times Q$ when
  employing prime generators $P$ and $Q$ with with entropy-loss at
  most $\epsilon_{P}$ and $\epsilon_{Q}$ then the resulting generator
  will by \ref{eqn:productEntropy} have an entropy-loss of at most
  $\epsilon_P + \epsilon_Q$
  \label{res:inherited-entropy-loss}
  when compared to the same generator employing generators that
  produce primes uniformly at random.
 \end{fullversion}

\begin{fullversion}[Some] Interestingly, some \end{fullversion}%
of the standards and implementations in \ref{sec:concImplem}
(like the standard IEEE 1363-2000 or the implementation of \texttt{GNU
  Crypto}) \begin{fullversion}still\end{fullversion} \emph{do not} generate every possible outcome with the same
probability, even if uniform prime generators are employed: Namely, if one selects the prime $p$  uniformly at random and afterwards the prime $q$  uniformly
at random from an appropriate interval then this might be a non-uniform
selection process if for some choices of $p$ there are less
choices for $q$.

If in general the probability distribution $\rho$ is close to the
uniform distribution, say $\rho(p,q) \in [2^{-\epsilon}, 2^{\epsilon}]
\frac{1}{\RSA(x)}$ for some fixed $\epsilon \in \R_{> 0}$, then the
entropy of the resulting generator $G_{\rho}$ can be estimated
as \begin{fullversion}follows:
  \begin{align*}
    H(G_{\rho}) &= -\sum_{(p,q) \in \RSASET_x} \rho(p,q) ~log_2(\rho(p,q))\\
    &\in \sum_{(p,q) \in \RSASET_x} \rho(p,q) [~log_2(\RSA(x)) -
    \epsilon, ~log_2(\RSA(x)) + \epsilon]
    \\
    &= [H(G_{U}) - \epsilon, H(G_{U}) + \epsilon]
  \end{align*}
  and since the entropy of the uniform distribution is maximal, this
  implies that \end{fullversion}
$$
H(G_{U}) - \epsilon \leq H(G_{\rho}) \leq H(G_U).
$$

\section{Complexity theoretic considerations}\label{sec:algoConsid}
\def\arealargenesswitness{k} \def\successwitness{s}
\def\attackeralgo{{F}}

We are about to reduce factoring products of two comparatively equally
sized primes to the problem of factoring integers generated from a
sufficiently large notion. As far as we know there are no similar
reductions in the literature.

We consider finite sets $M \subset \N \times \N$, in our situation we
actually have only prime pairs.  The multiplication map $\mu_{M}$ is
defined on $M$ and merely multiplies, that is, $\map[\mu_{M}] {M}{\N}
{(y,z)}{y\cdot z}$.  The random variable $U_{M}$ outputs uniformly
distributed values from $M$.  An attacking algorithm $\attackeralgo$
gets a natural number $\mu_{M}(U_{M})$ and attempts to find factors
inside $M$.  Its success probability
\begin{align}
  \label{def:succ}
  ~succ_{\attackeralgo}(M) &= \prob{ \quad \attackeralgo(
    \mu_{M}(U_{M}) ) \;\in\; \mu_{M}^{-1}( \mu_{M}(U_{M}) ) \quad }
\end{align}
measures its quality in any fixed-size scenario.
\begin{fullversion}
  We call a countably infinite family $\mathcal{C}$ of finite sets of
  pairs of natural numbers \emph{hard to factor} iff for any
  probabilistic polynomial time algorithm $\attackeralgo$ and any
  exponent $\successwitness$ for all but finitely many $M \in
  \mathcal{C}$ the success probability $~succ_{\attackeralgo}(M) \leq
  ~ln^{-\successwitness} x$ where $x = ~max \mu_{M}(M)$.  In other
  words: the success probability of any probabilistic polynomial time
  factoring algorithm on a number chosen uniformly from $M \in
  \mathcal{C}$ is negligible.  That is equivalent to saying that the
  function family $(\mu_{M})_{M\in\mathcal{C}}$ is one-way.

\end{fullversion}
Integers generated from a notion $\RSASET$ are \emph{hard to factor}
iff
\begin{fullversion}
  for any sequence $(x_i)_{i \in \N}$ tending to infinity the family
  $(\RSASET_{x_i} \cap (\PR\times\PR))_{i \in \N}$ is hard to factor.
  This definition is equivalent to the requirement that
\end{fullversion}
for all probabilistic polynomial time machines $F$, all $s \in \N$,
there exists a value $x_0 \in \R_{>1}$ such that for any $x > x_0$ we
have $~succ_F(\RSASET_{x}) \leq ~ln^{-s} x$.
%%% 
\begin{fullversion}
  Since $\R$ is first-countable, both definitions are actually equal.
  This can be easily shown by considering the functions $\map[g_{s,
    F}]{\R_{>1}}{\R}{x}{~succ_F(\RSASET_{x}) \cdot ~ln^{s} x}$.
  The first definition says that each function $g_{s,F}$ is
  sequentially continuous (after swapping the initial universal
  quantifiers).  The second definition says that each function
  $g_{s,F}$ is continuous.  In first-countable spaces sequentially
  continuous is equivalent to continuous.
\end{fullversion}

\noindent For any polynomial $f$ we define the set $R_{f} =
\Set{(m,n)\in\N; m \leq f(n) \land n \leq f(m) }$ of $f$-related
positive integer pairs.  Denote by $\PR^{(m)}$ the set of $m$-bit
primes.  We can now formulate the basic assumption:
\begin{assumption}[Intractability of factoring]
  \label{ass:factoring-pairs}
  For any unbounded positive polynomial~$f$ integers from the
  \emph{$f$-related prime pair family} $(\PR^{(m)}\times
  \PR^{(n)})_{(m,n)\in R_{f}}$ are hard to factor.
\end{assumption}
This is exactly the definition given by
\cite{gol01}. %%% Section~2.2.4.1 and Exercise 8 on page~94.
\ifdraft
\endnote{Call $f$ hard iff the family of $f$-related prime pairs is
  hard to factor.  Clearly, $x^{2}$ hard implies that $a x$ is hard
  for any $a>0$.  Does $x^{2}$ hard imply that all polynomials are
  hard?  Does $a x$ hard imply that all polynomials are hard?  Does $2
  x$ hard imply that all polynomials are hard? (No!?)}
\fi Note that this assumption implies that factoring in general is
hard, and it covers the supposedly hardest factoring instances.
%%% 
Now we are ready to state that integers from all relevant notions are
hard to factor.

%%% \begin{shortversion}
%%%   In this short version we only state the result and thus skip the
%%%   tedious definitions apart from one: Let
%%%   $~succ_{\attackeralgo}(M) = \prob{ \attackeralgo( \mu_{M}(U_{M})
%%%     ) \;\in\; \mu_{M}^{-1}( \mu_{M}(U_{M}) ) }$ be the success
%%%   probability of an algorithm $\attackeralgo$ trying to factor a
%%%   product $\mu_{M}(U_{M})$ of a randomly chosen element $U_{M}$ of
%%%   a set $M \subset \PR \times \PR$.  We call a countably infinite
%%%   family $\mathcal{C}$ of finite sets of pairs of natural numbers
%%%   \emph{hard to factor} iff for any probabilistic polynomial time
%%%   algorithm $\attackeralgo$ and any exponent $\successwitness$ for
%%%   all but finitely many $M \in \mathcal{C}$ the success
%%%   probability $~succ_{\attackeralgo}(M) \leq
%%%   ~ln^{-\successwitness} x$ where $x = ~max \mu_{M}(M)$.  A notion
%%%   $\RSASET$ is \emph{hard to factor} iff for all probabilistic
%%%   polynomial time machines $F$, all $s \in \N$, there exists a
%%%   value $x_0 \in \R_{>1}$ such that for any $x > x_0$
%%%  %%%   $~succ_F(\RSASET_{x}) \leq ~ln^{-s} x$.%
%%%   the probability that the algorithm $\attackeralgo$ given the
%%%   product of a uniformly chosen prime pair from $\RSASET_{x}$ can
%%%   successfully reconstruct it is at most $~ln^{-s} x$.%
%%% \end{shortversion}
\begin{theorem}
  \label{res:RSAalgo-difficult}
  Let $~ln r \in \bigOmega{\frac{1-2c_1}{~ln^l x}}$ for some $l$ and $\RSASET$ be a
  \begin{fullversion}[ ]%
    piecewise%
  \end{fullversion}
  monotone, $[c_{1}, c_{2}]$-balanced notion for RSA integers of
  tolerance~$r$, where $c_1$ is bounded away from zero with growing
  $x$, and $\RSASET$ has large area, namely, for some $k$ and large
  $x$ we have $~area \RSASET_{x} \geq
  \frac{x}{~ln^{\arealargenesswitness} x}$.  Assume that factoring is
  difficult in the sense of
  %%% \begin{fullversion}[\cite{gol01}, page 40 and 94. ]
  \ref{ass:factoring-pairs}.
  %%% \end{fullversion}
  Then integers from the notion~$\RSASET$ are hard to
  factor. \ifshortversion \qed \fi
\end{theorem}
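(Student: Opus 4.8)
The plan is to prove the contrapositive by reducing the factorization of $f$-related prime pairs to that of $\RSASET$-integers. So suppose the conclusion fails: there are a probabilistic polynomial time machine $F$ and an exponent $s$ with $~succ_F(\RSASET_{x}) > ~ln^{-s} x$ for infinitely many $x$. Write $M := \RSASET_{x} \cap (\PR\times\PR)$, so that $|M| = \RSA(x)$. Since on a semiprime finding a nontrivial factor is the same as recovering the factorization, whether $F$ succeeds on a given input depends only on that input and not on the ambient set of pairs; in particular success is unaffected by swapping the two (distinct) prime factors, so I may freely attach a wrapper that reorders $F$'s output. I will use this one attack $F$ to break a whole $f$-related prime pair family, contradicting \ref{ass:factoring-pairs}.

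First I would localise the attack to a single pair of bit-lengths. As $\RSASET$ is $[c_{1},c_{2}]$-balanced with $c_{1}\ge c_{0}>0$ (hence $c_{2}\le 1-c_{1}\le 1-c_{0}$ bounded away from $1$), every prime occurring in $M$ has bit-length in $[c_{1}~log_{2}x,\,c_{2}~log_{2}x]+\bigO{1}$. Writing $M_{m,n} := M\cap(\PR^{(m)}\times\PR^{(n)})$ we thus get $M=\biguplus_{m,n}M_{m,n}$ with only $\bigO{~ln^{2}x}$ nonempty cells, and $~succ_F(\RSASET_{x})=\sum_{m,n}\frac{|M_{m,n}|}{|M|}\,~succ_F(M_{m,n})$ is the corresponding convex combination of conditional success probabilities (well defined by the set-independence noted above). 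By pigeonhole some cell $(m_{0},n_{0})$ satisfies $\frac{|M_{m_{0},n_{0}}|}{|M|}\,~succ_F(M_{m_{0},n_{0}})\ge \bigOmega{~ln^{-(s+2)}x}$.

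Next I would inflate the cell to the full product set. Discarding the pairs outside $M$ only decreases success, so $~succ_F(\PR^{(m_{0})}\times\PR^{(n_{0})})\ge \frac{|M_{m_{0},n_{0}}|}{|\PR^{(m_{0})}\times\PR^{(n_{0})}|}\,~succ_F(M_{m_{0},n_{0}})$; multiplying the pigeonhole bound by $\frac{|M|}{|\PR^{(m_{0})}\times\PR^{(n_{0})}|}$ gives $~succ_F(\PR^{(m_{0})}\times\PR^{(n_{0})})\ge \frac{|M|}{|\PR^{(m_{0})}\times\PR^{(n_{0})}|}\,\bigOmega{~ln^{-(s+2)}x}$. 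Here \ref{res:psalm23}, applied to each of the (at most $\bigO{~ln^{l+1}x}$ by the hypothesis $~ln r\in\bigOmega{(1-2c_{1})/~ln^{l}x}$) monotone pieces so that the aggregate error $\bigO{c_{1}^{-1}x^{(3+c)/4}}$ with $c<1$ stays of smaller order, together with $~area\,\RSASET_{x}\ge x/~ln^{k}x$, yields $|M|=\RSA(x)=\bigOmega{x/~ln^{k+2}x}$; whereas the prime number theorem gives $|\PR^{(m_{0})}\times\PR^{(n_{0})}|=\bigO{2^{m_{0}+n_{0}}/(m_{0}n_{0})}=\bigO{x/~ln^{2}x}$, using $2^{m_{0}+n_{0}}\le 4x$ (a pair of $M_{m_{0},n_{0}}$ has product at most $x$) and $m_{0}n_{0}=\bigOmega{~ln^{2}x}$. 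Hence the ratio is $\bigOmega{~ln^{-k}x}$ and $~succ_F(\PR^{(m_{0})}\times\PR^{(n_{0})})\ge\bigOmega{~ln^{-(s+k+2)}x}$.

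Finally I would read off the contradiction. Because $c_{0}\le c_{1}\le c_{2}\le 1-c_{0}$, the ratio $m_{0}/n_{0}$ stays in the fixed interval $[c_{0}/(1-c_{0}),(1-c_{0})/c_{0}]$, so all these cells lie in $R_{f}$ for the single unbounded positive polynomial $f(t):=\lceil(1-c_{0})/c_{0}\rceil\,t$. Thus $F$ with the reordering wrapper is one fixed probabilistic polynomial time factorizer for the $f$-related prime pair family; as $x$ runs through the infinitely many values where $F$ beats the notion, $m_{0}+n_{0}\ge 2c_{0}~log_{2}x\to\infty$ supplies infinitely many distinct members, and on each one, with $x':=~max\,\mu(\PR^{(m_{0})}\times\PR^{(n_{0})})\le 4x$ so that $~ln x'=~ln x+\bigO{1}$, the success probability exceeds $~ln^{-(s+k+3)}x'$. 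This contradicts the hardness guaranteed by \ref{ass:factoring-pairs}, proving the theorem. The step I expect to be the crux is the cardinality comparison $|M|/|\PR^{(m_{0})}\times\PR^{(n_{0})}|$: it is exactly here that both the large-area hypothesis and $c_{1}$ being bounded away from zero are indispensable, since if $~area\,\RSASET_{x}$ could fall below every inverse power of $~ln x$, or if $c_{1}\to 0$, this ratio could become superpolynomially small and the reduction would collapse.
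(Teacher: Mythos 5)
Your proposal is correct, and at heart it is the same reduction as the paper's proof of \ref{res:RSAalgo-difficult} --- localize the adversary's success to one pair of bit lengths $(m_{0},n_{0})$, both lying in $[c_{1}\log_{2}x-O(1),\,c_{2}\log_{2}x+O(1)]$ and hence linearly related because $c_{1}$ is bounded away from zero, then contradict \ref{ass:factoring-pairs} --- but you implement the crucial counting step by a genuinely different and more elementary route. The paper covers $\RSASET_{x}\cap\PR^{2}$ by the union $S_{x}$ of all rectangles $\PR^{(m)}\times\PR^{(n)}$ meeting the notion, observes $S_{x}\subset\RSAMAXSET[16r,c_{1}']_{4x}$, and bounds $\#S_{x}\leq\RSAMAX[16r, c_{1}'](4x)$ via the maximal-notion count of \ref{res:order-rsamax}; this is exactly where the hypothesis $\ln r\in\bigOmega{(1-2c_{1})/\ln^{l}x}$ is consumed (it caps the triangulation of the maximal notion at polylogarithmically many monotone pieces), and only afterwards does the paper pass to the best cell via $\mathrm{succ}_{F}(S_{m_{0},n_{0}})\geq\mathrm{succ}_{F}(S_{x})$. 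You reverse the order --- pigeonhole over the $\bigO{\ln^{2}x}$ nonempty cells first, then dilute a single cell into its full rectangle --- and you replace the maximal-notion machinery by the Chebyshev-type bound $\#\bigl(\PR^{(m_{0})}\times\PR^{(n_{0})}\bigr)\in\bigO{2^{m_{0}+n_{0}}/(m_{0}n_{0})}\subseteq\bigO{x/\ln^{2}x}$, using only $2^{m_{0}+n_{0}}\leq 4x$ and $m_{0}n_{0}\in\bigOmega{\ln^{2}x}$; both routes land on the same exponent $s+k+2$. What yours buys is independence from \ref{res:order-rsamax} and essentially from the $\ln r$ hypothesis: you invoke it only to keep the number of monotone pieces polylogarithmic when applying \ref{res:psalm23} for the lower bound $\RSA(x)\in\bigOmega{x/\ln^{k+2}x}$, which is a different and weaker use than the paper's (for the lower bound one merely needs the piecewise decomposition to have few pieces, which the paper's definition tacitly assumes), whereas the paper genuinely needs the hypothesis to control $\#S_{x}$. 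Your explicit wrapper resolving the ordered-pair dependence of $\mathrm{succ}$ on the ambient set is in fact more careful than the paper, which silently passes between ambient sets both in its cell decomposition and in $\mathrm{succ}_{F}(S_{x})\geq\mathrm{succ}_{F}(\RSASET_{x})\RSA(x)/\#S_{x}$. One small slip: your claim $\ln x'=\ln x+O(1)$ is guaranteed only when $\ln r\in O(1)$, since the selected cell may sit near the bottom of the tolerance band; but $x'\geq 2^{m_{0}+n_{0}-2}\geq x^{2c_{1}}/4$ still yields $\ln x'\in\Theta(\ln x)$ (using that $c_{1}$ is bounded away from zero), which is all your final comparison against the threshold $\ln^{-(s+k+3)}x'$ requires.
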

%%% \begin{shortversion}
%%%   In this extended abstract we just give a rough sketch of the
%%%   proof.  To ease later comparison we have also in this section
%%%   retained the numbering of the extended version
%%%   \ifanonymous\else\cite{loenus11a}\fi.  Take the notion
%%%   $\mathcal{A}$ and cover it by images of $\mu_{\mathbb{P}^{(m)}
%%%     \times \mathbb{P}^{(n)}}$ for various~$m, n$.  Now if there is
%%%   an attacker with non-negligible advantage on the notion
%%%   $\mathcal{A}$ then there are arbitrarily large, polynomially
%%%   related $m, n$ such that this attacker has also non-negligible
%%%   advantage on the image of~$\mu_{\mathbb{P}^{(m)} \times
%%%     \mathbb{P}^{(n)}}$ for these pairs $m,n$, provided the area of
%%%   $\mathcal{A}$ is large enough.  This contradicts
%%%   \ref{ass:factoring-pairs}.
%%% \end{shortversion}
%%% \begin{fullversion}
\begin{fullversion}
  Actually, under the given conditions \ref{ass:factoring-pairs} can
  be weakened: we only need that integers from the family of
  \emph{linearly} related prime pairs are hard to factor.  There is a
  tradeoff between the strength of the needed assumption on factoring
  and the assumption on $c_{1}$.  If we relax the restriction on $c_1$
  in the statement of the theorem to the requirement that $c_1^2 ~ln
  x$ tends to infinity with growing $x$, we need that integers from
  the family of \emph{quadratically} related prime pairs are hard to
  factor.
\end{fullversion}%
\begin{proof}
  Assume that we have an algorithm $\attackeralgo$ that factors
  integers generated uniformly from the notion $\RSASET$.  Our goal is
  to prove that this algorithm also factors certain polynomially
  related prime pairs successfully.  In other words: its existence
  contradicts the assumption that factoring in the form of
  \ref{ass:factoring-pairs} is difficult.
  
  By assumption, there is an exponent $\successwitness$ so that for
  any $x_0$ there is $x > x_0$ such that the assumed algorithm
  $\attackeralgo$ has success probability
  $~succ_{\attackeralgo}(\RSASET_{x}) \geq ~ln^{-\successwitness} x$
  on inputs from $\RSASET_{x}$.  We are going to prove that for each
  such $x$ there exists a pair $(m_{0},n_{0})$, the entries both from
  the interval $[c_{1} ~ln x - ~ln 2, c_{2} ~ln x + ~ln 2]$, such that
  $\attackeralgo$ executed with an input from $~image
  \mu_{\mathbb{P}^{m_{0}},\mathbb{P}^{n_{0}}}$ still has success
  probability at least $~ln^{-(\successwitness+\arealargenesswitness)}
  x$.  By the interval restriction, $m_{0}$ and $n_{0}$ are
  polynomially (even linearly) related, namely $m_{0} < \frac{2
    c_{2}}{c_{1}} n_{0}$ and $n_{0} < \frac{2 c_{2}}{c_{1}} m_{0}$ for
  large $x$.  By the assumption on $c_1$ the fraction $\frac{2
    c_{2}}{c_{1}}$ is bounded independent of $x$.  So that contradicts
  \ref{ass:factoring-pairs}.

  First, we cover the set $\RSASET_{x}$ with small rectangles.  Let
  $S_{m,n} := \PR^{(m)} \times \PR^{(n)}$ and $ I_{x} := \Set{(m,n)
    \in \N^{2}; S_{m,n} \cap \RSASET_{x} \neq \emptyset} $ then
  \begin{align}
    \RSASET_{x} \cap \PR^{2} &\subseteq \biguplus_{(m,n) \in I_{x}}
    S_{m,n} =: S_{x}.
  \end{align}
  %%% Since $\RSASET$ is $[c_{1},c_{2}]$-balanced we can cover
  %%% $\RSASET_{x}$ with such squares.

  Next we give an upper bound on the number $\#S_{x}$ of prime pairs
  in the set $S_{x}$ in terms of the number $\RSA(x)$ of prime pairs
  in the original notion: First, since each rectangle $S_{m,n}$
  extends by a factor $2$ along each axis we overshoot by at most that
  factor in each direction, that is, we have for $c_1' = c_1 -
  (1+2c_1) \frac{~ln 2}{~ln x}$ and all $x \in \R_{>1}$
  \begin{align*}
    S_{x} &\subset \RSAMAXSET[16r, c_{1}']_{4x} = \Set{(y,z) \in
      \R^{2}; y,z \geq \frac{1}{2} x^{c_{1}} \Land \frac{x}{4r} < y z
      \leq 4x }.
  \end{align*}
  Provided $x$ is large enough we can guarantee by
  \begin{shortversion}[\ref{res:order-rsamax} that]
    Theorem 5.2 from the extended version (similar to
    \ref{res:psalm23}) that
  \end{shortversion}
  \begin{align*}
    \#S_{x} &\leq \RSAMAX[16r, c_{1}']( 4 x ) \leq \frac{8
      x}{c_{1}'^{2} ~ln x}.
  \end{align*}
  On the other hand side we apply \ref{res:psalm23} for the notion
  $\RSASET_{x}$ and use that $\RSASET_{x}$ is large by assumption.
  Let $c = ~max\left( 2c_{2}-1, 1-2c_{1} \right)$.  Then we obtain for
  large $x$ with some $e_{\RSASET}(x) \in \bigO{c_1^{-1}
    x^{\frac{3+c}{4}}}$.
  \begin{align*}
    \RSA(x) &\geq \frac{~area(\RSASET_{x})}{c_{2}^{2} ~ln^{2} x} -
    e_{\RSASET}(x) \geq \frac{x}{2 c_{2}^{2}
      ~ln^{\arealargenesswitness+2} x}.
  \end{align*}
  Together we obtain
  \begin{align}
    \label{res:cover-ratio}
    \frac{\RSA(x)}{\#S_{x}} &\geq \frac{c_{1}'^{2}}{16 c_{2}^{2}
      ~ln^{\arealargenesswitness+1} x} \geq
    ~ln^{-(\arealargenesswitness+2)} x
  \end{align}
  
  By assumption we have $~succ_{\attackeralgo}(\RSASET_{x}) \geq
  ~ln^{-\successwitness} x$ for infinitely many values $x$. Thus
  $\attackeralgo$ on an input from $S_{x}$ still has large success
  even if we ignore that $\attackeralgo$ might be successful for
  elements on $S_{x} \setminus \RSASET_{x}$,
  $$
  ~succ_{\attackeralgo}(S_{x}) \geq ~succ_{\attackeralgo}(\RSASET_{x})
  \frac{\#\RSASET(x)}{\#S_{x}} \geq
  ~ln^{-(\arealargenesswitness+\successwitness+2)} x.
  $$
  Finally choose $(m_{0},n_{0}) \in I_{x}$ for which the success of
  $\attackeralgo$ on $S_{m_{0},n_{0}}$ is maximal%%% , formally,
  %%% $~succ_{\attackeralgo}(S_{m_{0}, n_{0}}) = \mathop{~max} \Set{
  %%%   ~succ_{\attackeralgo}(S_{m,n}) ; (m,n) \in I_{x} }$
  .  Then $~succ_{\attackeralgo}(S_{m_{0},n_{0}}) \geq
  ~succ_{\attackeralgo}(S_{x}) $.  Combining with the previous we
  obtain that for infinitely many $x$ there is a pair $(m_{0},n_{0})$
  where the success $~succ_{\attackeralgo}( S_{m_{0},n_{0}})$ of
  $\attackeralgo$ on inputs from $S_{m_{0},n_{0}}$ is still larger
  than inverse polynomial: $~succ_{\attackeralgo}( S_{m_{0},n_{0}})
  \geq ~ln^{-(\arealargenesswitness+\successwitness+2)} x$.

  For these infinitely many pairs $(m_{0},n_{0})$ the success
  probability of the algorithm $\attackeralgo$ on $S_{m_{0},n_{0}}$ is
  at least $~ln^{-(\arealargenesswitness+\successwitness+2)} x$
  contradicting the hypothesis.
\end{proof}
%%% \end{fullversion}

All the specific notions that we have found in the literature fulfill
the criterion of \ref{res:RSAalgo-difficult}.  Thus if factoring is
difficult in the stated sense then each of them is invulnerable to
factoring attacks. Note that the above reduction still works if the
primes $p,q$ are due to the side condition $~gcd((p-1)(q-1), e) = 1$
for a fixed integer $e$ (see \ref{res:coprimalityE}). We suspect that 
this is also the case when one employes safe primes.
%%% \leavevmode\mtodo{%
%%%   DL2012/08/25: Folgendes nochmal lesen und kommentieren.\newline
%%%   MN2012/08/30: Checking.}%
%%% One should note that \ref{res:RSAalgo-difficult} holds for a generator
%%% that produces integers from the considered notion \emph{uniformly} at
%%% random. It is, however, also possible to show a similar result for the
%%% output of generators $G$ for a sufficiently large notion $\RSASET$
%%% with slightly non-uniform output distribution $\rho$.
%%% %

%%% \leavevmode\mtodo{MN2012/08/30: Why is this in the complexity section?
%%%   Shouldn't this be elsewhere?! Ok, no, it's correct here.}%
%%% %%%
%%% \cite{mih01} showed that under reasonable assumptions that \todo{this}
%%% is equivalent to a suitable bound on the output entropy of $G$. This
%%% in turn can be used to also handle more complicated restrictions on
%%% the outputs of the generator (such as strong primes).
%
%\leavevmode\mtodo{MN2012/08/30: Alternative proposal. DL: ok}%
%%%
\cite{mih01} shows a theorem which is in some respect more general
than our considerations and seems to imply our
\ref{res:RSAalgo-difficult}.  To that end one has to show that one of
$\RSASET_{x}$ and $\PR^{(m)}\times \PR^{(n)}$, with suitably chosen
$(m,n)$ depending on $x$, is `polynomially dense' in the other.  The
result would be more general since also the used distribution on
$\RSASET_{x}$, rather than being uniform, is allowed to be
`polynomially bounded'.  
%%% However, \citeauthor{mih01}'s statements seem not to catch the
%%% case that the attacker may have a factoring algorithm for only
%%% infinitely many rather than almost all sizes.
Our proof is of a different nature and thus may well be of independent
interest.  Also it may be adapted to polynomially bounded
distributions on $\RSASET$.

%%% Also the decisional variant of the factorization problem has its
%%% analogue in notions for RSA integers: We know that the language
%%% $$
%%% L := \Set{ (n,k) \in \Z^{2}; \text{$n$ has a primefactor smaller
%%%     than $k$}}
%%% $$
%%% is in the complexity class $~NP \cap ~coNP$. We show now that the
%%% membership test for any sufficiently large notion $\RSASET$ is
%%% polynomial time equivalent to this problem:
%%% \begin{theorem}
%%%   \label{res:RSAalgo-decissionalDifficult}
%%%   Let $\RSASET$ be a piecewise monotone $[c_{1}, c_{2}]$-balanced
%%%   notion for RSA integers of tolerance~$r$.  ...
%%% \end{theorem}
%%% \begin{fullversion}
%%%   Differently formulated the lemma implies
%%%   \begin{corollary}
%%%     If the maps
%%%     $$
%%%     \map[~mult]{\PR^{(m)} \times \PR^{(n)}}{\Z}{(y,z)}{y \cdot z}
%%%     $$
%%%     for polynomially related $m,n$ are (strongly) one-way, so is
%%%     the map
%%%     $$
%%%     \map[~mult_{\RSASET}]{\RSASET_{x}}{\Z}{(p,q)}{p \cdot q},
%%%     $$
%%%     where $\RSASET$ is a $[c_{1},c_{2}]$-balanced notion of RSA
%%%     integers with tolerance~$r$ with $~area \RSASET_{x} \in
%%%     ~area(\RSASET_{x}) \in \frac{x}{~ln^{O(1)}(x)}$.
%%%   \end{corollary}
%%% \end{fullversion}

%%% \psset{unit=2.5mm}

\section{Impact on standards and implementations}
\label{sec:concImplem}
In order to get an understanding of the common implementations, it is
necessary to consult the main standard on RSA integers, namely the
standard PKCS\#1 \citep{jonkal03}. However, one cannot find \emph{any}
requirements on the shape of RSA integers there. Interestingly, they
even allow more than two factors for an RSA modulus. Also the standard
ISO 18033-2 \citep{iso180332} does not give any details besides the
fact that it requires the RSA integer to be a product of two different
primes of similar length.
\begin{fullversion}
  A more precise standard is set by the German Bundesnetzagentur
  \citep{woh08}. They do not give a specific algorithm, but at least
  require that the prime factors are not too small and not too close
  to each other.  We will now analyze several standards which give a
  concrete algorithm for generating an RSA integer. In particular, we
  consider the standard of the RSA foundation \citep{rsa00}, the IEEE
  standard 1363 \citep{araarn00}, the NIST standard FIPS 186-3
  \citep{fips1863}\ifshortversion and\else, \fi the standard ANSI
  X9.44 \citep{ansi944}
  \begin{fullversion}[.] and the standard resulting from the European
    NESSIE project \citep{prebir03}.
  \end{fullversion}
\end{fullversion}

%\pagebreak
\ifshortversion\pagebreak\fi
\subsection{RSA-OAEP}
The \cite{rsa00} describe the following variant:
%%% 
\iffullversion\pagebreak[3]\else\pagebreak[3]\fi
\hanghere[-\baselineskip]{%
  \drawnotion[]{6}{2}{%
    \RSAFIXdraw{0} }%
}%
\begin{algorithm}{algoRSAKG}[Generating an RSA number for RSA-OAEP and
  variants]
  \advance\linewidth-13\psxunit
\item A number of bits $k$, the public exponent $e$.
\item A number $n = p q$.
\item Pick $p$ from $\left[\floor{2^{(k-1)/2}}+1,
    \ceil{2^{k/2}}-1\right] \cap \mathbb{P}$ such that\\$~gcd(e,p-1)
  =1$.
\item Pick $q$ from $\left[\floor{2^{(k-1)/2}}+1,
    \ceil{2^{k/2}}-1\right] \cap \mathbb{P}$ such that\\$~gcd(e,q-1)
  =1$.
\item \RETURN $p q$.
\end{algorithm}
This will produce a number from the interval $[ 2^{k-1}+1, 2^{k}-1 ]$
and no cutting off.  The output entropy is maximal.  So this
corresponds to the notion $\RSAFIXSET[2,0]$ generated by
\ref{algoRSAFinal}. The standard requires an expected number of $k ~ln
2$ primality tests if the $~gcd$ condition is checked first. Otherwise
the expected number of primality tests increases to
$\frac{\phi(e)}{\phi_1(e)} \cdot k ~ln 2$, see \ref{def:phi1}.  We
will in the following always mean by the above notation that the
second condition is checked first and afterwards the number is tested
for primality.  For the security \ref{res:RSAalgo-difficult} applies.
%%% As in many standards, the security parameter $k$ describes the
%%% (rough) length of the output.  The European NESSIE project gives
%%% in its security report (see \citealt{prebir03}) a very similar
%%% algorithm. There, however, $k$ denotes the size of the prime
%%% factors and not the size of the result. For a detailed discussion
%%% refer to the extended version.

\subsection{IEEE}
IEEE standard 1363-2000, Annex A.16.11 \citep{araarn00} introduces our
algorithmic proposal:
%%% 
%\iffullversion\condbreak{5cm}\fi
\iffullversion\pagebreak\fi%%% TODO
\hanghere{%
  \newpsstyle{notionsplit}{linestyle=none}%
  \drawnotion[]{6}{2}{%
    %%% {\newpsstyle{notionarea}{fillcolor=yellow}\RSAALGdraw}%
    \RSAALGVARdraw{0.5}}}%
\begin{algorithm}{algoIEEEP1363-A16-12}[Generating an RSA number, IEEE
  1363-2000]
  \advance\linewidth-13\psxunit
\item A number of bits $k$, the odd public exponent $e$.
\item A number $n = p q$.
\item Pick $p$ from $\left[ 2^{\quo{k-1}{2}},
    2^{\quo{k+1}{2}}-1\right] \cap \mathbb{P}$ such that
  \\$~gcd(e,p-1) =1$.
\item Pick $q$ from $\left[\floor{\frac{2^{k-1}}{p}+1},
    \floor{\frac{2^{k}}{p}}\right] \cap \mathbb{P}$ such that
  \\$~gcd(e,q-1) =1$.
\item \RETURN $p q$.
\end{algorithm}
Since the resulting integers are in the interval $[2^{k-1}, 2^{k}-1]$
this standard follows $\RSAALGVARSET[2, 1/2]$ generated by a corrupted
variant of \ref{algoRSAFinal} using an expected number of $k \ln 2$
primality tests like the RSA-OAEP standard. The notion it implements
is neither symmetric nor antisymmetric. The selection of the integers
is \emph{not} done in a uniform way, since the number of possible $q$
for the largest possible $p$ is roughly half of the corresponding
number for the smallest possible $p$. Since the distribution of the
outputs is close to uniform, we can use the techniques from
\ref{sec:entropy} to estimate the output entropy to find that the
entropy-loss is less than 0.69 bit. The (numerically approximated)
values in \ref{tab:standards} gave an actual entropy-loss of
approximately $0.03$ bit.

\subsection{NIST}
We will now analyze the standard FIPS 186-3 \citep{fips1863}. In
Appendix B.3.1 of the standard one finds the following algorithm:
%%% 
\hanghere{%
  \drawnotion[]{6}{2}{%
    \RSAFIXdraw{0} }%
}%
\begin{algorithm}{algoFIPS186-3}[Generating an RSA number, FIPS186-3]
  \advance\linewidth-13\psxunit
\item A number of bits $k$, a number of bits $l < k$, the odd public
  exponent $2^{16} < e < 2^{256}$.
\item A number $n = p q$.
\item Pick $p$ from $\left[ \sqrt{2}\, 2^{k/2-1}, 2^{k/2}-1\right]
  \cap \mathbb{P}$ such that\\$~gcd(e,p-1) =1$ and $p \pm 1$ has a
  prime factor with at least $l$ bits.
\item Pick $q$ from $\left[ \sqrt{2}\, 2^{k/2-1}, 2^{k/2}-1\right]
  \cap \mathbb{P}$ such that\\$~gcd(e,p-1) =1$ and $q \pm 1$ has a
  prime factor with at least $l$ bits and $|p-q| > 2^{k/2-100}$.
\item \RETURN $p q$.
\end{algorithm}
In the standard it is required that the primes $p$ and $q$ shall be
either provably prime or at least probable primes.  The mentioned
large (at least $l$-bit) prime factors of $p \pm 1$ and $q \pm 1$ have
to be provable primes. We observe that also in this standard a variant
of the notion $\RSAFIXSET[2,0]$ generated by \ref{algoRSAFinal} is
used.  The output entropy is thus maximal. However, we do not have any
restriction on the parity of $k$, such that the value $k/2$ is not
necessarily an integer. Another interesting point is the restriction
on the prime factors of $p\pm 1 ,q \pm 1$. Our notions cannot directly
handle such requirements, but this may possibly be
achieved by appropriately modifying the prime number densities in the
proof of \ref{res:psalm23}.

% TODO: Better?
The standard requires an expected number of slightly more than $k ~ln
2$ primality tests. It is thus slightly less efficient than the
RSA-OAEP standard. For the security the remarks from the end of
\ref{sec:algoConsid} apply.

\subsection{ANSI}
The ANSI X9.44 standard \citep{ansi944}, formerly part of ANSI X9.31,
requires strong primes for an RSA modulus. Unfortunately, we could not
access ANSI X9.44 directly and are therefore referring to ANSI
X9.31-1998.  Section 4.1.2 of the standard requires that
\begin{itemize}
\item $p-1$, $p+1$, $q-1$, $q+1$ each should have prime factors $p_1$,
  $p_2$, $q_1$, $q_2$ that are randomly selected primes in the range
  $2^{100}$ to $2^{120}$,
\item $p$ and $q$ shall be the first primes that meet the above, found
  in an appropriate interval, starting from a random point,
\item $p$ and $q$ shall be different in at least one of their first
  100 bits.
\end{itemize}
The additional restrictions are similar to the ones required by NIST.
This procedure will have an output entropy that is close to maximal
(see \ref{sec:entropy}).
%%% Like in the IEEE standard this does not impose practical security
%%% risks as the entropy loss is extremely low.  To generate the
%%% primes p and q, generate a random number of bit length b/2 where b
%%% is the required bit length of n; set the low bit (this ensures the
%%% number is odd) and set the two highest bits (this ensures that the
%%% high bit of n is also set); check if prime (use the Rabin-Miller
%%% test); if not, increment the number by two and check again until
%%% you find a prime. This is p. Repeat for q starting with a random
%%% integer of length b-b/2. If p<q, swop p and q (this only matters
%%% if you intend using the CRT form of the private key). In the
%%% extremely unlikely event that p = q, check your random number
%%% generator. Alternatively, instead of incrementing by 2, just
%%% generate another random number each time.
\begin{fullversion}[\refstepcounter{equation}]
  \subsection{NESSIE}
  The European NESSIE project gives in its security report
  \citep{prebir03} a very similar algorithm:
  %%% 
  \hanghere{%
    \drawnotion[]{6}{4}{%
      \RSAFIXdraw{0} }%
  }%
  \begin{algorithm}{algoNESSIE}[Generating an RSA number, NESSIE
    standard]
    \advance\linewidth-13\psxunit
  \item A number of bits $l$, the odd public exponent $e$.
  \item A number $n = p q$.
  \item Pick $p$ from $\left[ 2^{l-1}, 2^{l}-1\right] \cap \mathbb{P}$
    such that\\$~gcd(e,p-1) =1$.
  \item Pick $q$ from $\left[ 2^{l-1}, 2^{l}-1\right]\cap \mathbb{P}$
    such that\\$~gcd(e,q-1) =1$.
  \item \RETURN $p q$.
  \end{algorithm}
  The resulting integer $n$ is from the interval $[2^{2 l -2},
  2^{2 l}-1]$ and thus corresponds to the fixed-bound notion
  $\RSAFIXSET[4,0]$ generated by \ref{algoRSAFinal}.  The output
  entropy is thus maximal. Note the difference to the standard of the
  RSA foundation: Besides the fact, that in the standard of the RSA
  laboratories some sort of rounding is done, the security parameter
  $l$ is treated differently: While for the RSA foundation the
  security parameter describes the (rough) length of the output, in
  the NESSIE proposal it denotes the size of the two prime factors.
  For comparison let $k = 2 l$.
  The standards requires an expected number of $2k ~ln 2$ primality
  tests. It is thus as efficient as the RSA-OAEP standard. For the
  security \ref{res:RSAalgo-difficult} applies.
\end{fullversion}
%We now consider some common open source implementations for RSA integer generation.

\subsection{OpenSSL}
We now turn to implementations: For \texttt{OpenSSL}
\citep{openssl09}, we refer to the file \url{rsa_gen.c}.
%\url{openssl-0.9.8j/crypto/rsa/rsa_{g}en.c}.
Note that in the configuration the routine used for RSA integer
generation can be changed, while the algorithm given below is the
standard one.  \texttt{OpenSSH} \citep{openssh09} uses the same
library. Refer to the file \url{rsa.c}.
%\url{openssh-2.1.1p4/rsa.c}.
We have the following algorithm:
%%% 
\iffullversion\pagebreak[3]\else\pagebreak[3]\fi%
\hanghere{%
  \drawnotion[]{6.6}{4}{%
    \pspolygon[style=notionarea]%
    (-0.5,0.5)(0.5,0.5)(0.5,1.5)(-0.5,1.5)%
  }%
}%
\begin{algorithm}{algoRSASSL}[Generating an RSA number in
  \texttt{OpenSSL}]
  \advance\linewidth-13\psxunit
\item A number of bits $k$.
\item A number $n = p q$.
\item Pick $p$ from $\left[2^{\quo{k-1}{2}}, 2^{\quo{k+1}{2}}-1\right]
  \cap \mathbb{P}$.
\item Pick $q$ from $\left[2^{\quo{k-3}{2}}, 2^{\quo{k-1}{2}}-1\right]
  \cap\mathbb{P}$.
\item \RETURN $p q$.
\end{algorithm}
This is nothing but a rejection-sampling method with no rejections of a notion similar to
the fixed-bound notion $\RSAFIXSET[4,0]$ generated by
\ref{algoRSALasVegas}.  The output entropy is thus maximal.  The
result the algorithm produces is always in $[2^{k-2}, 2^{k}-1]$.  It
is clear that this notion is antisymmetric and the factors are on
average a factor $2$ apart of each other. The implementation runs in
an expected number of $k ~ln 2$ primality tests. The public exponent
$e$ is afterwards selected such that~$~gcd((p-1)(q-1), e) = 1$. It is
thus slightly more efficient than the RSA-OAEP standard. For the
security \ref{res:RSAalgo-difficult} applies.

\begin{fullversion}[\refstepcounter{equation}]
  \subsection{Openswan}
  In the open source implementation \texttt{Openswan} of the IPsec
  protocol \citep{openswan09} one finds a rejection-sampling method
  that is actually implementing the notion $\RSAFIXSET[4,0]$ generated
  by a variant of \ref{algoRSALasVegas}. We refer to the function
  \texttt{rsasigkey} in the file
  %%% \url{openswan-2.6.20/programs/rsasigkey/rsasigkey.c}:
  \url{rsasigkey.c}:
  %%% 
  \hanghere{%
    \drawnotion[]{6}{4}{%
      \RSAFIXdraw{0} }%
  }%
  \begin{algorithm}{algoRSAOpenSwan}[Generating an RSA number in
    \texttt{Openswan}]
    \advance\linewidth-13\psxunit
  \item A number of bits $k$.
  \item A number $n = p q$.
  \item Pick $p$ from $\left[2^{\quo{k-2}{2}}, 2^{\quo{k}{2}}-1\right]
    \cap \mathbb{P}$.
  \item Pick $q$ from $\left[2^{\quo{k-2}{2}}, 2^{\quo{k}{2}}-1\right]
    \cap \mathbb{P}$.
  \item \RETURN $p q$.
  \end{algorithm}
  Note that here the notion \emph{is} actually symmetric. However
  still the uniformly at random selected number $p q$ will not always
  have the same length.  The implementation runs in an expected number
  of $k ~ln 2$ primality tests and output entropy is maximal.  Again
  the public exponent $e$ is afterwards selected such that
  $~gcd((p-1)(q-1), e) = 1$. It is thus as efficient as the RSA-OAEP
  standard. For the security \ref{res:RSAalgo-difficult} applies.
\end{fullversion}

\subsection{GnuPG}
Also \texttt{GnuPG} \citep{gnupg09} uses rejection-sampling of the
fixed-bound notion $\RSAFIXSET[2,1]$ generated by a variant of
\ref{algoRSALasVegas}, implying that the entropy of its output
distribution is maximal.
%%% 
\iffullversion\pagebreak[3]\else\pagebreak[3]\fi%
\hanghere{%
  \drawnotion[]{6}{2}{%
    \newpsstyle{notionsplit}{linestyle=none} \rput(-1,-1){%
      \psset{unit=2\psunit}%
      \newpsstyle{notionarea}{style=bannerarea} \RSAFIXdraw{0}%
    }%
    \RSAFIXdraw{1} }%
}%
\begin{algorithm}{algoRSAGnuPG}[Generating an RSA number in
  \texttt{GnuPG}]
  \advance\linewidth-13\psxunit
\item A number of bits $k$.
\item A number $n = p q$.
\item \begin{blockuntil}{$~len(p q) = 2\ceil{k/2}$}
  \item Pick $p$ from $\left[2^{\quo{k-1}{2}},
      2^{\quo{k+1}{2}}-1\right] \cap \mathbb{P}$.
  \item Pick $q$ from $\left[2^{\quo{k-1}{2}},
      2^{\quo{k+1}{2}}-1\right] \cap \mathbb{P}$.
  \end{blockuntil}
\item \RETURN $p q$.
\end{algorithm}
The hatched region in the picture above shows the possible outcomes
that are discarded. We refer here to the file \url{rsa.c}.
%\url{/libgcrypt-1.4.4/cipher/rsa.c}.
The algorithm is given in the function \url{generate_std} and produces
always numbers with either $k$ or $k+1$ bits depending on the parity
of $k$.  Note that the generation procedure indeed first selects
primes before checking the validity of the range. This is of course a
waste of resources, see \ref{sec:genProperly}.

The implementation runs in an expected number of roughly $2.589 \cdot
(k+1) ~ln 2$ primality tests. It is thus less efficient than the RSA
OAEP standards.  Like in the other so far considered implementations,
the public exponent $e$ is afterwards selected such that
$~gcd((p-1)(q-1), e) = 1$.  For the security
\ref{res:RSAalgo-difficult} applies.

\begin{table}[t!]
  \centering
  \def\elossa#1{\raisebox{0pt}{{\footnotesize#1}}}
  \def\eloss#1{\raisebox{3pt}{{\footnotesize#1}}}
  \let\bugfix\relax
  \begin{tabular}{|c|c|r|r|r|c|c|}
    \hline
    Standard & \multirow{2}{*}{Notion} & \multicolumn{3}{|c|}{Entropy
      \elossa{(entropy loss)}} & \multirow{2}{*}{Remarks}\\
    \cline{1-1}
    Implementation & & $k=768$ & $k=1024$ & $k=2048$ & \\
    \hline
    PKCS\#1   & \multirow{3}{*}{Undefined} & \multirow{3}{*}{---} & \multirow{3}{*}{---} & \multirow{3}{*}{---} &  \multirow{3}{*}{---}\\
    ISO 18033-2 &  &  &  &  &\\
    ANSI X9.44  &  &  &  &  &\\
    %%% \hline
    FIPS 186-3 & $\RSAFIXSET[2,0]$ & $\lesssim 747.34$\bugfix & $\lesssim 1002.51$\bugfix & $\lesssim 2024.51$\bugfix   & strong primes\\
    & & \eloss{($\gtrsim 0$ \permil)} & \eloss{($\gtrsim 0$ \permil)} & \eloss{($\gtrsim 0$ \permil)} & \\
    %%% \hline
    RSA-OAEP  &  $\RSAFIXSET[2,0]$ & $747.34$ & 1002.51 & 2024.51 & --- \\
    & & \eloss{($0$ \permil)} & \eloss{($0$ \permil)} & \eloss{($0$ \permil)} & \\
    IEEE 1363-2000~ &  $\RSAALGVARSET[2,\frac12]$ & 749.33 & 1004.50 & 2026.50 & non-uniform \\
    & & \eloss{($0.04$ \permil)} & \eloss{($0.03$ \permil)} & \eloss{($0.01$ \permil)} & \\
    \iffullversion
    NESSIE     & $\RSAFIXSET[4,0]$ & 749.89 & 1005.06 & 2027.06  & ---\\
    & & \eloss{($0$ \permil)} & \eloss{($0$ \permil)} & \eloss{($0$ \permil)} & \\
    %%% \hline
    \fi
    \hline
    \texttt{GNU Crypto} &  $\RSAFIXSET[2\strut,1]$ & 747.89 & 1003.06 & 2025.06 &  non-uniform\\
    & & \eloss{($0.84$ \permil)} & \eloss{($0.62$ \permil)} & \eloss{($0.31$ \permil)} & \\
    %%% \hline
    \texttt{GnuPG}    & $\RSAFIXSET[2,1]$ & 748.52 &  1003.69 & 2025.69  & ---\\
    & & \eloss{($0$ \permil)} & \eloss{($0$ \permil)} & \eloss{($0$ \permil)} & \\
    %%% \hline
    %%% \hline
    \texttt{OpenSSL}    & $\cong \RSAFIXSET[4,0]$ & 749.89 & 1005.06 & 2027.06  & ---\\
    & & \eloss{($0$ \permil)} & \eloss{($0$ \permil)} & \eloss{($0$ \permil)} & \\
    %%% \hline
    \iffullversion
    \texttt{Openswan}   & $\RSAFIXSET[4,0]$ & 749.89 & 1005.06 & 2027.06  & ---\\
    & & \eloss{($0$ \permil)} & \eloss{($0$ \permil)} & \eloss{($0$ \permil)} & \\
    \fi
    \hline
  \end{tabular}
  \vspace{1mm}
  %%% \begin{tabular}{|c|c|c|c|c|c|c|c|}
  %%%   \hline
  %%%   Standard & \multirow{2}{*}{Notion} & \multicolumn{3}{|c|}{Entropy for specific $k$} & \multirow{2}{*}{Efficiency} & \multirow{2}{*}{Remarks}\\
  %%%   \cline{1-1}
  %%%   Implementation & & 768 & 1024 & 2048 & & \\
  %%%   \hline
  %%%   PKCS\#1   & \multirow{3}{*}{Undefined} & \multirow{3}{*}{---} & \multirow{3}{*}{---} & \multirow{3}{*}{---} & \multirow{3}{*}{Undefined}& \multirow{3}{*}{$---$}\\
  %%%   ISO 18033-2 &  &  &  &  & &\\
  %%%   ANSI X9.44  &  &  &  &  & &\\
  %%%    %%%   \hline
  %%%   FIPS 186-3 & $\RSAFIXSET[2,0]$ & $\lesssim 747.34$\ \  & $\lesssim 1002.51$\ \  & $\lesssim 2024.51$\ \  & $\gtrsim k ~ln 2$\ \  & strong primes\\
  %%%    %%%   \hline
  %%%   RSA-OAEP  &  $\RSAFIXSET[2,0]$ & $747.34$ & 1002.51 & 2024.51 & $k ~ln 2$ & --- \\
  %%%   IEEE 1363-2000~ &  $\RSAALGVARSET[2,\frac12]$ & 749.33 & 1004.50 & 2026.50 & $k ~ln 2$ & non-uniform \\
  %%%   \iffullversion
  %%%   NESSIE     & $\RSAFIXSET[4,0]$ & ? & ? & ? & $k ~ln 2$ & $2k$ is size of the result\\ 
  %%%    %%%   \hline
  %%%   \fi
  %%%   \hline
  %%%   \texttt{GNU Crypto} &  $\RSAFIXSET[2,1]$ & 747.89 & 1003.06 & 2025.06 & $k ~ln 2$ & non-uniform\\
  %%%    %%%   \hline
  %%%   \texttt{GnuPG}    & $\RSAFIXSET[2,1]$ & 748.52 &  1003.69 & 2025.69 & $2.59 (k+1) ~ln 2$ & ---\\
  %%%    %%%   \hline
  %%%    %%%   \hline
  %%%   \texttt{OpenSSL}    & $\cong \RSAFIXSET[4,0]$ & 749.89 & 1005.06 & 2027.06 &  $k ~ln 2$ & ---\\
  %%%    %%%   \hline
  %%%   \iffullversion
  %%%   \texttt{Openswan}   & $\RSAFIXSET[4,0]$ & ? & ? & ? & $k ~ln 2$ & ---\\
  %%%   \fi
  %%%   \hline
  %%% \end{tabular}
  \caption{Overview of various standards and implementations.  The
    numbers in parentheses give the entropy loss for each algorithm in
    per mille. As explained in the text, the entropy of the standards
    is sightly smaller than the values given due to the fixed public
    exponent $e$.  FIPS 186-3 has a small entropy loss because of the
    requirement of strong primes.  Generators based on nonuniform
    prime generation suffer extra entropy loss, see page
    \pageref{res:inherited-entropy-loss}.
  }\label{tab:standards}
\end{table}

\subsection{GNU Crypto}
The \texttt{GNU Crypto} library \citep{gnucrypto09} generates RSA
integers the following way.  Refer here to the function
\texttt{generate} in the file \url{RSAKeyPairGenerator.java}.
%\url{gnu-crypto-2.0.1/source/gnu/crypto/key/rsa/RSAKeyPairGenerator.java}
%%% 
\hanghere{%
  \drawnotion[]{6}{2}{%
    %%% \newpsstyle{notionarea}{%
    %%%   linewidth=.4pt,%
    %%%   fillcolor=red,%
    %%%   fillstyle=solid%
    %%% }
    \RSAFIXdraw{1}%
    %%% \psline[linecolor=black,linewidth=1.6pt]{->}(0.8,0.1)(-0.6,0.8)%%% This is 0.1..0.8 of (1,0)(-1,1).
    %%% \psline[linecolor=red,linewidth=0.8pt]{->}(0.8,0.1)(-0.4,0.7)%%% This is 0.1..0.7 of (1,0)(-1,1).

    \psline[linewidth=1.2pt]{->}(0.8,0.1)(-0.6,0.8)%%% This is 0.1..0.8 of (1,0)(-1,1).
  }%
}%
\begin{algorithm}{algoRSAGnuCrypto}[Generating an RSA number in
  \texttt{GNU Crypto}]
  \advance\linewidth-13\psxunit
\item A number of bits $k$.
\item A number $n = p q$.
\item Pick $p$ from $\left[2^{\quo{k-1}{2}}, 2^{\quo{k+1}{2}}-1\right]
  \cap \mathbb{P}$.
\item \begin{blockuntil}{$~len(p q) = k$ and $q \in \mathbb{P}$.}
  \item Pick $q$ from $\left[2^{\quo{k-1}{2}},
      2^{\quo{k+1}{2}}-1\right]$.
  \end{blockuntil}
\item \RETURN $p q$.
\end{algorithm}
The arrow in the picture points to the results that will occur with
higher probability.  Also here the notion $\RSAFIXSET[2,1]$ is used,
but the generated numbers will not be uniformly distributed, since for
a larger $p$ we have much less choices for $q$. Since the distribution
of the outputs is not close to uniform, we could only compute the
entropy for real-world parameter choices numerically (see
\ref{tab:standards}). For all choices the loss was less than 0.63 bit.
%%% The length of the generated integer $p q$ is in this
%%% implementations always $k$.
The implementation is as efficient as the RSA-OAEP standard.

\begin{fullversion}
  The Free Software Foundation provides \texttt{GNU Classpath}, which
  generates RSA integers exactly like the \texttt{GNU Crypto} library,
  i.e. following $\RSAFIXSET[2,1]$. We refer to the source file named
  %%% \url{http://docjar.com/src/api/gnu/java/security/key/rsa/RSAKeyPairGenerator.java}.
  \url{RSAKeyPairGenerator.java}.  As in the other so far considered
  implementations the public exponent $e$ is  randomly
  selected afterwards such that $~gcd((p-1)(q-1), e) = 1$.  Like in the IEEE and
  the ANSI standard this does not impose practical security risks, but
  it does not meet the requirement of uniform selection of the
  generated integers.
\end{fullversion}

\begin{fullversion}
  \subsection{Summary}
  It is striking to observe that not \emph{a single} analyzed
  implementation follows one of the standards described above.  The
  only standards all implementations are compliant to are the
  standards PKCS\#1 and ISO~18033-2, which themselves do not specify
  anything related to the integer generation routine.  We found that
  also the requirements from the algorithm catalog of the German
  Bundesnetzagentur \citep{woh08} are not met in a single considered
  implementation, since it is never checked whether the selected
  primes are too close to each other.  The implementation that almost
  meets the requirements is the implementation of \texttt{OpenSSL}.
  Interestingly there are standards and implementations around that
  generate integers non-uniformly.  Prominent examples are the IEEE
  and the ANSI standards and the implementation of the \texttt{GNU
    Crypto} library.  This does not impose practical security issues,
  but it violates the condition of uniform selection.
\end{fullversion}

\section{Conclusion}
We have seen that there are various definitions for RSA integers,
which result in substantially differing standards. We have shown that
the concrete specification does not essentially affect the
(cryptographic) properties of the generated integers: The entropy of
the output distribution is always almost maximal, generating those
integers can be done efficiently, and the outputs are hard to factor
if factoring in general is hard in a suitable sense.  It remains open
to incorporate strong primes into our model. Also a tight bound for
the entropy of non-uniform selection is missing if the distribution is
not close to uniform.

%%% Inspection of various implementations shows that this is not
%%% always the case, which might open the door for an attacker.  Our
%%% own algorithmic idea, that we found slightly varied in the IEEE
%%% standard \citep{araarn00}, turned out to be non-uniform due to its
%%% asymmetry, which only crawled out of the woodwork after drawing
%%% the picture next to \ref{eq:RSA2SUM}.  It turned out that all
%%% reasonable notions are essentially equivalent in the sense that
%%% the count of integers is almost fixed, and factoring a notion is
%%% hard provided factoring in general is hard.

%%% We have not discussed what happens if the RSA modulus shall be
%%% subject to additional requirements as, for example, only primes
%%% congruent to $1$ modulo 4 should be used, or integers allowing a
%%% chosen public RSA exponent like $e=3$ or $e=17$.  We think that
%%% the
%%% \begin{fullversion}[used ]
%%%   presented
%%% \end{fullversion}%
%%% method can be transferred to situation where the only change is a
%%% modular restriction of the primes by using appropriate variants of
%%% the prime number theorem.  Yet this work is still to be done.

\ifanonymous\else
\section*{Acknowledgements}
This work was funded by the B-IT foundation and the state of
North~Rhine-Westphalia.  \fi

\bibliography{journals,refs,lncs}

\end{document}